\theoremstyle{plain}
  \newtheorem{theorem}{Theorem}[section]
      \newtheorem{assumption}{Assumption}
  \newtheorem{proposition}[theorem]{Proposition}
  \newtheorem{lemma}[theorem]{Lemma}
  \newtheorem{remark}[theorem]{Remark}
\theoremstyle{definition}
  \newtheorem{definition}{Definition}[section]
\theoremstyle{remark}
\numberwithin{equation}{section}
\DeclareMathOperator{\Tr}{Tr}
 \DeclareMathOperator{\supp}{Supp}
\renewcommand{\Re}{\mathrm{Re}\, }
\renewcommand{\Im}{\mathrm{Im}\,}
\newcommand\otimesal{\mathop{\hbox{\raise 1.6 ex
  \hbox{$\scriptscriptstyle\mathrm{al}$}
\kern -0.92 em \hbox{$\otimes$}}}}
\newcommand\oplusal{\mathop{\hbox{\raise 1.6 ex
  \hbox{$\scriptscriptstyle\mathrm{al}$}
\kern -0.92 em \hbox{$\oplus$}}}}
\newcommand\Gammal{\hbox{\raise 1.7 ex
\hbox{$\scriptscriptstyle\mathrm{al}$}\kern -0.50 em $\Gamma$}}
\renewcommand\i{\mathrm{i}}
\let\al=\alpha \let\be=\beta  \let\ep=\epsilon
\let\ve=\varepsilon  \let\ga=\gamma 
\let\ka=\kappa \let\la=\lambda \let\om=\omega 
\let\si=\sigma
 \let\Ga=\Gamma  \let\Om=\Omega
  \let\Si=\Sigma
\newcommand{\caA}{{\mathcal A}}
\newcommand{\caB}{{\mathcal B}}
\newcommand{\caC}{{\mathcal C}}
\newcommand{\caD}{{\mathcal D}}
\newcommand{\caE}{{\mathcal E}}
\newcommand{\caG}{{\mathcal G}}
\newcommand{\caJ}{{\mathcal J}}
\newcommand{\caL}{{\mathcal L}}
\newcommand{\caM}{{\mathcal M}}
\newcommand{\caO}{{\mathcal O}}
\newcommand{\caR}{{\mathcal R}}
\newcommand{\caS}{{\mathcal S}}
\newcommand{\caT}{{\mathcal T}}
\newcommand{\caV}{{\mathcal V}}
\newcommand{\caW}{{\mathcal W}}
\newcommand{\scrB}{{\mathscr B}}
\newcommand{\scrD}{{\mathscr D}}
\newcommand{\scrE}{{\mathscr E}}
\newcommand{\scrG}{{\mathscr G}}
\newcommand{\scrH}{{\mathscr H}}
\newcommand{\scrR}{{\mathscr R}}
\newcommand{\scrS}{{\mathscr S}}
\newcommand{\scrT}{{\mathscr T}}
\newcommand{\bbC}{{\mathbb C}}
\newcommand{\bbE}{{\mathbb E}}
\newcommand{\bbN}{{\mathbb N}}
\newcommand{\bbQ}{{\mathbb Q}}
\newcommand{\bbR}{{\mathbb R}}
\newcommand{\bbS}{{\mathbb S}}
\newcommand{\opunit}{\text{1}\kern-0.22em\text{l}}
\newcommand{\frh}{{\mathfrak h}}
\newcommand{\frt}{{\mathfrak t}}
\newcommand{\frB}{{\mathfrak B}}
\newcommand{\frG}{{\mathfrak G}}
\newcommand{\frS}{{\mathfrak S}}
\newcommand{\frW}{{\mathfrak W}}
\newcommand{\bsI}{{\boldsymbol I}}
\newcommand{\e}{{\mathrm e}}
\renewcommand{\d}{{\mathrm d}}
\newcommand{\sys}{{\mathrm S}}
\newcommand{\res}{{\mathrm F}}
\renewcommand{\sp}{\sigma}
\newcommand{\Ran}{\mathrm{Ran}}
\newcommand{\Dom}{\mathrm{Dom}}
\newcommand{\spann}{\mathrm{Span}}
\newcommand{\beq}{ \begin{equation} }
\newcommand{\eeq}{ \end{equation} }
\newcommand{\bet}{ \begin{theorem} }
\newcommand{\eet}{ \end{theorem} }
\newcommand{\initial}{0}
\newcommand{\final}{n+1}
\newcommand{\lone}{\mathbbm{1}}
\newcommand{\baq}{\begin{eqnarray}}
\newcommand{\eaq}{\end{eqnarray}}
\renewcommand{\supp}{\mathrm{Supp}}
 \newcounter{smallarabics}
\newenvironment{arabicenumerate}
{\begin{list}{{\normalfont\textrm{\arabic{smallarabics})}}}
  {\usecounter{smallarabics}\setlength{\itemindent}{0cm}
  \setlength{\leftmargin}{5ex}\setlength{\labelwidth}{4ex}
  \setlength{\topsep}{0.75\parsep}\setlength{\partopsep}{0ex}
   \setlength{\itemsep}{0ex}}}
{\end{list}}
\newcounter{smallroman}
\newcommand{\ben}{\begin{arabicenumerate}}
\newcommand{\een}{\end{arabicenumerate}}
\newcommand{\norm}{ \|}
\newcommand{\str}{ |}
\newcommand{\initialresfinite}{P_{\Om}}
\newcommand{\realinitial}{\ltimes}
\newcommand{\realfinal}{\rtimes}
\newcommand{\adjoint}{\mathrm{ad}}
\newcommand{\uv}{\underline{v}}
\newcommand{\uu}{\underline{u}}
\newcommand{\uw}{\underline{w}}
\newcommand{\dist}{d}
\newcommand{\distance}{\mathrm{dist}}
\newcommand{\weird}{\diamond}
\newcommand{\normw}{\norm_{\weird}}
\newcommand{\inter}{\mathrm{I}}
\newcommand{\gap}{\mathrm{g}}
\newcommand{\opprod}{\otimes}
\newcommand{\indicator}{1}
\newcommand{\commpictone}{\scriptsize{\text{Pairs within $T$-blocks resummed}}}
\newcommand{\commpicttwo}{\scriptsize{\text{Pairs determining $G^c_A$ for given $A$ resummed}}}
\newcommand{\oneone}{{\scriptscriptstyle u_1}}
\newcommand{\onetwo}{{\scriptscriptstyle v_1}}
\newcommand{\twoone}{{\scriptscriptstyle u_2}}
\newcommand{\twotwo}{{\scriptscriptstyle v_2}}
\newcommand{\threeone}{{\scriptscriptstyle u_3}}
\newcommand{\threetwo}{{\scriptscriptstyle v_3}}
\begin{document}

\begin{center}
\large{ \bf{Approach to ground state and time-independent  photon bound  for massless spin-boson models}}
 \\
\vspace{15pt} \normalsize

{\bf   W.  De Roeck\footnote{
email: {\tt
 w.deroeck@thphys.uni-heidelberg.de}}  }\\
\vspace{10pt} 
{\it   Institut f\"ur Theoretische Physik  \\ Universit\"at Heidelberg \\
Philosophenweg 16,  \\
D69120 Heidelberg,  Germany 
} \\

\vspace{15pt}

{\bf   A. Kupiainen\footnote{
email: {\tt    antti.kupiainen@helsinki.fi  }}  }\\
\vspace{10pt} 
{\it   Department of Mathematics \\
 University of Helsinki \\ 
P.O. Box 68, FIN-00014,  Finland 
} \\

\vspace{15pt}

\end{center}

\vspace{20pt} \footnotesize \noindent {\bf Abstract:}  It is widely believed that an atom interacting with the electromagnetic field (with total initial energy well-below the ionization threshold) relaxes to its ground state while its excess energy is emitted as radiation.
Hence, for large times, the  state of the atom+field system should consist of  the atom in its ground state, and a few free photons that travel off to spatial infinity.   Mathematically, this picture is captured by the notion of \emph{asymptotic completeness}. Despite some recent progress on the spectral theory of such systems, a proof of relaxation to the ground state and asymptotic completeness  was/is still missing, except in some special cases (massive photons, small perturbations of harmonic potentials).  In this paper, we partially fill this gap by proving  relaxation to an invariant state in the case where the atom is modelled by a finite-level system.  If the coupling to the field is sufficiently infrared-regular so that the coupled system admits a ground state, then this invariant state necessarily corresponds to the ground state.  Assuming slightly more infrared regularity, we show that the number of emitted photons remains bounded in time.  We hope that these results bring a proof of asymptotic completeness within reach.

  \section{Model and result}

\subsection{Introduction}\label{secintroduction}

This paper fits into a broader project of rigorously controlling interacting Hamiltonian systems that exhibit irreversible behavior.   From a more concrete point of view, the problem treated here is inspired by nonrelativistic QED which,  in the last two decades, has proven to be a fruitful testing ground for mathematical techniques in quantum field theory (for an overview, see the book \cite{spohnbook})
Sacrificing  precision for the time being, the setup is as follows. We consider a model of an atom interacting with the electromagnetic field (or, a scalar field, as we will assume in this paper for the sake of simplicity).   The field is described by the Hamiltonian $H_\res$ on Hilbert space $\scrH_\res$, it is non-interacting and describes freely propagating scalar bosons with a linear dispersion law (the polarization of the photons does not play a crucial role in the physical problem, hence we omit it from our model).  The joint dynamical system (atom $+$ field) is described by a Hamiltonian $H$ on a Hilbert space $\scrH$ and the field is coupled in a nontrivial way to the atom.  Furthermore, we assume that the atom cannot be ionized, i.e.\  in the absence of coupling its spectrum is discrete. 
 We wish to  address  the long time behavior of the system.

\subsubsection{Approach to a stationary state}
Let $O$ be a local observable in $\scrB(\scrH)$ (i.e.\ bounded operators on $\scrH$) that should be thought of as probing the atom and the field in its spatial vicinity.  Let $\Psi_0 \in \scrH, \norm \Psi_0 \norm=1$ be the initial state vector ($0$ refers to time $t=0$) and $\Psi_t := \e^{-\i t H} \Psi_0$ is the time-evolved state vector. 
We consider the expectation value  ( $\langle \cdot, \cdot \rangle $ is the scalar product)
\beq
\langle O \rangle_t := \langle \Psi_t,  O \Psi_t \rangle   \label{eq: definition time evolved expectation value}
\eeq
As explained in the abstract, we substantiate the claim that this expression converges to an asymptotic value, as $t \to \infty$. i.e.\   $\langle O \rangle_t \to \langle O \rangle_{\infty} $ and that $\langle O \rangle_{\infty}$ is independent of $\Psi_0$.
 If the Hamiltonian $H$ admits a ground state $\Psi_{\mathrm{gs}}$, i.e.\ $E_{\mathrm{gs}}= \inf \si(H)$ is an eigenvalue, then this  implies that $ \langle O \rangle_{\infty} = \langle \Psi_{\mathrm{gs}},  O \Psi_{\mathrm{gs}} \rangle   $ since one can choose $\Psi_0=\Psi_{\mathrm{gs}}$.
   In particular, this means that  the eigenvalue $E_{\mathrm{gs}}$ is simple and that $H$ cannot have any other eigenvalues.  It is this last claim that has been established up to now in great generality. In fact,  one can even prove that  $H$ has absolutely continuous spectrum, apart from the simple eigenvalue $E_{\mathrm{gs}}$, see \cite{bachfrohlichsigalqed,frohlichgriesemersigal,georgescugerardmoeller,chenfaupinfrohlichsigal}.
 However, this seems not sufficient to prove that \eqref{eq: definition time evolved expectation value} converges as $ t \to \infty$,  because $\e^{-\i t H}$ appears twice in this expression (!).  Let us immediately add that we find it not at all inconceivable that there is some easy way around this problem, allowing to apply the techniques used for the spectral analysis of $H$ to determine the asymptotics of \eqref{eq: definition time evolved expectation value}.
However, this is not the strategy of the present paper.  We prove the convergence to an asymptotic value, $\langle O \rangle_t \to \langle O \rangle_{\infty} $ not by spectral considerations, but by exhibiting explicitly the irreversible  density matrix evolution  $\str \Psi_0 \rangle \langle \Psi_0 \str \to \str \Psi_t \rangle \langle \Psi_t \str$. 
 If $H$ is too infrared-singular to admit a ground state, then the convergence  $\langle O \rangle_t \to \langle O \rangle_{\infty} $ can still hold provided that the observable $O$ 'does not see' the low-energetic photons.  In that case the  asymptotic value $ \langle O \rangle_{\infty} $ is a state (a positive, normalized functional in $O$) that is not of the form $\langle O \rangle_{\infty} = \langle \Psi_{\mathrm{gs}},  O \Psi_{\mathrm{gs}} \rangle$. From the point of view of our technique, this case is no different from the infrared-regular case.   
 
 Finally, we mention that the problem of `return to equilibrium' at positive temperature has been studied with much more success since in that case, the problem can indeed be reduced to the study of the spectrum of an operator - the so-called \emph{standard Liouvillian} -  acting on an appropriate Hilbert space, see \cite{deroeckkupiainen} for references. For our technique, there is no difference between zero and  positive temperature, and the present result on approach to a stationary state was, up to some irrelevant details, in fact already contained in \cite{deroeckkupiainen}.

\subsubsection{Scattering theory}

Let us again assume that $H$ admits a ground state $\Psi_{\mathrm{gs}}$ (and no other eigenstates). The intuition is that the evolved state vector  $\Psi_t$ should, at large times $t$, look like the ground state with a few free photons that travel off to infinity.

 To make this intuition  more precise, one introduces an identification operator $J_{\mathrm{id}}: \scrH_\res \to \scrH$, 
%
  such that $J_{\mathrm{id}}  \Om =  \Psi_{\mathrm{gs}}  $ with $\Omega$ the field vacuum and in general $J_{\mathrm{id}}$ maps photon state vectors $\Psi_{F}$ into state vectors that we think of as 'ground state together with a free photon wavepacket $\Psi_{F}$'. One defines the asymptotic wave operators $W^+,W^-$ by 
\beq
W^{\pm} \Psi_{F} := \mathop{\lim}\limits_{t \to  \pm \infty}        \e^{ \i t (H-E_{\mathrm{gs}})} J_{\mathrm{id}}  \e^{- \i t H_\res} \Psi_{F}  \label{eq: def wave operators}
\eeq
for  $\Psi_{F}$ in a dense domain.  Asymptotic completeness (AC) asserts that the operators  $W^{\pm}:  \scrH_\res \to \scrH$ exist and are unitary. The main unproven aspect of this statement is $\Ran W^{\pm} = \scrH$.  Indeed, what can happen  in the presence of (massless) photons is that $\Psi_t= \e^{-\i t H} \Psi_0$ will contain ever more (as $t \to \infty$) photons with ever smaller energies.  But if this is the case then $\Psi_0=\e^{\i t H} \Psi_t$ can clearly not  equal $W^{+} \Psi_{F} $   for any $\Psi_{F}  \in \scrH_\res$.  
For this reason it is important to bound the number of photons of the state vector  $\Psi_t$.    
Indeed, if one eliminates the soft photons, either by making them massive or by introducing a sharp infrared cutoff in the coupling, then all questions can be answered and in particular AC follows, see \cite{frohlichgriesemerschlein, derezinskigerardmassive}.
Our result provides a strong exponential bound  of the form $\langle \Psi_t, \e^{\ka N}  \Psi_t \rangle \leq C$ with $\ka $ a small positive constant and $C$ independent of time.  We believe  that with this bound as an input, a proof of asymptotic completeness is  within reach (see e.g.\ \cite{gerardscatteringmasslessnelson}) and we hope to pursue this in a subsequent paper.  Apart from  the case of massive bosons,  asymptotic completeness  can be established if the particle is a harmonic oscillator and the coupling to the field is linear, so that the full Hamiltonian can be explicitly diagonalized \cite{arairigorous}. Small perturbations (due to small anharmonicities in the particle potential) can be handled by an  expansion introduced in \cite{maassenthermal}, see \cite{spohnasymptoticcompleteness}.  Of course, if one considers models not described by quantum field theory, for example; $N$-particle quantum mechanics, more results are available and we refer to \cite{frohlichgriesemerschlein} for more references and background.

\subsection{Setup}\label{secsetup}

Let $\scrH_\sys$ be a Hilbert space (modeling the small system) with a self-adjoint  Hamiltonian $H_\sys$. 
The field is given by the one-particle dispersion relation $\str q \str$ and the Hamiltonian of the whole field  is given by 
\beq
H_\res:= \int_{\bbR^d} \d q \str q \str a^*_q a_q
\eeq
acting on the bosonic (symmetric) Fock space $\scrH_\res = \Ga(\frh)$ with $\frh= L^2(\bbR^d)$ the one-particle space.
Here $a_q^*, a_q$ are the creation/annihilation operators (actually, operator-valued distributions) of a mode with momentum $q \in \bbR^d$ satisfying the canonical commutation relations $[a_q, a^*_{q'}] = \delta (q-q')$.  We refer to e.g.\ \cite{derezinski1} for a review of these notions and precise definitions.
The Hilbert space of the total system consisting of small system and field, is  $\scrH= \scrH_\sys \otimes \scrH_\res$, and we simply write  $H_\sys$ and  $H_\res$ for the operators $H_{\sys} \otimes 1$ and $1 \otimes H_\res$ acting on $\scrH$.
For simplicity, the  coupling between field and the small system  
  is assumed to be linear in the creation and annihilation operators and of the form $\la H_\inter$, with $\la \in \bbR$ a (small) coupling constant and
 \beq
H_{\inter}=    D \otimes  \int_{\bbR^d} \d  q    \left( \phi(q) a^*_q+  \overline{\phi(q)}a_q \right),   \label{def: interaction ham}
\eeq
for some  $\phi \in \frh$ and Hermitian matrix  $ D=D^* \in \scrB(\scrH_\sys)$.
Our technique works equally well if one considers a finite sum of such interaction terms, or if one adds a sufficiently small quadratic interaction term, but we prefer to keep the setup as elementary as possible. 
The formal total Hamiltonian of the system is hence 
\beq \label{def: total hamiltonian}
H:= H_\sys +  H_\res+  \la H_{\inter}, \qquad  \textrm{on} \,\, \scrH.
\eeq
To construct $H$ rigorously, we assume throughout that
\beq  \label{ass: semibounded hamiltonian}
\langle \phi,  (1+\frac{1 }{ \str q \str}) \phi \rangle_{\frh} =   \int_{\bbR^d} \d  q  (1+\frac{1 }{ \str q \str}) \str \phi(q) \str^2  < \infty. 
\eeq
which yields, by a standard estimate:
\beq \label{eq: infinitesimal perturbation}
 \norm H_{\inter} \Psi  \norm^{2} \leq   2 \norm D \norm \langle \phi, (1+1 / \str q \str) \phi \rangle_{\frh}\,    \langle \Psi,  H_\res \Psi \rangle  \leq C  \langle \Psi,  H_\res \Psi \rangle, \qquad  \text{for} \,  \Psi \in \Dom(H_\res)  
 \eeq
It follows that $H_{\inter}$ is an infinitesimal perturbation of $H_\res$ and by the Kato-Rellich theorem, the Hamiltonian $H$ is self-adjoint on $\Dom (H_\sys+ H_\res)$.

The following assumption is the key ingredient of our analysis, as it expresses that correlations of the free field vanish in time sufficiently fast. 
 \renewcommand{\theassumption}{\Alph{assumption}}
\begin{assumption} [Decay of correlation functions] \label{ass: decay correlation functions}
\beq
 \int_{\bbR_{+}} \d t  \,   (1+ t )^{\al} \str h(t) \str< \infty, \qquad \textrm{with} \,\,   h(t):=  \int_{\bbR^d} \d q \,   \e^{-\i t \str q \str} \str \phi(q)\str^2,     \eeq
 for some $\al>0$. 
\end{assumption}
This assumption implies some infrared regularity of the model. In particular, if it is satisfied with $\al \geq 1$, then the Hamiltonian $H$ has a ground state\footnote{In fact, weaker conditions suffice for the existence of a ground state. For example, discontinuities in  the form factor $\phi$ can invalidate our results (see Section \ref{sec: quadratic ham}), but the existence of the ground state depends solely on the behavior of $\phi$ near $0$.}, as one can establish by, for example, the techniques in \cite{griesemerliebloss,haslerherbstgroudstates,bachfrohlichsigalspectralanalysis, gerardgroundstate, abdessalamspinboson}.  In Section \ref{sec: ifnrared regularity}, we give a condition on the form factor that is sufficient for Assumption \ref{ass: decay correlation functions} to hold. 

The next assumption is meant to exclude situations in which the atom is poorly coupled to the field. In particular, if the function $\phi(q)$ vanishes identically, then one cannot expect relaxation to the ground state, and this should surely be excluded. 
We assume that $\si(H_\sys)$, the spectrum of $H_\sys$, is nondegenerate, and let $P_e$ for $ e \in \sigma( H_\sys)$ be the corresponding (one-dimensional) spectral projectors. We introduce the nonnegative numbers
\beq
j(e,e') :=     \Tr[P_e DP_{e'}DP_e]  \,  \hat h(e-e'), \label{eqexpression jump rates}
\eeq
where $\hat h(\ve) =  \int_{-\infty}^{\infty} \d t  \,     \e^{\i t \ve} h(t)$   is well-defined  by virtue of Assumption \ref{ass: decay correlation functions}. It can be written in a more intuitive fashion as $2 \pi\int_{\bbR^{d}} \d  q    \,    \delta( \str q\str-(e-e'))    \left\str \phi(q)  \right\str^2 $, which one recognizes as the textbook Fermi Golden Rule expression for a scattering rate.  One deduces that 
 $j(e,e')  =0$ whenever $e'\geq e$. Physically, this expresses that the field is in the vacuum state and it can only absorb (and not emit) energy.

\begin{assumption}[Fermi Golden Rule]\label{ass: Fermi Golden Rule}
We assume that  the spectrum of $H_\sys$ is non-degenerate (all eigenvalues are simple) and we let $e_0 := \min \sigma( H_\sys)$ (atomic ground state energy). Most importantly, we assume that for any  eigenvalue $e \in \sigma( H_\sys), e \neq e_0$, there is a sequence  $e(i), i=1, \ldots, n$ of eigenvalues such that 
\beq
e= e(1) >  e(2) >\ldots >  e(n)= e_0,  \qquad  \textrm{and} \qquad \forall i =1,\ldots, n-1:     j(e(i), e(i+1)) >0 
\eeq
with  $j(\cdot,\cdot)$ as defined above.
\end{assumption}
The numbers  $j(e,e')$ should be viewed as 'jump rates': We define the one-dimensional spectral projector  $P_{\Om}= \str \Om\rangle \langle \Om \str \in \scrB_1(\scrH_\res)$ with $\Om$ the vacuum vector in the Fock space $\scrH_\res$.
 If the joint atom-field system  is described by the density matrix $\rho_0 = P_e  \otimes P_{\Om}$ at time $t=0$, then formal perturbation theory (Fermi Golden Rule) suggests that the probability to find the atom in state $e' \neq e$ at a later time $t>0$, is 
 \beq  \Tr [P_{e'} \e^{-\i t H}\rho_0\e^{\i t H}] =  j(e,e') (\la^2t) + \caO((\la^2t)^2)   \label{eq: fermi golden rule rates} \eeq
The rigorous version of this formula, given in Proposition \ref{prop: weak coupling}, is a crucial ingredient of our analysis. 

\subsection{Initial states and observables} \label{secinitial states}

We now define the class of initial states $\rho_0$ and observables $O$ that we consider. 
For $\psi \in \frh$, let ${\caW}(\psi) \in \scrB(\scrH_\res)$ be the Weyl operator
\beq
{\caW}(\psi) =   \e^{\i \Phi(\psi)}, \qquad  \Phi(\psi) :=   \int d q   \left( \psi(q) a^*_q +\overline{\psi(q)} a_q  \right)
\eeq
We use $\realinitial, \realfinal$ as labels to denote objects characterizing the initial state (`left boundary') and observable (`right boundary').
We pick  $\psi_{\realinitial}, \psi_{\realfinal} \in  \frh$, an atom observable $O_\sys \in \scrB(\scrH_\sys)$ and a density matrix $ \rho_{\sys,0} \in \scrB_1(\scrH_\sys)$, i.e.\ such that $\Tr\rho_{\sys,0}=1$ and $\rho_{\sys,0} \geq 0$. Then we put
\beq
O := O_\sys \otimes {\caW}(\psi_{\realfinal}), \qquad   \rho_{0} := \rho_{\sys,0} \otimes   {\caW}(\psi_{\realinitial}) P_{\Om}\caW^{*}(\psi_{\realinitial}).  \label{eq: choice o and rho}
\eeq
If $\rho_{\sys,0}= \str \psi_{\sys,0} \rangle \langle \psi_{\sys,0} \str $ for some $\psi_{\sys,0} \in \scrH_\sys$, then $\rho_0= \str \Psi_{0} \rangle \langle \Psi_{0} \str $ with $\Psi_0= \psi_{\sys,0} \otimes {\caW}(\psi_{\realinitial})\Omega $ and this is assumed for simplicity in the next section (for notational convenience, we use the general case in later sections). 
We need to assume some regularity properties on $\psi_{\realinitial}, \psi_{\realfinal}$: 
\begin{assumption}[Regularity of initial states and observables]  \label{ass: initial states and observables}
\baq
 \int_{\bbR_{+}} \d t  \,   (1+\str t \str )^{\al} \str     h_{\ltimes}(t) \str< \infty, \qquad \textrm{with} \,\,    &     h_{\ltimes}(t) :=    \langle\phi,  \e^{-\i \str q \str t} \psi_{\realinitial} \rangle_\frh    \label{ass: regularity initial} \\[2mm]
 \int_{\bbR_{+}} \d t  \,   (1+\str t \str )^{\al} \str     h_{\rtimes}(t) \str< \infty, \qquad \textrm{with} \,\,    &    h_{\rtimes}(t):=  \langle\phi,  \e^{\i \str q \str t} \psi_{\realfinal} \rangle_\frh  \label{ass: regularity final}   \\[2mm]
  \sup_{t \geq 0} \str h_{\Join}(t)\str    (1+t)^{\al} < \infty,  \qquad \textrm{with} \,\,     &     h_{\Join}(t)  :=   \langle\psi_{\realinitial},   \e^{\i \str q \str t}  \psi_{\realfinal} \rangle_\frh  \label{ass: regularity join}
\eaq
for $\al>0$. 
\end{assumption}
\setcounter{theorem}{0}
Throughout our paper, we always assume that Assumptions \ref{ass: decay correlation functions} and \ref{ass: initial states and observables} are satisfied with the same parameter $\al>0$. This is done for the sake of simplicity, though it slightly weakens the result (see Remark \ref{rem: alphas}  below).

\subsection{Results}

Write $\Psi_t := \e^{-\i t H}\Psi_0$ for some initial vector $\Psi_0 \in \scrH$ and $H$ as defined in \eqref{def: total hamiltonian}.  We define the Weyl algebra $\frW_{ \al}, \al>0$ to be the $C^{*}$-algebra generated by 'atomic' observables $A \otimes \lone$ and Weyl-operators $ \lone \otimes
{\caW}(\psi_{\realfinal}) $ with $\psi_{\realfinal} \in \frh$ satisfying \eqref{ass: regularity final}.

\bet \label{thm: steady state}
Assume that Assumption \ref{ass:  decay correlation functions} and Assumption \ref{ass: Fermi Golden Rule}  are satisfied.   Then, there is a $\la_0>0$ such that, for any coupling strength $ 0 < \str\la \str \leq \la_0$, the following holds true:
\ben

\item  There is a bounded linear functional $O \mapsto \langle O \rangle_{\infty}$ on $\frW_{ \al}$ such that 
\beq
\lim_{t\to \infty}  \langle \Psi_t, O \Psi_t \rangle   = \langle O \rangle_{\infty}  \label{eq: approach to ground state statement}
\eeq 
for any initial vector $\Psi_0 \in \scrH$ with $ \norm \Psi_0 \norm=1$ and $O \in \frW_{\al}$. 
\item   Let $\Psi^0_{\mathrm{gs}}= \psi_{e_{0}} \otimes \Om_{},  \norm \Psi^0_{\mathrm{gs}} \norm=1$ be the normalized ground state  of the uncoupled Hamiltonian $H_\sys+H_\res$ and let $O $ be of the form  \eqref{eq: choice o and rho} with $\psi_\realfinal$ satisfying \eqref{ass: regularity final},  then
\beq   \label{eq: steady state is perturbative}
 \langle O \rangle_{\infty} -        \langle \Psi^0_{\mathrm{gs}}, O  \Psi^0_{\mathrm{gs}} \rangle   = \caO(\str\la\str^{ \min{ (2\al, 1})}), \qquad    \la \to 0
\eeq
\item   If $O$ and $\Psi_0$ (that is: the $\psi_{\realinitial}, \psi_{\realfinal}$ that determine them) satisfy the three bounds of Assumption \ref{ass: initial states and observables}, then 
\beq
 \left\str  \langle O \rangle_{\infty} -  \langle \Psi_t, O \Psi_t \rangle \right\str \leq  \caO(t^{-\al}), \qquad  t \to \infty
 \eeq
\een
\eet
\begin{remark}
One could be tempted to interpret the functional $\langle \cdot\rangle_{\infty}$ as the expectation in the ground state of the coupled system, but this is not correct since, for $\al <1$, the coupled system does in general not have a ground state in $\scrH$ (although a ground state does exist in the Hilbert space corresponding to a different representation of the operator algebra).   On the other hand, if the system does admit a ground state $\Psi_{\mathrm{gs}} \in \scrH$, then by choosing $\Psi_0 = \Psi_{\mathrm{gs}}$,  \eqref{eq: approach to ground state statement} immediately implies that $\langle  O  \rangle_{\infty} = \langle \Psi_{\mathrm{gs}} , O \Psi_{\mathrm{gs}} \rangle    $. 
\end{remark}

Our second result bounds the number of emitted bosons. Let $N$ be number operator on the Fock space $\scrH_\res$. 

\bet  \label{thm: photon bound}
Assume that  Assumption \ref{ass:  decay correlation functions} for some $\al>0$ and Assumption \ref{ass: Fermi Golden Rule}  are satisfied. 
Then, there are $\la'_0, \ka'_0>0$ such that, for any coupling strength $\la$ with $ 0 < \str\la \str \leq \la'_0$, complex number $\ka$ with $\str \ka \str \leq \ka'_0$, and initial vector $\Psi_0$ with $\psi_{\realinitial}$ satisfying the bound  \eqref{ass: regularity initial} in Assumption \ref{ass: initial states and observables}, we have
\beq
   \left    \str \langle \Psi_t,  (\lone \otimes \e^{\ka N}) \Psi_t \rangle \right \str  \leq   \breve C \exp{\left( C \str t\str^{(1-\min(\al,1))}  \right)},\qquad  t \geq 0  \label{eq: bound number operator}
\eeq
where the constant $ \breve C$ depends on $\Psi_0$, but $C$ does not (and none of them depends on $\la,\ka$ or $t$).   In particular, if  $\al \geq 1$, then the LHS is bounded uniformly in time. 
  \eet

\begin{remark} \label{rem: alphas}
As indicated below Assumption \ref{ass: initial states and observables}, we prefer to keep one constant $\al$ throughout the paper. 
Let us describe the possible improvement of Theorem \ref{thm: photon bound}  if we were to drop this constraint. 
Assume again that Assumption \ref{ass:  decay correlation functions} holds for some $\al>0$ and assume the  the bound  \eqref{ass: regularity initial} in Assumption \ref{ass: initial states and observables} is satisfied for some $\al_{\realinitial} >0$, then the photon number bound \eqref{eq: bound number operator} still holds with $\al$ determined by Assumption \ref{ass:  decay correlation functions}, regardless of the value of $\al_{\realinitial}$. This is clear from the inspection of the last part of the proof of Theorem \ref{thm: photon bound} in Section \ref{sec: finite size}.
\end{remark}

\subsection{Discussion of the results}

\subsubsection{Quadratic Hamiltonians} \label{sec: quadratic ham}
The easiest way to understand our results and the different conditions involved, is to compare them to an integrable model where the same questions can be asked.  Consider the formal Hamiltonian
\beq  \label{def: vanhovehamiltonian}
H =  \int \d q  \str q \str a^*_q a_q  +   \int \d q \left(\phi(q) a^*_q + \overline{\phi(q)} a_q \right),   \qquad   \phi \in \frh
\eeq
which fits into our framework by taking $\scrH_\sys= \bbC$ (in that case $H_\sys$ is an irrelevant number).
By completing the square (which, in this context, is a special case of  a 'Bogoliubov transformation') we can rewrite it as
\beq
H =  E_{gs} +  \int \d q  {\str q \str} b^*_q b_q , \qquad   b_q :=  a_q + \frac{ \phi(q)}{{\str q \str}}, \qquad  E_{gs}  :=  -  \int \d q \frac{\str \phi(q)\str^2}{{\str q \str}}
\eeq
If   $\phi \in \frh$ and $ \phi /\sqrt{{\str q \str}}   \in  \frh$,  then the term linear in $a/a^*$ is an infinitesimal perturbation of the quadratic term (the first term in \eqref{def: vanhovehamiltonian}). The operator $H$ is self adjoint on the domain of the quadratic term and it is bounded below by $E_{gs} > - \infty$.  If  moreover  $\phi/ {\str q \str}  \in  \frh$,
then $H$ has a normalizable ground state, given by 
\beq
\Psi_{gs} :=  \e^{- \norm \phi  /{\str q \str}  \norm^2} \caW( \phi  /{\str q \str} )   \Om,
\eeq 
with the Weyl operator $\caW( \psi ), \psi \in  \frh $ as defined in Section \ref{secinitial states}.  We refer to  e.g.\ \cite{derezinskivanhove} for an extended and rigorous discussion of quadratic Hamiltonians.

Let us look into ergodic properties of the evolution  generated by a quadratic Hamiltonian. Let $\psi_{\realfinal} \in L^2$, and consider the observable $O= {\caW}(\psi_{\realfinal})$. Then by explicit calculation
\beq
\langle \Om,   \e^{\i t H}  O  \e^{-\i t H} \Om \rangle =    \e^{- \norm \psi_{\realfinal} \norm^2}   \e^{2\i  \Re \langle \psi_{\realfinal}, (\e^{\i t {\str q \str}}-1) \frac{ \phi}{{\str q \str}}  \rangle } \label{eqref: asymptotic value vanhove}
\eeq
Recall the correlation function $h_{\rtimes}(t)=\langle \psi_{\realfinal}, \e^{\i t {\str q \str}} \phi \rangle$ from Assumption \ref{ass: initial states and observables}.
Clearly,  if $h_{\rtimes} \in L^1(\bbR,\d t)$, then the RHS of \eqref{eqref: asymptotic value vanhove} converges as $t \to \infty$. In particular, this can be true even when $\phi/{\str q \str} \not \in  \frh$, that is, if $H$ has no ground state. One can easily convince oneself that the $t \to \infty$-asymptotics of  \eqref{eqref: asymptotic value vanhove} does not change if we consider a general initial state $\rho_0 =\str \Psi_0 \rangle \langle \Psi_0 \str $ with $\Psi_0= {\caW}(\psi_{\realinitial})\Om, \norm \Psi_0 \norm=1$ and $\psi_{\realinitial}$ such that $h_{\ltimes}(t) = \langle \phi,\e^{-\i t {\str q \str}}\psi_{\realinitial} \rangle  $ is integrable and $h_{\Join}(t)= \langle \psi_{\realinitial},\e^{\i t {\str q \str}} \psi_{\realfinal} \rangle$ vanishes at infinity. 
Next, we study the number of emitted photons 
\baq
\langle \Om,   \e^{\i t H}  \e^{\ka N}  \e^{-\i t H} \Om \rangle  &= &     \langle \Om,     \exp{\left(\ka  \int \d k(a_k+ \varphi_t(k))^* (a_k+ \varphi_t(k))  \right) }  \Om \rangle, \qquad      \varphi_t :=  (1-\e^{\i t {\str q \str}}) \frac{ \phi}{{\str q \str}} \\[2mm]
&= &   \exp{ \left(\norm \varphi_t \norm^2  (\e^{\ka}-1) \right)   }   =  \exp{  \left( (\e^{\ka}-1)   \int_0^t \d s  \int_0^t   \d s'  h(s-s')  \right) }  
\eaq
It is clear that this expression remains bounded if (and only if) $\phi/{\str q \str} \in L^2$, hence if $H$ has a ground state.  Moreover, we see that the rate of growth of the LHS of \eqref{eq: bound number operator} in Theorem \ref{thm: photon bound} corresponds roughly to estimating $h(s-s')$ by $\str h(s-s') \str$ in the integral. 

%

\subsection{Plan of the proof} \label{sec: plan of the proof}

Our two  results, relaxation to the ground state and the photon bound, are very similar from the technical point of view, even if their physical meaning possibly is not. For this reason, we focus exclusively on the relaxation to the ground state in the present section, and we devote a few words to the photon bound at the end.

The proof relies on the following philosophy. The original problem is formulated as a perturbation (with small parameter $\la$) of an integrable Hamiltonian $H_\sys + H_\res$ whose dynamics does not have the phenomenon that we want to exhibit: it does not relax into the ground state.  This can already be seen by remarking that the atom $\sys$ is not coupled to the field $\res$ and, as the former is finite-dimensional, its dynamics is oscillatory.  However, the Fermi Golden Rule \eqref{eq: fermi golden rule rates} provides us with a picture that does capture the dissipative behavior: The fact that the state of the atom changes by jumps between eigenstates of the Hamiltonian $H_\sys$ suggests the following approximation.  
\beq
\rho_t  \approx    \e^{-\i t \adjoint(H_\sys) + \la^2 t M }  \rho_{\sys,0} \otimes P_{\Omega}  \label{eq: pictorial markov approx}
\eeq
where $M$ is the generator of a dissipative dynamics that we can loosely describe as the Markov jump process on eigenstates of  $H_\sys$ with jump rates given by  \eqref{eq: fermi golden rule rates} and exponential decay of the off-diagonal (in $H_\sys$-basis) part of the density matrix.     The approximation becomes exact (at least as far as the $\sys$-state  is concerned), as $\la \to 0, t \to \infty$ such that  $t\la^{2}$ is held fixed.  This was already advocated by Van Hove \cite{vanhove} and it was made precise by Davies \cite{davies1}.   We state it explicitly in Proposition \ref{prop: weak coupling} and we review the proof in Appendix \ref{secweak coupling limit}.

The underlying physical reason why \eqref{eq: pictorial markov approx} is a good approximation is
that $P_{\Omega}$ is invariant under the free $\res$-dynamics and any disturbance in the field caused by $\sys$ is carried away (disperses) to spatial infinity quickly, such that it is irrelevant for the further evolution, and one can pretend that the state of the field remains $P_{\Omega}$.  The dispersive property is a consequence of the temporal decay of field correlations for the uncoupled dynamics, which  is our Assumption \ref{ass: decay correlation functions}.  It  is therefore plausible that the evolution of $\sys$ is Markovian on time scales longer than the time necessary for a field excitation to disperse away 

The approximate dynamics in \eqref{eq: pictorial markov approx} exhibits relaxation to the projection onto the (uncoupled) ground state  $\psi_{e_0} \otimes \Om$ provided that sufficiently many jump rates are nonzero; this is captured by the Fermi Golden Rule Assumption \ref{ass: Fermi Golden Rule}.

In a nutshell, our strategy is to use the dynamics \eqref{eq: pictorial markov approx} as a zero-order term of our expansion for the full dynamics. Note that our expansion is not simply in powers of the coupling constant $\la$; the exponent in \eqref{eq: pictorial markov approx} clearly has zeroth and second order contributions in $\la$. The reason we refer to \eqref{eq: pictorial markov approx} as zeroth order is that all other contributions to the dynamics are small compared to this term, or rather, to its dissipative effect.

 Because of the jumps described by $M$, the dynamics \eqref{eq: pictorial markov approx} is stochastic and  hence our task reduces  to controlling a small perturbation (the real dynamics at finite but small $\la$) of a stochastic evolution. This is quite a tractable problem that can be handled by analytic perturbation theory of isolated eigenvalues and a cluster expansion.  Similar expansions were developed e.g.\ in  \cite{maesnetocnyspacetime, bricmontkupiainencoupledmaps} and, very closely to the setup of the present paper, in \cite{deroeck, deroeckkupiainen}.    


The main result of this expansion is that we manage to represent $  \Tr \rho_{t} O$ as a one-dimensional polymer gas (this dimension corresponds to time). Then the problem of showing that there is a well-defined and unique asymptotic state is analogous to the problem of proving decay of correlations in the one-dimensional gas. 

More precisely, the polymer representation is 
 \beq
\Tr (\rho_{t} O) = k_{\realinitial} k_{\realfinal} \sum_{\caA} \prod_{A \in \caA} v(A), \qquad  \textrm{for} \,   t= n/\la^2  \label{eq: first polymer rep}
 \eeq
where  the sum is over collections $\caA$ of sets $A \subset \{0,1, \ldots, n, n+1 \}$ such that for any two sets $A_1, A_2$, $\distance(A_1,A_2) >1$ where we write $\distance(A_1,A_2)= \min_{i \in A_1, j\in A_2} \str i-j\str$.   The sets $A$ are called polymers, the numbers $v(A)$ are polymer weights.  Moreover, only the $v(A)$ 
 with $0  \in A$, $(n+1) \in A$ depend on the initial state, resp.\ the observable.  To a good approximation (the error made is not important for the present discussion), 
 \beq 
 k_{\realinitial} \sim  \Tr \rho_0    , \qquad k_{\realfinal} \sim \Tr  [(P_{e_0} \otimes P_{\Om}) O]
 \eeq
 Our goal is to prove that 
\beq
\lim_{t \to \infty} \Tr (\rho_{t} O) = \Tr (\rho_0)  \langle O \rangle_{\infty}  
\eeq
where $ \langle O \rangle_{\infty}$ does not  depend on $\rho_0$.
Setting all $v(A)=0$ essentially amounts to pretending that $\rho_t = P_{e_0} \otimes P_{\Om}$ for all $t$. In that case $\Tr (\rho_{t} O)= k_{\realinitial} k_{\realfinal}$ and the expectation value of the observable is independent of the initial state, apart from the trivial normalization factor $k_{\realinitial}$.  The polymers $A$ contain corrections to this picture.  These corrections originate from the fluctuations of the  Markovian dynamics generated by $M$ and  from the corrections to the Markovian behavior \eqref{eq: pictorial markov approx}.  Pictorially, let $A = \{ \tau_1, \tau_2, \ldots \tau_m\}$
 with $\tau_{i } \leq \tau_{i+1}$, then   $v(A)$ describes  correlated deviations from the assumption that $\rho_t =  P_{e_0} \otimes P_{\Om}$  in the time-interval   $\Dom(\tau_{1}), \ldots, \Dom(\tau_{m-1}) $, where $\Dom(\tau_m)\equiv  (1/ \la^2)[\tau_m-1,\tau_m]$.  
 The detailed construction of the polymer representation  \eqref{eq: first polymer rep} is carried through in Section \ref{sec: polymer rep}. 

To prove decay of correlations, we use a standard cluster expansion, which we review in Appendix \ref{appsec: cluster expansions}.  A possible  condition for the applicability of the cluster expansion method is the "Kotecky-Preiss" criterion. Applied to our model, it demands that,
\beq
\sum_{A: \distance(A,A') \leq 1}  v(A) \e^{a(A)} \leq    a(A'),  \label{eq: kp intro}
\eeq
 with $A,A' \subset \{1, \ldots,n \}$ and  $a(\cdot)$ an $n$-independent positive function on polymers. 
 We formulate \eqref{eq: kp intro} (together with some other statements) in Lemma \ref{lem: bound on scalar polymers}.   In our case $a(A) \equiv  \ep C \str A \str$ where $\ep =  \str \la \str^{2\min{(\al,1)}}$ can be seen as a renormalized coupling constant.

To prove that the  Kotecky-Preiss criterion \eqref{eq: kp intro} is satisfied, we use a  Dyson (or Duhamel) expansion, it relies on the smallness of the coupling constant $\la$ and the decay of field correlations. This is intuitive;  if the  correlations of the free field decay roughly as  $t^{-(1+\al)}$ (cfr.\ Assumption \ref{ass: decay correlation functions}) then one could conjecture, for example, that  for $A=\{ \tau_1, \tau_2\}$
also the weight $v(A)$  decays as $ (\distance(\Dom (\tau_1),  \Dom (\tau_2)   ) )^{-(1+\al)} $ as $\tau_2 -\tau_1 \to \infty$.  This picture turns out to be essentially correct. In fact, we get $\str v(\{ \tau_1, \tau_2\}) \str \sim  C \ep ( \tau_2-\tau_1)^{-(1+\al)}$.  
This analysis (proof of the Kotecky-Preiss criterion) is accomplished in Section \ref{secexpansions}. 

In the concluding Section \ref{sec: analysis of polymer models}, we prove our results. The main point of this section is to pinpoint how the cluster expansion gives rise to decay of correlations. Such reasoning is completely standard in high-temperature expansions of statistical physics, see e.g.\ \cite{simon}.

To prove the photon bound, we use an analogous approach but this time we develop a polymer representation like \eqref{eq: first polymer rep} for the quantity $\Tr (\rho_t \e^{\ka N})$.  In fact, in our proofs,  we provide one general polymer representation for $\Tr(O \e^{(\ka/2) N}\rho_t  \e^{(\ka/2) N} )$ and then we set $\ka=0$ to  study $\Tr(\rho_t O)$ and  $O= \lone$ to study $\Tr (\rho_t \e^{\ka N})$.

In the rest of the paper, we assume that Assumptions \ref{ass: decay correlation functions}, \ref{ass: Fermi Golden Rule} and \ref{ass: initial states and observables} are satisfied for $\al>0$ and all the Lemmata will depend on this parameter $\al$.

\section*{Acknowledgements}
 W.D.R.\  is grateful to the people who explained him  scattering in quantum field theory. 
In particular, he profited a lot from discussions and collaboration with  J.\ Derezi{\'n}ski over the past years, and from exchanges with W.\ Dybalksi, J.\ Fr{\"o}hlich,  M.\ Griesemer and B.\ Schlein. 
More generally, we would like to thank J.\ Fr{\"o}hlich and I.\ M.\ Sigal for bringing these problems to the attention of the math-phys community, starting more than a decade ago. 

We  thank the European Research Council and the Academy of Finland for financial support.

\section{Polymer Representation} \label{sec: polymer rep}

In this section, we complete the first important step of our proof, namely we rewrite all quantities of interest through a polymer representation.  First, let us discretize time by introducing a mesoscopic time scale $\la^{-2}$, where $\la>0$ is the coupling strength.  That is, we consider times  of the form $t= n/\la^2 $ with $n \in \bbN$ (the discretization will be easily removed at the end of the argument).  The main quantity that we study is 
\beq
Z_n  = Z_n(O, \rho_0, \ka): =  \Tr\left[ O  \e^{ (\ka/2)N}    \e^{-\i (n/\la^2) H }    \rho_0     \e^{\i (n/\la^2) H}   \e^{(\ka/2) N} \, \right]  
    \label{eq: def z}
\eeq
where   $N$ is the number operator  on the Fock space $\scrH_\res$ and  $\ka \in \bbC$ is a sufficiently small  parameter.  The operators $\rho_0$ and $O$ are the initial states, respectively observable constructed as in Section \ref{secinitial states}.

As announced, we develop a polymer representation for $Z_n$. In fact, we will first construct a polymer representation with \emph{operator-valued polymer weights}. In this representation, the polymers correspond to deviations from Markovian behavior.  This is described in Section \ref{sec: operator valued polymer model}. Bounds on the operator-valued polymer weights are stated in \ref{sec: bounds on operator valued polymers} but their proof is deferred to Section \ref{secdiscretization}.  

Then we define the 'true' (scalar) polymer representation for $Z_n$ by adding the excitations of the Markovian approximation to the already defined operator-valued polymers. This is done in Section \ref{sec: scalar polymer model}.   Why this leads to scalar polymers will be explained there.  Finally, we need to provide bounds on the scalar polymers in order to satisfy the Kotecky-Preiss criterion. These bounds follow  from the bounds on the operator valued polymer weights and this is described in Section \ref{sec: bounds on scalar polymers}.   In all of the following sections, we prefer to treat 'bulk' and 'boundary'- polymers separately, where the terms 'bulk' and 'boundary' refer to the time-dimension.   The analysis of the bulk polymers is the more subtle piece of work but the treatment of the boundary polymers demands additional notation.  Therefore we treat the former first, and then indicate the (in all cases quite minimal) changes necessary for the latter.

\subsection{Operator valued polymer model} \label{sec: operator valued polymer model}

\subsubsection{Definition of the deformed dynamics} \label{sec: definition of the deformed dynamics}

Starting from \eqref{eq: def z}, we would like to move the deformation parameter $\ka$ into the dynamics $\e^{-\i t H}$. Given an operator  $X$, we introduce the left, right and two-sided multiplication, acting on operators $\rho$
 \beq
{\caL}(X) \rho:= X\rho, \qquad {\caR}(X) \rho:=  \rho X, \qquad \caM (X)\rho :=   X \rho X.
 \eeq
Defining the Liouvillian $ L=  \adjoint (H)={\caL}(H)-{\caR}(H)$
with $H$ as in \eqref{def: total hamiltonian}, we may rewrite \eqref{eq: def z}
as
\beq
Z_n  =  \Tr\left[ O  \caM(  \e^{(\ka/2) N} )   \e^{-\i (n/\la^2) L }    \rho_0     \, \right]  .
    \label{eq: def z1}
\eeq
We  formally define the \emph{deformed} (not self-adjoint) Hamiltonian 
 \beq
 H_{\ka} =  \e^{(\ka/2) N} H    \e^{-(\ka/2) N}
 \eeq
and  the \emph{deformed} Liouvillian
 \beq
L_{\ka} := {\caL}(H_{\ka}) -{\caR}(H_{-\ka})  =    \caM(   \e^{(\ka/2) N}) \,   L \,  \caM(   \e^{-(\ka/2) N}).
\eeq
Then, formally
\beq\label{kappaL}
\caM(   \e^{(\ka/2) N})  \e^{-\i t L}   \caM (   \e^{-(\ka/2) N}) =    \e^{-\i t L_\ka}
\eeq
which would allow to rewrite
\beq
Z_n  =  \Tr\left[ O   \e^{-\i (n/\la^2) L_\ka }   \caM(  \e^{(\ka/2) N} )   \rho_0     \, \right]  .
    \label{eq: def z2}
\eeq
To make these manipulations precise note first that
relying on the relative boundedness of $\e^{(\ka/2) N} H_\inter \e^{-(\ka/2) N}$ w.r.t.\ $H_\sys+H_\res$, which follows from \eqref{ass: semibounded hamiltonian},  it is easy to construct  $H_\ka$ and $L_\ka$ as unbounded closed operators and to show that they form  an analytic family (in $\ka$) of class A in the sense of Kato \cite{katoperturbation}. Then \eqref{eq: def z2}  is a simple consequence of the functional calculus in case $\ka \in \i \bbR$. 
The upcoming Lemma \ref{lem: fake functional calculus} gives a constructive meaning to it for
arbitrary $\ka$ and  provides an expansion that we will use in practice. 
To state this lemma, we need a few additional definitions. 
%
First,  some additional Liouvillians: 
\beq
L_\sys := \adjoint(H_\sys),  \qquad  L_\res := \adjoint(H_\res),  \qquad    L_\inter  :=  \lambda   \adjoint(H_\inter)
\eeq
and 
\beq
  L_{\inter,\ka}(s) : =       \caM(   \e^{(\ka/2) N})   \e^{\i s (L_\sys+ L_\res)}    L_{\inter}  \e^{-\i s (L_\sys+ L_\res)}    \caM(   \e^{-(\ka/2) N}) . \label{LIdef} 
\eeq
Defining
\baq  
 \Phi_{\ka}(\psi,s)&:=&    \e^{(\ka/2)N}  \e^{\i s (H_\sys+ H_\res)}  \Phi(\psi)  \e^{-\i s (H_\sys+ H_\res)}  \e^{-(\ka/2)N} \nonumber\\
& =& \int \d q \,     (\e^{\i s \str q \str + \ka/2} \psi(q)  a^*_q  +  \e^{-\i s \str q \str - \ka/2}  \overline{ \psi(q)}  a_q )
\label{Phikappa}
\eaq
we have
\baq
  L_{\inter,\ka}(s) =      \la  \caL \left(  \e^{\i s L_\sys} (D)  \otimes \Phi_{\ka}(\phi,s) \right)
  -  \caR \left( \e^{\i s L_\sys} (D)  \otimes\Phi_{-\ka}(\phi,s) \right).  
\eaq
For $\Psi \in \scrH$, we decompose $\Psi= \sum_{m \in \bbN} \Psi_m$ where $\Psi_m=   1_{[N=m]}  \Psi$, i.e.\ $\Psi_m \in \scrH_\sys \otimes  \scrH_{\res,m}$ with $\scrH_{\res,m} \subset \scrH_{\res} $ the $m$-boson sector, see e.g.\ \cite{derezinski1} for more details. Define the dense subspace $\scrD_1(\scrH)  \subset \scrB_1(\scrH)$ to be the space consisting of finite linear combinations  of rank-one operators $\str \Psi \rangle \langle \Psi' \str$ satisfying 
\beq
   \forall m \in \bbN:   \max{(\norm \Psi_{m}  \norm, \norm \Psi'_{m}  \norm)} \leq  \frac{C^{m}}{\sqrt{m!}   },   \qquad  \text{for some}\,\,    C>0
   \eeq

\begin{lemma}\label{lem: fake functional calculus} The LHS of equation  \eqref{kappaL} defines an operator 
$$
 \e^{-\i t L_\ka}:\scrD_1(\scrH)\to\scrD_1(\scrH) $$ 
 for all 
 $\ka \in \bbC$ and $t\in \bbR$. Given $\rho\in\scrD_1(\scrH)$ the map $\ka \mapsto \e^{-\i t L_\ka}    \rho$ is holomorphic
 from  $\bbC $ to $\scrB_1(\scrH_\sys)$. Moreover on this domain
\beq  \label{eq: group property}
 \e^{-\i (t_1+t_2) L_\ka}= \e^{-\i t_1 L_\ka}  \e^{-\i t_2 L_\ka}
  \eeq
and
\beq
 \e^{-\i t L_\ka}    \rho  =   \e^{-\i t (L_\sys+L_\res)}   \sum _{m \in \bbN} (-\i )^m  \mathop{\int}\limits_{0< t_1 < \ldots < t_m< t} \d t_1 \ldots \d t_m \,      L_{\inter, \ka} (t_m)  \ldots  L_{\inter, \ka} (t_2) L_{\inter, \ka} (t_1)   \rho   \label{eq: duhamel on superspace}
  \eeq
  where the $m=0$-term on the RHS is understood to equal $\rho$ and  sums and integrals  converge absolutely. Finally, the RHS of eq. \eqref{eq: def z2} is well defined and  \eqref{eq: def z2} holds.

\end{lemma}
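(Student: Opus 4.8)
\emph{Strategy.} The plan is to build $\e^{-\i tL_\ka}$ directly from the series in \eqref{eq: duhamel on superspace}, prove it makes sense on $\scrD_1(\scrH)$, and only afterwards identify it with the left-hand side of \eqref{kappaL}: first for $\ka\in\i\bbR$, where $\e^{(\ka/2)N}$ is unitary and everything is classical, and then for all $\ka\in\bbC$ by analytic continuation. Concretely, I would set $\Theta_\ka(t)\rho$ equal to the right-hand side of \eqref{eq: duhamel on superspace}, establish absolute convergence and the mapping property $\scrD_1(\scrH)\to\scrD_1(\scrH)$, and then check the group law, holomorphy and \eqref{kappaL} in turn.

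\emph{The series and its convergence.} Equation \eqref{eq: duhamel on superspace} is the usual iteration of Duhamel's formula in the interaction picture for the generator $L_\ka=(L_\sys+L_\res)+L_{\inter,\ka}(0)$; the time-dependent vertex $L_{\inter,\ka}(s)$ appears because $\caM(\e^{\pm(\ka/2)N})$ commutes with $\e^{\i s(L_\sys+L_\res)}$ (as $N$ commutes with $H_\sys+H_\res$), which also produces \eqref{LIdef}--\eqref{Phikappa}. The real point is convergence. Each $L_{\inter,\ka}(s)$ is $\caL$ or $\caR$ of $\e^{\i sL_\sys}(D)\otimes\Phi_{\pm\ka}(\phi,s)$, so on a rank-one operator $|\Psi\rangle\langle\Psi'|$ it yields a sum of two rank-one operators in which one factor has been hit by $\e^{\i sL_\sys}(D)\otimes\Phi_{\pm\ka}(\phi,s)$. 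Using $\norm \e^{\i sL_\sys}(D) \norm\le\norm D \norm$ and the elementary Fock bounds $\norm a(\psi)\Xi_m \norm\le\norm \psi \norm\sqrt m\,\norm \Xi_m \norm$, $\norm a^*(\psi)\Xi_m \norm\le\norm \psi \norm\sqrt{m+1}\,\norm \Xi_m \norm$ on the $m$-boson sector, one checks that if $\norm \Psi_m \norm\le C^m/\sqrt{m!}$ for all $m$, then the sector norms of the image are bounded by $(C')^m/\sqrt{m!}$ for every $C'>C$, with a constant $K(C,C',\ka)$ that is uniform in $s$ and locally bounded in $\ka$. Iterating $m$ times along a slowly growing sequence $C=C_0<C_1<\cdots<C_m$ and using that the time-ordered simplex $0<t_1<\cdots<t_m<t$ has volume $t^m/m!$, the $m$-th term of \eqref{eq: duhamel on superspace} is bounded in $\scrB_1(\scrH)$ by $(t^m/m!)\prod_{i=1}^mK(C_{i-1},C_i,\ka)$; choosing $C_i=C_0+i\delta$ makes every factor $K(C_{i-1},C_i,\ka)$ bounded by one $t,\ka$-dependent constant, so the series is dominated by an exponential series in $t$ and converges absolutely. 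Bookkeeping of the accumulated constants shows $\Theta_\ka(t)$ maps $\scrD_1(\scrH)$ into $\scrD_1(\scrH)$, and, these bounds being locally uniform in $\ka$ while each summand is entire in $\ka$ through the factors $\e^{\pm\ka/2}$, that $\ka\mapsto\Theta_\ka(t)\rho$ is entire with values in $\scrB_1(\scrH)$ (locally uniform limit of holomorphic functions).

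\emph{Identification and conclusion.} For $\ka\in\i\bbR$ the operator $U_\ka=\e^{(\ka/2)N}$ is unitary, $\caM(\e^{(\ka/2)N})=\Ad(U_\ka)$, $\e^{-\i tL}=\Ad(\e^{-\i tH})$, so the left-hand side of \eqref{kappaL} equals $\Ad(U_\ka\e^{-\i tH}U_\ka^{-1})=\Ad(\e^{-\i tH_\ka})$, the group generated by the self-adjoint operator $L_\ka$; expanding this group in the interaction picture (equivalently, conjugating the $\ka=0$ Dyson series term by term) gives exactly $\Theta_\ka(t)$, and \eqref{eq: group property} is here classical. Hence on $\scrD_1(\scrH)$ we have, for $\ka\in\i\bbR$, $\Theta_\ka(t)=\e^{-\i tL_\ka}=$ left-hand side of \eqref{kappaL}; in particular $\e^{-\i tL}=\Theta_0(t)$ maps $\scrD_1(\scrH)$ into itself. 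For general $\ka\in\bbC$ the left-hand side of \eqref{kappaL} still acts on $\scrD_1(\scrH)$: $\caM(\e^{-(\ka/2)N})$ leaves $\scrD_1(\scrH)$ invariant, $\e^{-\i tL}$ preserves it by the preceding, and $\caM(\e^{(\ka/2)N})$ may then be applied since the resulting vectors still decay like $C^m/\sqrt{m!}$; moreover this expression is holomorphic in $\ka$ into $\scrB_1(\scrH)$ with the $\scrD_1$-constant locally bounded. So both sides of \eqref{kappaL} are entire in $\ka$ and agree on $\i\bbR$, hence on $\bbC$ by the identity theorem; this defines $\e^{-\i tL_\ka}$ as claimed and transfers \eqref{eq: group property} and \eqref{eq: duhamel on superspace} and the holomorphy to all $\ka$. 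Finally, in \eqref{eq: def z1} one has $\rho_0\in\scrD_1(\scrH)$ (it is built from the coherent vector $\caW(\psi_{\realinitial})\Om$, whose $m$-sector norm is $\e^{-\norm \psi_{\realinitial} \norm^2/2}\norm \psi_{\realinitial} \norm^m/\sqrt{m!}$), hence so are $\e^{-\i tL}\rho_0$ and $\caM(\e^{(\ka/2)N})\rho_0$; rewriting $\caM(\e^{(\ka/2)N})\e^{-\i tL}=\e^{-\i tL_\ka}\caM(\e^{(\ka/2)N})$ on $\scrD_1(\scrH)$ by \eqref{kappaL} (using $\caM(\e^{-(\ka/2)N})\caM(\e^{(\ka/2)N})=\Id$ there) converts \eqref{eq: def z1} into \eqref{eq: def z2} with $t=n/\la^2$; the right-hand side of \eqref{eq: def z2} is well defined since $\caM(\e^{(\ka/2)N})\rho_0\in\scrD_1(\scrH)$, $\e^{-\i tL_\ka}$ maps it into $\scrB_1(\scrH)$, and $O$ is bounded.

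\emph{Main obstacle.} The crux is the convergence estimate of the second step: controlling the interplay between the factorial growth generated by repeated creation operators — amplified, when $\Re\ka\neq0$, by the factors $\e^{\pm\ka/2}$ inside $\Phi_{\pm\ka}$ — and the $1/m!$ supplied by the time-ordered integral. The precise form of $\scrD_1(\scrH)$, namely decay $\norm \Psi_m \norm\le C^m/\sqrt{m!}$ with $C$ allowed to depend on the element, is designed exactly so that $C$ can be relaxed slightly at each of the $m$ vertices while the product of the $m$ vertex bounds remains geometric in $m$; this is what makes the series summable and at the same time keeps its sum inside $\scrD_1(\scrH)$.
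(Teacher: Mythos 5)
Your overall route is the same as the paper's: Duhamel/Dyson iteration for the series, the elementary Fock estimate $\norm a^{\#}(\psi)\Xi_m\norm\le\sqrt{m+1}\,\norm\psi\norm\,\norm\Xi_m\norm$ as the analytical input, identification of the series with $\caM(\e^{(\ka/2)N})\e^{-\i t L}\caM(\e^{-(\ka/2)N})$ via functional calculus for $\ka\in\i\bbR$, and continuation to $\bbC$ (the paper outsources the convergence bookkeeping to a reference and stops there; your additional detail on holomorphy and the identity-theorem step is a reasonable way to make the last point explicit).

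However, the convergence bookkeeping in the ``series'' step — which you yourself flag as the crux — has a real gap. You assert that the $m$-th Dyson term has $\scrB_1$-norm at most $(t^m/m!)\prod_{i=1}^m K(C_{i-1},C_i,\ka)$, with $C_i=C_0+i\delta$ making the $K_i$'s uniformly bounded, so that the sum is exponential in $t$. The claim that each vertex sends a $\scrD_1$-element with decay constant $C$ to one with decay constant $C'$ at the cost of a finite prefactor $K(C,C')$ is correct, and the $K_i$ along the ladder $C_i=C_0+i\delta$ do indeed stay bounded. But this only controls the \emph{sector-wise} norms: the $m$-th term has $\norm(\cdot)_p\norm\le(\prod_i K_i)\,C_m^{\,p}/\sqrt{p!}$, and to pass from there to the $\scrB_1$-norm of a rank-one operator $|\Psi\rangle\langle\Psi'|$ you must sum over $p$, producing an extra factor on the order of $\e^{C_m^2}$. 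With $C_m=C_0+m\delta$ this factor grows like $\e^{\delta^2 m^2}$, which overwhelms the $1/m!$ from the simplex for \emph{any} fixed $\delta>0$; and if instead $\delta_m\downarrow 0$ so that $C_m$ grows sublinearly, the $K_i$'s diverge and $\prod_i K_i$ again destroys the bound. So the ``slowly growing $C_i$ absorbs the $\sqrt{m+1}$'s'' mechanism, as stated, does not produce absolute convergence in $\scrB_1$.

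The cure is to give up the ladder of constants and estimate the $\scrH$-norm after $m$ vertices directly. In the worst case (all $m$ factors creation), if $\norm\Psi_p\norm\le C^p/\sqrt{p!}$ then
$\norm a^*(\phi)^m\Psi\norm^2\le\norm\phi\norm^{2m}\sum_p\frac{(p+m)!}{(p!)^2}C^{2p}\le\norm\phi\norm^{2m}\,2^m m!\,\e^{2C^2}$,
so the $m$-th Dyson term has $\scrH$-norm at most $(t^m/m!)\cdot(\text{const})^m\sqrt{m!}=(\text{const}\cdot t)^m/\sqrt{m!}$, which is summable in $m$ for every $t$. Equivalently, fixing the boson sector $p$ and summing over $m$ first gives, via the binomial identity
$\sum_{m\ge 0}\frac{(t\norm\phi\norm)^m}{m!}\cdot\frac{\sqrt{p!}}{(p-m)!}\,C^{p-m}\le\frac{(t\norm\phi\norm+C)^p}{\sqrt{p!}}$,
so the limit is again in $\scrD_1$ with a $t$-dependent decay constant $\approx t\norm\phi\norm+C$. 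Either version supplies the absolute convergence that the paper (and you) need, and is what the cited reference actually exploits; the ladder $C_i=C_0+i\delta$ should not be used to bound the $\scrB_1$-norms of individual Dyson terms.
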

\begin{proof}
Starting from $L_\ka =  L_\sys+L_\res+ \la  L_{\inter,\ka}$, and iterating the Duhamel formula
\beq
   \e^{-\i   t  L_{\ka} } \rho =     \e^{-\i   t  (L_\res+L_\sys) } \rho -\i   \int_0^t \d s \,   \e^{-\i  (t-s)  L_\ka }   L_{\inter,\ka}   \e^{-\i   s  (L_\res+L_\sys) }   \rho
\eeq
we formally arrive at \eqref{eq: duhamel on superspace}.   Hence, the only nontrivial claim in the lemma is the absolute convergence of the series on the RHS of \eqref{eq: duhamel on superspace} and the fact that it belongs to $\scrD_1(\scrH)$. We refer to \cite{derezinskideroeck2} for an explicit proof, which relies exclusively on the well-known estimate 
\beq
\left\norm \int \d q \psi(q)  a^{\#}_q   \Psi_m \right\norm_{\scrH}  \leq    \sqrt{m+1}  \,    \norm \Psi_m \norm_{\scrH}  \int \d q \str\psi(q)\str^2, \qquad \text{for}\,  a^{\#}=a,a^*
\eeq 
\end{proof}

Hence, in what follows, we freely use the operators $\e^{-\i t L_\ka}$ and the group property \eqref{eq: group property}. 

\subsubsection{Splitting of the dynamics} \label{sec: splitting of the dynamics}

We define  the reduced dynamics $Q_t: \scrB_1(\scrH_\sys) \to \scrB_1(\scrH_\sys)$ of the atom;
 \beq
 Q_{t} \rho_{\sys}  :=  \Tr_\res [  \e^{- \i t  L_{\ka}} (\rho_{\sys} \otimes P_{\Om}) ]   \label{eq: first encounter reduced q}
 \eeq
 where $\Tr_{\res}: \caB_1(\scrH) \to  \caB_1(\scrH_\sys) $ is the partial trace and the well-definedness of the RHS follows from Lemma \ref{lem: fake functional calculus}.   A large part of our analysis serves to prove that $ Q_{t} $ tends to a one-dimensional projection as $t \to \infty$. 

We start by rewriting $Z_n$ in \eqref{eq: def z2}.  
%
Recall $L_\res= \adjoint(H_\res)$  and introduce  operators $U_{\tau}$ with $ \tau \in \bbN$.
     \baq
 U_{\tau}  &=&       \e^{\i (\tau/\la^2)  L_\res}   \e^{-\i  (1/\la^2)   L_{\ka}}    \e^{-\i  ((\tau-1)/\la^2)   L_\res} 
 \eaq
 The motivation for this definition is that the product of $U_\tau$ telescopes into
 \beq   \label{def: Z}
    U_n   \ldots U_1  =      \e^{\i  (n/\la^2)   L_\res}   \e^{-\i  (n/\la^2)  L_{\ka}}.  
 \eeq
In particular, if we choose $\rho_0= \rho_{\sys,0} \otimes P_{\Om}$ and $O = O_\sys \otimes \lone $ then 
eq. \eqref{eq: def z2} becomes
 \beq
 Z_n =  \Tr_{\sys} [ O_\sys   Q_{n/\la^2} \rho_{\sys,0}  ]
    \label{eq: first encounter simple z}
 \eeq
 and hence in this case the study of $Z_n$ reduces to the study of $Q_t$. 
  The main idea of our approach is that, at least qualitatively, the main contribution to $   Q_{n/\la^2} $ can be inferred by approximating $ Q_{n/\la^2} $ by $ (  Q_{1/\la^2})^n$. We rename  $T:= Q_{1/\la^2}$ and we define the 'excitation operators' 
 \beq
 B_\tau =   U_\tau  - T \otimes \lone  
 \eeq
 Our task is to  understand how the behavior of $T^n$ is modified by the  excitation operators $B_{\tau}$. To quantify the influence of the latter, we now develop a formalism.   
 \subsubsection{Correlation functions of excitations}  \label{sec: correlation functions of excitations}
We abbreviate
\beq
\scrR_\sys =  \scrB(\scrB_1(\scrH_S)), \qquad    \scrR_\res =  \scrB(\scrB_1(\scrH_\res))
\eeq
Define, for $W,W' \in \scrR_\sys  \otimes  \scrR_\res$ the object
$$W\otimes_\sys  W'\in \scrR_\sys  \otimes  \scrR_\sys \otimes \scrR_\res $$
 as an operator product in $\res$-part and tensor
 product in $\sys$-part. Concretely, let $W=W_\sys \otimes W_\res$ and $W'=W'_\sys \otimes W'_\res$.
 Then 
$$W\otimes_\sys W':= W_\sys  \otimes W'_\sys  \otimes W_\res W'_\res.$$
and we extend this by linearity to arbitrary $W, W'$. 
Iterating this construction we define for $W_i \in \scrR_\sys  \otimes  \scrR_\res  $, $i=1,\dots,m$
$$ W_m\otimes_\sys \dots\ldots \otimes_\sys W_2 \otimes_\sys W_1 \in (\scrR_\caS)^{ \otimes^m}\otimes \scrR_\res .$$
(Note that no analysis problems arise since $\scrR_\sys$ is finite-dimensional.)
We define the `expectation' 
$$\bbE:(\scrR_\sys)^{ \otimes^m}\otimes  \, \scrR_\res
\,\,\rightarrow \,\,(\scrR_\sys)^{ \otimes^m}$$ as
$$
\bbE (W) J:={\Tr}_{\res}  [W (J \otimes P_{\Om})], \qquad  J \in  (\scrB_1(\scrH_\sys))^{ \otimes^m}
$$
Obviously, the action of $\bbE$ is extended to unbounded $W$ satisfying $W((\scrB_1(\scrH_\sys))^{ \otimes^m}  \otimes P_{\Om}) \in \scrB_1(\scrH_\sys^{ \otimes^m}  \otimes \scrH_\res)$. In particular, by (an obvious generalization of) Lemma \ref{lem: fake functional calculus}, we can consider $W= U_{\tau_m}\otimes_\sys \ldots \otimes_\sys U_{\tau_1}$ for any $m$-tuple of times $\tau_1, \ldots, \tau_m$.  For example, we
  rewrite the definition of $T$ as $T = \bbE(U_{\tau})$ (this is with $m=1$).  Let $A=\{\tau_1,\tau_2,\dots, \tau_m\}\subset \bbN$ with the convention that
$\tau_i<\tau_{i+1}$ and define the `time-ordered {correlation function}'
\beq
G_A :=    \bbE \left( B_{\tau_m}\otimes_{\sys}  B_{\tau_{n-1}}\otimes_{\sys}\dots\otimes_{\sys} B_{\tau_1} \right)   \in   (\scrR_S)^{ \otimes^m}   \label{eqfirst definition of correlation function}
\eeq
Note that $G_A  =0$ when the set $A$ is a singleton, as follows directly from $B_\tau = U_\tau- \bbE(U_\tau) \Rightarrow \bbE(B_{\tau})=0$, and $G_A = G_{A+\tau}$ because $ \e^{-\i t L_\res} \initialresfinite=\initialresfinite$.

It will be convenient to label the $\scrR_\sys$'s and to drop the subscript $\sys$ (since we will rarely need $\scrR_\res$). Therefore,
let $\scrR_\tau$, $\tau \in \bbN$ be copies of $\scrR_\sys$ and let $A \subset \bbN$ be as above. We define $\scrR_A$ by
$$
\scrR_A:=\scrR_{\tau_m}\otimes \scrR_{\tau_{m-1}}\otimes\dots \otimes \scrR_{\tau_{1}}.
$$
Obviously,  $\scrR_A$ is naturally isomorphic to $\scrR^{\otimes^m}$ by identifying the right-most tensor factor to $\scrR_1$, the next one to $\scrR_2$, etc. We denote this isomorphism from $\scrR^{\otimes^m}$ to $\scrR_A$ by $\bsI_A$ and we will from now on write $G_A$ to denote $\bsI_A[G_A]$ since $G_A$ acting on the unlabeled space $\scrR^{\otimes^m}$ will not be used.  Similarly we will most often abbreviate $\bsI_{\tau}[ V ]$ ($V \in \scrR$) by $ V_\tau$, which will  lead to a slight abuse of notation, see below and in Section \ref{sec: initial state and observable}.
Consider a collection $\caA$ of disjoint sets $A$ , then each of the spaces $\scrR_{A \in \caA}$ is naturally embedded into $\scrR_{\supp \caA}$, where $\supp \caA= \cup_{A \in \caA} A$.  Consequently, given a  collection of operators $K_A \in \scrR_A, A \in \caA$, we can define 
\beq
\otimes_{A \in \caA}   K_A      \in \scrR_{\supp \caA}
\eeq
In particular, we have $\scrR_A = \otimes_{\tau \in A} \scrR_\tau$.    
In the literature on quantum lattice systems, where similar constructions are necessary, one usually identifies $\scrR_\tau$ with the subspace $\ldots \otimes \lone \otimes \lone \otimes \scrR \otimes \lone \otimes \lone \otimes \ldots $ of the infinite tensor product, such that, for example,  $\otimes_{A \in \caA}   K_A$ is simply written as $\prod_{A} K_A$.  However, we chose to have the tensor products explicit in the notation.

%
%
%
 
We  define the "contraction operator" $\caT: \scrR_A\to \scrR$, by first giving its action on elementary tensors. Consider a family of operators $V_{\tau}  \in \scrR$,  and set 
\beq
\caT \left[  \mathop{\otimes}\limits_{\tau \in A} V_{\tau}     \right] =V_{\tau_m} V_{\tau_{m-1}}\dots V_{\tau_{1}}, \qquad  \textrm{where}  \qquad  \tau_m > \tau_{m-1} > \ldots  > \tau_1  \label{eq: def contraction}
\eeq
and then extend linearly to the whole of $\scrR_A$.  On the LHS, we abbreviated $\bsI_\tau[V_\tau] $ by $V_\tau $.  

By expanding $U_{\tau}=T\otimes 1+B_\tau$ for every $\tau \in I_n:=  \{1, \ldots,n \} $ in the expression for the reduced dynamics \eqref{eq: first encounter reduced q}, we arrive at
\beq
Q_{n/\la^2} =       \sum_{A  \subset I_n  }   \caT   \left[   G_A \mathop{\opprod}\limits_{ \tau \in I_n \setminus A} T_\tau       \right].   \label{eqZ from correlation functions} 
\eeq
Note that, for each operator appearing in the tensor product, we have specified the space $\scrR_A$ on which it acts.  In contrast, 
the order in which we write the factors inside the $\caT[\cdot]$  does not have any meaning. 

\begin{remark}\label{rem: category}
The above construction with tensor product spaces $\scrR_A$ and the contraction operator $\caT$ does of course not depend on the fact that the spaces are indexed by elements of $\bbN$. The only requirement is that the index set is ordered. In particular, in Section \ref{secexpansions}, we will use the same formalism, but now with copies of $\scrR$ indexed by (a finite number of) times $t_i \in \bbR^+$.  
\end{remark}

\subsubsection{Polymer expansion for cumulants}   \label{sec: connected correlation functions}

The cumulants or "connected correlation functions", denoted by $G^c_A$, are defined to be operators in $\scrR_A$ satisfying $G^c_{\{\tau\}} = G_{\{\tau\}}=0$ and
\beq
G_{A'} =   
   \sum_{\scriptsize{\left.\begin{array}{c}   \textrm{partitions}\,  \caA \, \textrm{of}\, A'    \end{array} \right.  }}  \left( \mathop{\opprod}\limits_{A \in \caA}  G^c_A  \right)   \label{def: cumulants}
\eeq
The tensor product in this formula makes sense since $ \scrR_{A'} = \otimes_{A \in \caA} \scrR_A$ whenever $\caA$ is a partition of $A'$ (cfr.\ previous section).
Note that this definition of connected correlation functions reduces to the usual probabilistic definition when all operators that appear are numbers and the tensor product can be replaced by multiplication. 
Just as in the probabilistic case, the relations \eqref{def: cumulants} for all sets $A'$ fix the operators $G^c_A$ uniquely since the formula \eqref{def: cumulants} can be inverted. 

Plugging \eqref{def: cumulants} into \eqref{eqZ from correlation functions}, we get
\beq
Q_{n/\la^2} =       \sum_{\caA \in \frB^0_n }   \caT   \left[  \mathop{\opprod}\limits_{A \in \caA}  G^c_A   \,  \mathop{\opprod}\limits_{ \tau \in I_n \setminus \supp \caA} T_{\tau}    \right]  \label{eqZ from connected correlation functions} 
\eeq
where $\frB^0_n$ is the set of disjoint collections of subsets of $I_n$. 
The formula \eqref{eqZ from connected correlation functions} 
 is the starting point of our analysis.    
Note that in the case where $ \caA$ contains at least one set $A$ that is not a discrete interval (a consecutive set of integers), there is no obvious way to write the RHS of \eqref{eqZ from connected correlation functions} as an operator product. This was the main motivation for introducing the formalism with tensor products and the contraction $\caT$. 
Following a standard terminology in statistical mechanics we call the sets $A\subset I_n$ {\it polymers} and the function $ G^c_A $
{\it weight} of the polymer. Unlike in statistical mechanics this weight is operator-valued and our
objective is to manipulate this expansion to arrive to standard scalar valued weights.

\subsubsection{Notation for combinatorics}   \label{sec: notation for combinatorics}
We gather some notation (partially already introduced above) that will be used throughout. 
We let
$ I_n =\{1, \ldots, n \} $
which is interpreted as the set of discrete times, often denoted by $\tau, \tau'. \ldots$.   It will be convenient to add two "boundary elements" to this set, which are represented by the times $\initial$ and $\final$, hence we set
\beq
\breve{I}_{n} =   I_n   \cup \{\initial, \final \}
\eeq

The set $\frB_n$ is the set of collections of subsets of $I_n$, i.e.\  $\frB_n= 2^{(2^{I_n})}$. Elements of $\frB_n$ will mostly be denoted by $\caA=\{A_1, A_2, \ldots, A_m \}$ with $A_{i} \subset I_n$.  We define two important subsets of $\frB_n$: $\frB^{0}_n$ is the set of  collections of mutually disjoint sets in $I_n$, and $\frB^1_n$ is the set of collections of sets such that the distance between any of them is greater than $1$; 
\beq
\frB^{j}_n :=  \{ \caA  \subset 2^{I_n} \,  \big\str \,   \forall A, A' \in \caA:   (A \neq A' \Rightarrow   \distance (A, A')  > j ) \}, \qquad   j=0,1
\eeq
where $\distance (A, A')  =\min_{\tau \in A, \tau'\in A'} \str \tau-\tau'\str$.
Similarly, we define $\breve{\frB}_{n}, \breve{\frB}^0_{n}, \breve{\frB}^1_{n}$ starting from $\breve{I}_{n} $ instead of $I_n$.
Intervals in $ I_n$ and  $\breve I_n$ are sets of consecutive numbers. 
For any $A$, we say that the intervals $J_1\ldots, J_m$ are the maximal intervals in $A$ iff.\   $\cup_j J_j =A$ and the collection $\{ J_1, \ldots, J_m\} $ is in $  \frB^{1}_n$ or $\breve  \frB^{1}_n$.

We define in general the support of $\caA  \in \breve\frB_n$ as 
\beq
\supp \caA = \cup_{A \in \caA} A
\eeq
and the 'span' of sets as
\beq
\spann A:= \{\min A, \ldots, \max A \}, \qquad   \spann \caA = \spann (\supp\caA) 
\eeq
that is  $\spann A, \spann \caA$ is the smallest interval that contains $A,\supp \caA$, respectively. 
The size of $\spann A$, $\supp \caA$ is called the diameter of $A$, $\caA$, denoted by 
\beq
\dist(A) := 1+  \max A -\min A, \qquad \dist(\caA)= 1+  \max \supp\caA -\min \supp\caA
\eeq

%

\subsubsection{Initial state and observable} \label{sec: initial state and observable}

To deal with the observable $O$ and initial state $\rho_0$ in a convenient way we also define the operators $U_{\initial}, U_{\final}$
  \baq
 U_{\initial}  \rho & := &   \e^{(\ka/2)N}{\caW}(\psi_{\realinitial}) \rho  \caW^*(\psi_{\realinitial})   \e^{(\ka/2)N},  \\[1mm]  U_{\final} \rho & :=&   \caW(\e^{\i (n/ \la^{2})\str q\str }\psi_{\realfinal})  \rho
  \eaq  
where we wrote $\caW(\psi)$ instead of $\lone \otimes \caW(\psi)$. Note that $U_{\final}$ depends on the total macroscopic time $n$, which is a notational drawback of our formalism.
We introduce a modified reduced dynamics  $\breve Q_{n}$,  as
\beq\label{modifieddynamics}
  \breve{Q}_{n}  \rho_{\sys}      :=    \Tr_{\res} \left[    U_{\final}   \e^{\i  (n/\la^2) L_\res}   \e^{-\i (n/\la^2)  L_{\ka}} U_{\initial} (\rho_\sys \otimes P_{\Om} )    \right]
\eeq
such that we have 
\beq
Z_n(O, \rho_0, \ka)= \Tr_\sys [O_\sys  \breve Q_{n}    \rho_{\sys,0}]   \label{eq: z as reduced breve}
 \eeq
which generalizes \eqref{eq: first encounter simple z} and reduces to the latter when $\psi_{\realinitial}=\psi_{\realfinal}=0$. To check \eqref{eq: z as reduced breve},  note that  $\caW(\e^{\i t \str q\str }\psi_{\realfinal} )= \e^{\i t H_\res } {\caW}(\psi_{\realfinal}) \e^{-\i t H_\res } $. 

It is straightforward to extend the formalism of Sections \ref{sec: correlation functions of excitations}-\ref{sec: connected correlation functions}  to incorporate the times $\initial, \final$. We define
\baq
 T_{\initial}&:=&   \bbE(U_{\initial})=  \left\langle  \e^{(\ka/2)N}   {\caW}(\psi_{\realinitial}) \Om,   \e^{(\ka/2)N} {\caW}(\psi_{\realinitial}) \Om \right\rangle  \lone \label{Tinit}\\
  T_{\final}&:=&   \bbE(U_{\final})=\left\langle   \Om,  {\caW}(\psi_{\realfinal})   \Om\right\rangle  \lone \label{Tfinal}
 \eaq
 and set
 \beq
  B_{\tau} :=     U_{\tau} -T_{\tau}, \qquad  \tau=\initial,\final.
\eeq

Next, we define copies $\scrR_{\initial}, \scrR_{ \final}$ of $\scrR$, and also $\scrR_A$ for $A \cap \{\initial, \final \} \neq \emptyset$, completely analogous to the construction in Section  \ref{sec: correlation functions of excitations}, such that the definition of (connected) correlation functions $G^{c}_A$ extends to all  $A \subset \breve I_n$.
 In our expansion, we will need $\bsI_{\initial}[T_{\initial} ], \bsI_{\final}[T_{\final} ]$ which we write simply as  $T_{\initial}, T_{\final}$. 
 With these definitions,  the representation \eqref{eqZ from connected correlation functions} is generalized to 
 \beq
\breve Q_{n} =       \sum_{\caA \in \breve \frB^0_n }   \caT   \left[  \mathop{\opprod}\limits_{A \in \caA}  G^c_A   \,  \mathop{\opprod}\limits_{ \tau \in \breve I_n \setminus \supp \caA} T_{\tau}    \right]  \label{eqZ from connected correlation functions boundary} 
\eeq

\subsubsection{Norms }  \label{sec: norms}

Let us introduce some conventions. For $S$ acting on $\scrH_\sys$, we write 
\beq
\norm S \norm := \sup_{\psi \in \scrH_\sys, \norm \psi \norm =1}  \norm S \psi \norm 
\eeq
and for $E$ acting on $\scrB(\scrH_\sys)$, we write
\beq
\norm E \norm  :=    \sup_{\rho \in \scrB_1({\scrH_\sys}), \norm \rho  \norm_1=1}   \norm  E(\rho)  \norm_1, \qquad   \textrm{with}  \, \,   \norm \rho  \norm_1= \Tr \str \rho \str,
\eeq
i.e.\ the natural operator norm on $\scrB(\scrB_1(\scrH_\sys))$. 

For $E \in \scrR_A, \str A \str >1$, we exploit that $E$ can be  written  as a finite sum of elementary tensors  $ $, i.e.\
\beq
E = \sum_\nu     E_\nu,   \qquad E_{\nu}=\otimes_{\tau \in A} E_{\nu,\tau}, \qquad  E_{\nu,\tau} \in \scrR_\tau,
\eeq
to  define
\beq \label{def: weird norm}
\norm E \normw:= \inf_{\{ E_\nu \}}  \sum_{\nu} \prod_{\tau \in A} \norm E_{\nu, \tau} \norm  
\eeq
where the infimum ranges over all such elementary tensor-representations of $E$.
This norm  is useful because of  the following properties (trivial from the definition):  
\ben
\item 
For any family of operators  $K_{A \in \caA}$ with  $K_A \in \scrR_A$ and $\caA$ a collection of disjoint sets, i.e.\ $\caA \in \breve\frB^0_{n}$,  we have
\beq
\left \norm    \mathop{\opprod}\limits_{A \in \caA} K_A   \right \normw  \leq    
\mathop{\prod}\limits_{A \in \caA}  \norm K_A \norm_{\weird}    
\label{eqbound w norm}
\eeq
\item For any  $K_A \in \scrR_A$, 
\beq
\left \norm   \caT \left[  K_A  \right] \right \normw  \leq    
\left \norm   K_A   \right \normw 
\eeq
\een

\subsection{Bounds on operator-valued polymers}  \label{sec: bounds on operator valued polymers}

We aim to set up a perturbative scheme where the $G^c_{A}$ will describe small corrections to  $T$, and hence we must specify in what sense the operators $G^c_{A}$ are small.  

We will often need to distinguish between \emph{bulk} polymers, i.e.\ subsets of $I_n$, and boundary polymers, i.e.\  subsets $A \subset \breve I_n$ for which $A \cap \{ \initial, \final\} \neq \emptyset$.  
Lemma \ref{lem: bound on primary polymers}  gathers the necessary properties of operator valued polymers. 
To relate these properties to assumptions on the original model, we introduce the 'renormalized coupling constants'
\beq
\ep :=   \str \la \str^{2\min{(\al,1)}}, \qquad   \breve \ep :=   \max(\str \la \str,\ep)  \label{def: renormalized couplings}
\eeq
We use in general  $C,c $ to denote constants that can depend on all model parameters except the coupling constant $\la$, conjugation parameter $\ka$, the macroscopic time $n$ and the initial state and observable. By $\breve C, \breve c$ we denote constants that can also depend on the initial state and the observable (but not on $\la, \ka$ or the macroscopic time $n$).   
For operators $J,J' \in \scrB(\scrH_\sys)$, we use the Hilbert-Schmidt scalar product $\langle J, J' \rangle= \Tr [ J^* J'] $.
Recall the diameter of $A$, $\dist(A)$,  and write $\dist(A)^{\al} := (\dist(A))^{\al}$.

\begin{lemma}\label{lem: bound on primary polymers}
For sufficiently small $\str \la \str, \str \ka \str$, but $\la \neq 0$, the following hold uniformly in $\la,\ka$ (i.e.\ $\gap_T$ and all constants can be chosen independent of $\la,\ka$)
\ben
\item 
The  operator $T$  has  a simple eigenvalue $\e^{\theta_T}$ with $\theta_T= \theta_T(\la, \ka)$ and  a gap $ \gap_T >0$, in the sense that 
\beq \label{eq: iterates of t}
\norm T^m - \e^{m\theta_T}R  \norm \leq C_T  \e^{ m(\str\theta_T\str-\gap_T)}, \qquad  m \in \bbN
\eeq
where $R=R(\la, \ka)$ is the (one-dimensional) spectral projector associated to the eigenvalue $\e^{\theta_T}$. 
 Moreover, $\theta_T(\la, \ka=0)=0$ and $\lim_{\la\to 0} \theta_T(\la, \ka)=0$. $R$ is of the form
 \beq
R =  \str \tilde \eta \rangle  \langle \eta  \str.  \label{eqr explicitly}
\eeq
where, for  $\ka=0$, we can choose $\tilde \eta=1$ and $\eta$ a density matrix.  For $\ka \in \bbR$, we can choose $\eta, \tilde \eta$ positive.
\item  Let $\ep,\breve\ep$ be as in  \eqref{def: renormalized couplings}.  The bulk polymers satisfy   \beq    
\mathop{\max}\limits_{\tau \in I_n}  \sum_{A \subset I_n : A \ni \tau }  (C\ep)^{ { -(\str A \str-1) } }   \dist(A)^{\al}   \norm G^{c}_A  \norm_{\weird}   \leq   1   \label{eqbound operator polymers}
\eeq
For boundary polymers, take $\tau=\initial,\final$, then,
\beq   
\sum_{A \subset \breve I_n: A \ni \tau, A \neq \{ \initial,\final\} }  (C\breve\ep)^{  -\str A \cap I_n\str}   \dist(A)^{\al}    \norm G^{c}_A  \norm_{\weird}   \leq   \breve C   \label{eqbound operator polymers boundary}
\eeq
For $A= \{ \initial,\final\}$ (excluded from the sum above), we have $\dist(A)^{\al}    \norm G^{c}_A  \norm_{\weird} \leq \breve C$.
\een

\end{lemma}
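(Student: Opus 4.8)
\textbf{Proof plan for Lemma \ref{lem: bound on primary polymers}.}

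The plan is to establish the two parts separately, but both rest on a single analytic input: a Dyson expansion of the excitation operators $B_\tau$ and of the reduced-dynamics building block $T=\bbE(U_\tau)$ in terms of the time-ordered integrals \eqref{eq: duhamel on superspace}, combined with the decay of the free-field correlation function furnished by Assumption \ref{ass: decay correlation functions}. I would first record the structural observation that when one expands $U_\tau = \e^{\i(\tau/\la^2)L_\res}\e^{-\i(1/\la^2)L_\ka}\e^{-\i((\tau-1)/\la^2)L_\res}$ via Lemma \ref{lem: fake functional calculus}, each term is a product of $\Phi_{\pm\ka}(\phi,s)$-type field operators integrated over the mesoscopic interval $\Dom(\tau)=(1/\la^2)[\tau-1,\tau]$, and that applying $\bbE$ (i.e.\ taking $\Tr_\res[\,\cdot\,(J\otimes P_\Om)]$) contracts these field operators into products of the scalar kernels $h(s-s')=\int \d q\, \e^{-\i(s-s')\str q\str}\str\phi(q)\str^2$ (and, for $\ka\neq 0$, shifted variants $e^{\pm\ka}$-dressed versions of $h$, which are controlled the same way for small $\str\ka\str$). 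The operator $G_A$ with $A=\{\tau_1,\dots,\tau_m\}$ is then a sum over all ways of pairing field operators living in the blocks $\Dom(\tau_i)$, and, by construction of $B_\tau=U_\tau-\bbE(U_\tau)$, in $G_A$ there is no block that is entirely internally paired — every block is connected to another by at least one $h$-line; after passing to the cumulants $G^c_A$ via \eqref{def: cumulants} one gets genuine connectedness of the pairing graph across all of $A$.

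For part (1), the eigenvalue/gap statement for $T=Q_{1/\la^2}$: at $\la=0$ and $\ka=0$ one has $T=\e^{-\i L_\sys/\la^2}$ restricted appropriately, which is not yet contracting, so the correct zeroth-order object is the Davies generator. I would invoke Proposition \ref{prop: weak coupling} (the weak-coupling/Fermi-Golden-Rule limit), which gives $T = \e^{-\i L_\sys/\la^2}\big(\e^{M} + o(1)\big)$ as $\la\to 0$, where $M$ is the Davies generator whose dissipative part has jump rates $j(e,e')$; Assumption \ref{ass: Fermi Golden Rule} guarantees that $\e^{M}$ has a simple eigenvalue $1$ with all other spectrum strictly inside the unit disk, the eigenprojection being $\str 1\rangle\langle\eta\str$ with $\eta=P_{e_0}$ the atomic ground-state projector (in the $H_\sys$-eigenbasis, after dephasing the off-diagonals decay). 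Analytic perturbation theory of the isolated simple eigenvalue (Kato, \cite{katoperturbation}) then upgrades this to the quantitative bound \eqref{eq: iterates of t} with a $\la,\ka$-independent gap $\gap_T$, since the perturbation is small uniformly; the form \eqref{eqr explicitly} and the positivity/normalization of $\eta,\tilde\eta$ for real $\ka$ follow from the fact that $T$ is completely positive for real $\ka$ (it is $\Tr_\res$ of a conjugation-by-$\e^{\ka N/2}$ of a CP map) together with Perron–Frobenius. The statements $\theta_T(\la,0)=0$ and $\theta_T(\la,\ka)\to 0$ are: the former because at $\ka=0$, $T$ is trace-preserving so $1$ is an eigenvalue of the dephased map exactly; the latter by continuity in $\ka$ near $0$ uniformly in small $\la$.

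For part (2), the polymer bounds, the core estimate is: for each pairing line carrying an $h$-kernel between blocks $\Dom(\tau)$ and $\Dom(\tau')$, integrating $\str h(s-s')\str$ over $s\in\Dom(\tau), s'\in\Dom(\tau')$ yields a factor $\la^{-2}\cdot\la^{-2}$ from the block lengths times a decay $\sim(\distance(\Dom(\tau),\Dom(\tau')))^{-(1+\al)}=(\la^{-2}\str\tau-\tau'\str)^{-(1+\al)}$ coming from Assumption \ref{ass: decay correlation functions}, i.e.\ a net $\la^{4}\cdot\la^{2(1+\al)}$-type power after accounting for the explicit $\la$ in each $L_\inter$; bookkeeping these powers over a connected pairing graph on $\str A\str$ blocks gives exactly the weight $(C\ep)^{\str A\str-1}$ with $\ep=\str\la\str^{2\min(\al,1)}$ (the min accounts for whether the geometric series $\sum_r r\cdot r^{-(1+\al)}$ converges, i.e.\ $\al>1$, or only a truncated-at-$\la^{-2}$ sum is available, $\al\le 1$), while the residual spatial decay of the ``last'' line, after summing over the internal structure, supplies the $\dist(A)^{\al}$ factor and makes $\sum_{A\ni\tau}$ summable. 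The $\normw$-norm is the right object here because the Dyson expansion naturally presents $G^c_A$ as a sum of elementary tensors (one tensor factor $\scrR_\tau$ per block), so \eqref{def: weird norm} is estimated term-by-term. The sums over pairing graphs are controlled by standard tree-graph/connected-graph bounds. The boundary polymers are handled identically: $U_\initial$ contributes field operators paired against $\psi_\realinitial$ through $h_\ltimes(t)=\langle\phi,\e^{-\i\str q\str t}\psi_\realinitial\rangle$, $U_\final$ through $h_\rtimes$, and cross terms through $h_\Join$, all of which satisfy the same $(1+t)^{\al}$-decay by Assumption \ref{ass: initial states and observables}; this yields \eqref{eqbound operator polymers boundary} with the constant $\breve C$ absorbing $\norm O_\sys\norm$, $\norm\rho_{\sys,0}\norm_1$, and the $\frh$-norms of $\psi_\realinitial,\psi_\realfinal$, and the bare bound on $A=\{\initial,\final\}$ is just $\str h_\Join\str$ integrated/supped with its $(1+t)^{\al}$ weight.

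\textbf{Main obstacle.} I expect the hard part to be the uniformity in $\ka$ and $\la$ of everything simultaneously, and in particular making the combinatorics of the connected-pairing sum for $G^c_A$ yield precisely the advertised powers $(C\ep)^{\str A\str-1}\dist(A)^{\al}$ rather than something weaker: one must carefully separate, for each connected graph on the blocks of $A$, the ``spanning tree'' lines (which tie the blocks together and give the $\dist(A)^{\al}$ long-range decay once one optimizes which line bridges the full span) from the remaining lines (which must each be summed to a convergent geometric-type series producing a power of $\ep$), and this separation is delicate when $\al\le 1$ because individual $h$-lines are then \emph{not} absolutely summable over all of $I_n$ — one is forced to use the mesoscopic cutoff at scale $\la^{-2}$, which is exactly what converts $\str\la\str^2$ into $\ep=\str\la\str^{2\al}$. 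Carrying this through while also tracking that the $\e^{\pm\ka}$-dressing of the kernels does not spoil the decay (it does not, for $\str\ka\str$ small, since it only multiplies $h$ by a bounded factor and slightly shifts frequencies) is where the real work lies; the reference \cite{deroeckkupiainen} presumably contains the prototype of this estimate, and I would follow its structure.
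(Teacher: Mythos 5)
Your proof plan follows the paper's two-stage architecture essentially verbatim: part (1) is exactly Section \ref{sec: analysis of t} (invoke Proposition \ref{prop: weak coupling} to replace $T$ by $\e^{-\i\la^{-2}L_\sys + M}$, use Lemma \ref{lem: spectral analysis of m} for the gap, then Kato perturbation of the isolated eigenvalue, with trace-preservation forcing $\theta_T(\la,0)=0$); and part (2) is the Dyson/Wick expansion, the connected-pairing representation of $G^c_A$, the a-priori bound isolating a minimally spanning subset, and the tree-graph sum controlled by Lemma \ref{lemma: combi trick abstract}, exactly as in Sections \ref{secexpansions}--\ref{sec: proof of bound on operator polymers}. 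The boundary-polymer treatment via $h_\ltimes,h_\rtimes,h_\Join$ and the origin of $\ep=\str\la\str^{2\min(\al,1)}$ are also identified correctly.

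One inversion in your ``main obstacle'' paragraph would, however, derail the estimate if carried out as written: you assign the $\dist(A)^\al$ factor to the spanning-tree lines and say the $(C\ep)^{\str A\str-1}$ smallness comes from the ``remaining'' (non-tree) pairings. The paper does the opposite, and must. There may be \emph{no} non-tree pairings at all, so you cannot rely on them to produce $\str A\str-1$ factors of $\ep$; in Lemma \ref{lem: a priori} they are simply resummed unconstrained into the harmless factor $\e^{C\norm h\norm_1\str A\str}$. It is the exactly $\str A\str-1$ spanning-tree edges that must carry the entire weight: each edge factor $\hat e(\tau,\tau')$ satisfies the Kotecky--Preiss-type bound $\sum_{\tau'\neq\tau}\hat e_\al(\tau,\tau')\leq C\ep$ of \eqref{eqkotecky preiss for pairings}, delivering $(C\ep)^{\str\caE(\scrT)\str}=(C\ep)^{\str A\str-1}$, and the product $\prod_{\{\tau,\tau'\}\in\caE(\scrT)}(1+\str\tau'-\tau\str)^\al\geq\dist(A)^\al$ over those same edges delivers the polynomial factor. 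With that split corrected, the rest of your bookkeeping goes through.
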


The proof of this lemma is in Section \ref{secdiscretization}.

\subsection{Scalar polymers}  \label{sec: scalar polymer model}

The representation \eqref{eqZ from connected correlation functions} 
 evokes the picture of a leading dynamics $T$ interrupted by excitations, indexed by the sets $ A \in \caA$.  One could call this representation a polymer expansion, but it is not yet what we want because the values  of the polymers, i.e.\  $G^c_A$,  are operators.  To make them scalar, we exploit the dissipativity of the model, namely the fact (see Lemma \ref{lem: bound on primary polymers}) that the reduced dynamics $T$ has a maximal (in modulus), simple eigenvalue. Recalling
that the corresponding spectral projection is denoted by $R$ we write
\beq\label{Tdeco}
T= R T  +   R_{\perp} T, \qquad   RT  =  \e^{\theta_T}R
\eeq
where $ R_{\perp}:=1-R$ and we have $R T=TR$. Hence, in particular
\beq\label{TtauR}
R T   R_{\perp} T=R_\perp T   R T=0.
\eeq
We insert this decomposition into the expansion
\eqref{eqZ from connected correlation functions}: Let  $\caJ(\caA)=\{J_1,\dots, J_k\}$
be the family of maximal intervals in  $I_n \setminus \supp \caA$.
The relation  \eqref{TtauR} implies that each tensor product
$ \mathop{\opprod}\limits_{ \tau \in J_j} T_{\tau} $ in \eqref{eqZ from connected correlation functions} may be replaced by $ \mathop{\opprod}\limits_{ \tau \in J_j} (RT)_{\tau} 
+\mathop{\opprod}\limits_{ \tau \in J_j} (R_\perp T)_{\tau} $. Thus \eqref{eqZ from connected correlation functions}
becomes
 \beq
 Q_{n/\la^2} =       \sum_{\caA \in  \frB^0_n }  \sum_{\caJ\subset\caJ(\caA)} \caT   \left[  \mathop{\opprod}\limits_{A \in \caA}  G^c_A   \,  
 \mathop{\opprod}\limits_{ \tau \in  \supp \caJ} (R_\perp T)_{\tau} \mathop{\opprod}\limits_{ \tau \in I_n \setminus (\supp \caJ\cup \supp \caA)} (RT)_{\tau} 
   \right].  \label{eqZ from connected correlation functions1} 
\eeq
We will now produce a new family of polymers $\caA'=\caA'(\caA,J)$ by "fusing" some of the sets in the family $\caA\cup\caJ$.  Let $\Gamma=\Gamma(\caA,\caJ)$ be the graph with vertex set   $\caV(\Gamma)=\caA\cup\caJ$ and
edges 
$ \{S, S'\}$ whenever the sets $S$ and $S'$ are adjacent i.e. $\mathrm{dist}(S,S')=1$ (in particular this
implies that at least one of them is in  $\caA$). For each 
 connected component $\ga$ of $\Gamma$ write  $\caV(\gamma)=\caA_\ga\cup\caJ_\ga$,
set 
$A'_\ga:=\supp \caA_\ga\cup\supp \caJ_\ga$ and let $\caA'$ be the family of $A'_\ga$. 
 Defining
\beq
V((\caA_\ga,\caJ_\ga)) :=     \mathop{\opprod}\limits_{A \in \caA_\ga} G^c_{A} \mathop{\opprod}\limits_{\tau \in\supp \caJ_\ga}  (R_{\perp}T)_{\tau}, \qquad        \label{eqdef polymer weight big v with collections}
 \eeq
 the identity \eqref{eqZ from connected correlation functions1}  becomes
 \beq
Q_{n/\la^2} =       \sum_{\caA \in  \frB^0_n }  \sum_{\caJ\subset\caJ(\caA)} \caT   \left[  \mathop{\opprod}\limits_{\ga}  V((\caA_\ga,\caJ_\ga))   \,  
\mathop{\opprod}\limits_{ \tau \in I_n \setminus  \supp \caA'(\caA,\caJ)} (RT)_{\tau} 
   \right].  \label{eqZ from connected correlation functions boundary2} 
\eeq
The next step is to write \eqref{eqZ from connected correlation functions boundary2} as a sum
of families $\caA'$, i.e. to fix $\caA'$ and sum over $\caA$ and $\caJ$. 


\begin{definition} \label{def: scalar weights}  Let $\caA$ be a family of mutually disjoint subsets of $I_n$ and let $\caJ$ be a family of mutually disjoint and non-adjacent intervals in $I_n$ and disjoint from the sets in $\caA$, i.e.\ $\supp \caA \cap \supp \caJ = \emptyset$. 
We say that  $\caS=( \caA, \caJ) $ is a \emph{fusion}   iff. 
\ben
\item       $\distance( I_n \setminus \supp \caS, \supp \caJ  )>1$ where
 $\supp \caS:=\supp \caA\cup\supp \caJ$. (Pictorially; the intervals $J$ are in the "interior" of $\supp\caS$.)
\item  The graph $\Gamma(\caA,\caJ)$ is connected.   
\een
The set of fusions is denoted by $ \frS_n^f $.
\end{definition}
Defining
\beq
 {\scriptsize \Si}V (A') :=    \sum_{ \caS=(\caA,\caJ) \in \frS^f_n: \supp \caS =  A' }  V((\caA,\caJ))        \label{eqdef polymer weight big v}
\eeq
we obtain the representation for $ Q_{n/\la^2}$  in terms of fusions as (we drop the prime from
$A'$ and $\caA'$)
\beq
 Q_{n/\la^2} = \sum_{\caA \in \frB^1_{n}}   \caT   \left[     \mathop{\opprod}\limits_{ \tau \in I_n \setminus \supp \caA} (RT)_{\tau} \,  \mathop{\opprod}\limits_{A \in \caA}  
 {\scriptsize \Si}V(A)    \right]   . \label{eq: q in terms of primed a}
\eeq
Note that by construction $ \caA \in \frB^1_{n}$ since the sets $A'_\ga$ above are non-adjacent i.e. their mutual distances are at least $2$. Hence, for any $\caA$ in the formula above, all $\tau$ that are adjacent to $\supp\caA$ carry the projector $RT$.  
A pictorial way to phrase this is that any $\caA$ in \eqref{eq: q in terms of primed a}  is surrounded by projections $R$.  We exploit this by defining
\beq
\hat v(A') :=       \caT \left[  {\scriptsize \Si}V(A')  \bigotimes_{\tau \in I_n \setminus A'}  R_{\tau}  \right], \qquad  \hat v(A') \in \scrR        \label{eqdef polymer weight v}
\eeq
Note that $\hat v(A')$ is a multiple of $R$ unless $A' \ni 1$ and/or $A' \ni n$. To eliminate these boundary effects, we consider  $ R Q_{n/\la^2} R$ instead of $  Q_{n/\la^2} $, for the time being.  Then \eqref{eq: q in terms of primed a} implies
\beq
 R Q_{n/\la^2} R =  \sum_{\caA \in \frB^{1}_{n}}   (RT)^{(n-\str \supp \caA \str) }    \prod_{A \in \caA}     (R  \hat v(A)  R)    \label{eqpolymer with tildes}  \eeq
 where the product on the RHS is commutative since the projection $R$ is one-dimensional.
It is convenient to extract the contribution due to the  $RT$, by defining (recall $\eta,\tilde\eta$ from \eqref{eqr explicitly})
  \beq
v(A)   =     \e^{- \str A \str \theta_T}  \langle \tilde\eta,\hat v(A) \eta \rangle 
\eeq
Indeed, since $\langle \tilde \eta,  \eta \rangle =1$ (because $RR=R$), we get
\beq
 R Q_{n/\la^2} R = R e^{n \theta_T}    \sum_{\caA \in \frB^1_{n}}    \prod_{A \in \caA}       v(A)      \label{eq: toy polymer}
\eeq
and the RHS is indeed a 'scalar polymer representation', i.e.\ the weights  $v(A)$ are numbers,  and we can study it with the help of a cluster expansion.
Note for later use that
\beq \label{eq: z empty}
\Tr R Q_{n/\la^2}R = Z_n(\tilde\eta\otimes \lone,\eta \otimes P_{\Om},  \ka). 
\eeq
Our real object of interest however is the partition function \eqref{eq: z as reduced breve}
expressed in terms of the operator
\eqref{eqZ from connected correlation functions boundary}. We proceed with the latter as with 
$Q_{n/\la^2} $. Decompose as in  \eqref{Tdeco}
\beq
T_\tau= R T_\tau  +   R_{\perp} T_\tau, \qquad   \tau\in\{\initial,\final\}
\eeq
where $R$ is the same operator as in  \eqref{Tdeco}.
Since 
$T_{\initial}, T_{\final}$ are proportional to the identity operators, we have again $R T_\tau=T_\tau R$.
As a consequence, for all $\tau,\tau'\in\breve I_n$ 
\beq\label{TtauR2}
R T_\tau   R_{\perp} T_{\tau'}=R_\perp T_\tau   R T_{\tau'}=0.
\eeq
We end up with the expansion like \eqref{eq: q in terms of primed a} for $\breve Q_{n} $
with the expected difference that
we have to replace $I_n$ by  $\breve I_n$ and $ \frB_n^1$ by $\breve  \frB_n^1$ (also in the definition of fusions). 

For $Z_n$ of \eqref{eq: z as reduced breve} we get the scalar valued polymer expansion as in
\eqref{eq: toy polymer} with the change that the weights $v(A)$  for $A\cap \{\initial,\final\}\neq\emptyset
$ have to be slightly modified because the weight of the polymer is influenced by the initial state and the observable. 
We keep the definition of $\hat v(A)$ given in \eqref{eqdef polymer weight v}.  Then, recall that $R= \str \eta \rangle \langle \tilde \eta \str$ and define
\beq
 v(A)  := \e^{-\str A\cap I_n \str  \theta_T }  k^{-1}_A \left\{ \begin{array}{lr}         \langle \tilde\eta,  \hat v(A)  \rho_{\sys,0}  \rangle & \qquad      \initial \in A,  \final \notin A   \\[3mm]
 \langle O_\sys,   \hat v(A) \eta \rangle &    \final \in A,   \initial \notin A   \\[3mm]
 \langle  O_\sys,  \hat v(A)  \rho_{\sys,0}  \rangle &     \{\initial, \final \}  \in A
      \end{array} \right.    \label{def: boundary values}
\eeq
where 
\beq
k_A:=  \left(\indicator_{[\initial \in A]} k_{\realinitial} + \indicator_{[\initial \not\in A]}    \right)  \left(\indicator_{[\final \in A]} k_{\realfinal}+ \indicator_{[\final \not\in A]}    \right) 
\eeq
and 
\baq
k_{\realinitial}   &:= &             \langle \tilde\eta,  T_{\initial}\rho_{\sys,0} \rangle
= \Tr_\sys( \tilde\eta, \rho_{\sys,0} )
\langle  \e^{(\ka/2)N}   {\caW}(\psi_{\realinitial}) \Om,   \e^{(\ka/2)N} {\caW}(\psi_{\realinitial}) \Om \rangle\label{kinitial}\\
 k_{\realfinal}  &:= &     \langle O_\sys, T_{\final} \eta  \rangle 
 =  \Tr_\sys(O_\sys \eta) \Tr_\res ({\caW}(\psi_{\realfinal})P_\Om)\label{kfinal}
\eaq
With these definitions, we arrive at
\beq
Z_{n} =    \e^{ n \theta_T  }  k_{\realinitial} k_{\realfinal} \sum_{\caA \in \breve\frB^1_n}       \prod_{A \in \caA}  v(A)    \label{eqscalar polymer model boundaries}
\eeq
%
which is our representation of $Z_n$ as a polymer model. 
A few remarks:
\ben
\item The case where the factors $k_\realinitial, k_\realfinal$ vanish, will be of no concern, as we will explain later, following Lemma \ref{lem: convergence invariant state}.
\item
Up to now, we did not indicate the macroscopic time $n$ in the notation. Let us, only in the next lines, indicate this dependence by a superscript. Consider $A$ such that  $\max A -1 <n< n'$, then 
\beq \label{eq: independence macro}
v^{(n)}(A) = v^{(n')}(A).
\eeq 
Furthermore, $ v^{(n)} (A+\tau)=v^{(n)}(A)$ for $\tau \in \bbN$, as long as $0 <\min A,  \max A +\tau -1\leq n$.  If we allow polymers to contain 
the final time $n+1$ but not the initial time $0$, i.e.\ $0 < \min A, \max A -1 \leq n < n'$, then we have 
\beq \label{eq: independence final}
v^{(n)}(A) = v^{(n')}(A+(n'-n)).
\eeq 
These properties follow from the property $G_A=G_{A+\tau}$ and the expression for the 'final-time' operators $U_{\final}, U_{n'+1}$.
\item  If we choose $\rho_0= \eta \otimes P_{\Omega}$, then $v(A)=0$ whenever $ 0 \in A$. Indeed, Indeed, in that case we can substitute $\lone$ for $U_0$, hence $B_0=0$, and we have $R_{\perp} \rho_{\sys,0}=0$ because $\rho_{\sys,0}=\eta$.
By analogous reasoning, we check that if we choose $O=\tilde \eta \otimes \lone$, then $v(A)=0$ whenever $ n+1 \in A$. 
Note that these two observations are consistent with (\ref{eq: toy polymer},\ref{eq: z empty}) which tell us that one can omit the boundary polymers.
\een
We finish
this section by a pictorial illustration of the construction.
Each term in the representation of $Q_{n/\la^2}$ in \eqref{eqZ from connected correlation functions} can be represented by a picture as in \eqref{fig: splittingintoteng}. The horizontal axis is the time-axis which has been divided into intervals of the form $\la^{-2}[\tau-1,\tau]$ by the discretization procedure. Each term in \eqref{eqZ from connected correlation functions} is specified choosing a collection $\caA$, this is indicated on our picture by connecting the $\tau$ that belong to the same $A \in \caA$ by straight diagonal lines above the axis.  The times $\tau \in \supp \caA$ are drawn with the symbol \includegraphics[width = 0.8cm, height=0.2cm]{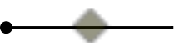} and the $\tau \not \in \supp \caA$ are drawn with a \includegraphics[width = 0.8cm, height=0.2cm]{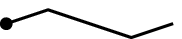}.  

 As to the operator value of this picture, each \includegraphics[width = 0.8cm, height=0.2cm]{stukjeT.eps} corresponds to $T$, and each connected component of \includegraphics[width = 0.8cm, height=0.2cm]{stukjeG.eps}
corresponds to an operator $G^c_A$. 

\begin{figure}[h!] 
\vspace{0.5cm}
\begin{center}
\def\svgwidth{\columnwidth}
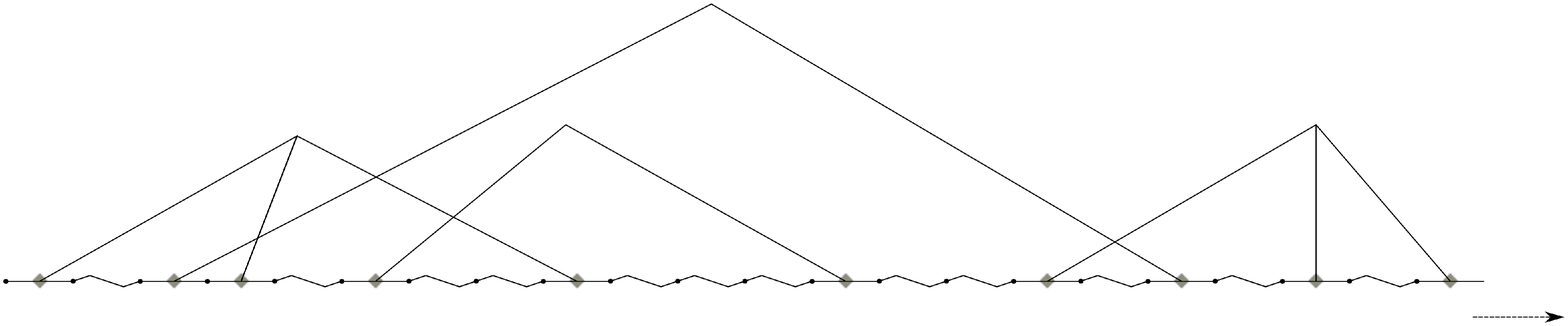
\caption{ \footnotesize{Representation of a term in the sum \eqref{eqZ from connected correlation functions} }   \label{fig: splittingintoteng} }
\end{center}
\end{figure}

We now generate new pictures by splitting
\beq
T= TR+ TR_{\perp} = \left.\begin{array}{ccccc}  \includegraphics[width = 0.8cm, height=0.2cm]{stukjeT.eps}  &=&  \includegraphics[width = 0.8cm, height=0.2cm]{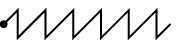}   & + &  \includegraphics[width = 0.8cm, height=0.2cm]{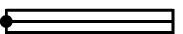}   \end{array}\right.
\eeq
In this way we obtain $2^{\ell}$ pictures from Figure \ref{fig: splittingintoteng}, with $\ell$ the number of \includegraphics[width = 0.8cm, height=0.2cm]{stukjeT.eps}.  One of these is the upper picture in Figure \ref{fig: splittingoftingraph}. 
 Note that splittings in which one \includegraphics[width = 0.8cm, height=0.2cm]{stukjeTR.eps}  is adjacent to an \includegraphics[width = 0.8cm, height=0.2cm]{stukjeTRperp.eps} do not contribute because $R R_{\perp}=0$. Therefore, we are in fact choosing the splitting for each maximal interval in $I_n \setminus \supp \caA$. 
In a next step, we  do not distinguish between excitations that originate from $B_{\tau}$ or $TR_{\perp}$ and we write simply 

 \includegraphics[width = 0.8cm, height=0.2cm]{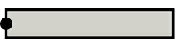}  for  both \includegraphics[width = 0.8cm, height=0.2cm]{stukjeTRperp.eps}  and    \includegraphics[width = 0.8cm, height=0.2cm]{stukjeG.eps}. 
 Moreover, we fuse adjacent  \includegraphics[width = 0.8cm, height=0.2cm]{stukjeGplusTRperp.eps} to form new polymers $A'$, the lower picture in Figure \ref{fig: splittingoftingraph} shows the result of fusing the upper picture.   As a result, these new polymers are surrounded by  \includegraphics[width = 0.8cm, height=0.2cm]{stukjeTR.eps}, and since those correspond to one-dimensional projectors, the operator-valued contribution of a new polymer to the total sum is independent of the presence of any other polymers. This is why the new representation is {scalar}.

\begin{figure}[h!] 
\vspace{0.5cm}
\begin{center}
\def\svgwidth{\columnwidth}
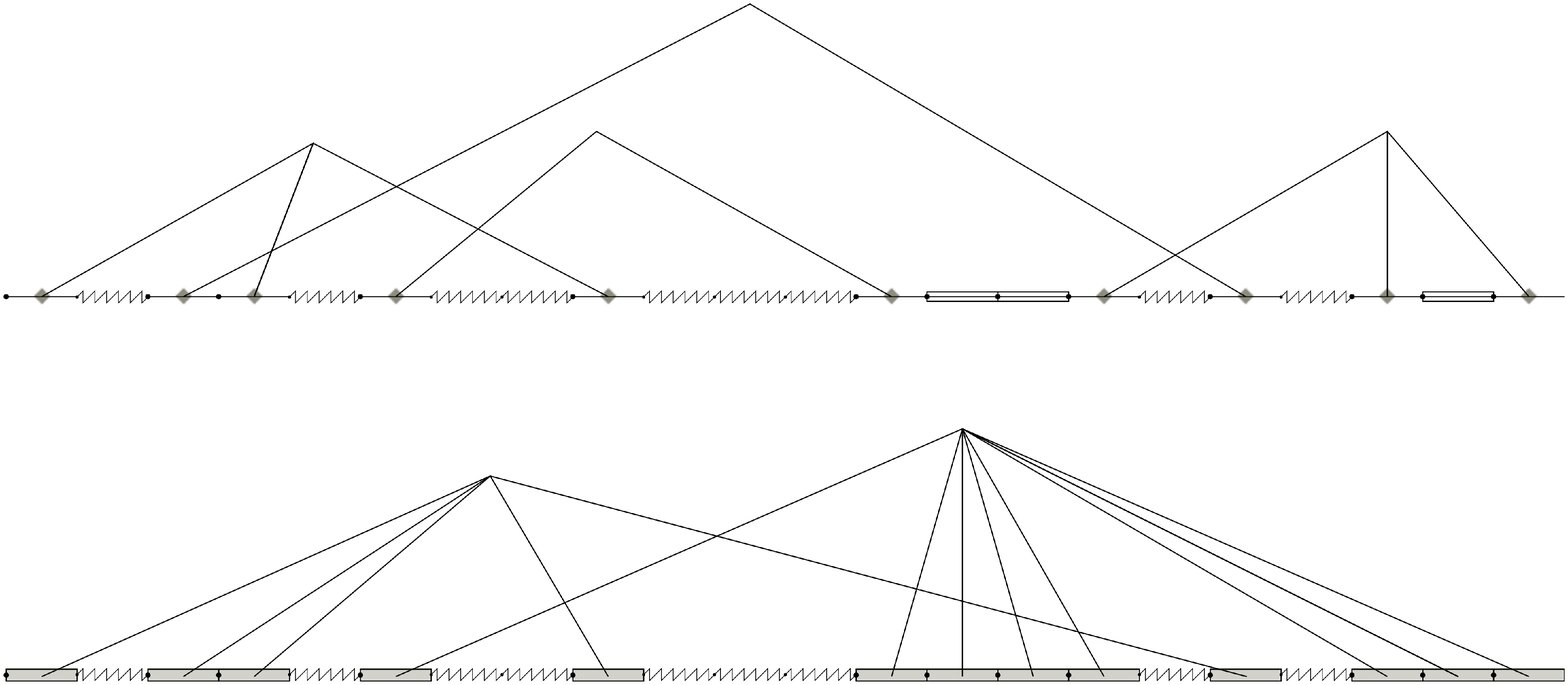
\caption{  \footnotesize{In the upper picture, we split $T=TR+TR_{\perp}$ for all $\tau \not \in \cup_{i=1}^4 A_{i}$ and we choose one of the two terms for each  $\tau$. In the example displayed, we choose $TR_\perp$ for $\tau_{1,2,3} = 14,15,21 $  (those that correspond to the triple lines in the upper picture), and $TR$ for all other $\tau$'s. 
We fuse blocks marked with $TR_{\perp}$ with a polymer  whenever they are adjacent to that polymer and we fuse two polymers whenever they are adjacent (possibly after having been fused with $TR_{\perp}$-blocks).  In the example above, $A'_{1}=A_{1}\cup A_{2}, A'_{2}=A_{3}\cup A_{4}  \cup \{ \tau_1, \tau_2, \tau_3 \}$.}
 \label{fig: splittingoftingraph} 
}
\end{center}
\end{figure}

\subsection{Bounds on scalar polymers}  \label{sec: bounds on scalar polymers}

To obtain  bounds on the new, scalar, polymers, we use the smallness of the operators
$G^c_A$, expressed by the renormalized couplings $\ep,\breve \ep$, and their summability in $A$ when one of the elements of $A$ is held fixed. The  
 $TR_\perp$ factors, which glue together several  $G^c_A$-operators into one scalar polymer, 
are not small (they are of order $1$) but whenever we line up $\ell$ of them, their weight decays exponentially in $\ell$ by the ergodicity of $T$ (see Statement 1) of Lemma  \ref{lem: bound on primary polymers}).  Save for $2$ special cases, dealt with in \eqref{eqbound scalar boundary polymers special} below, all  scalar polymers contain at least one operator $G^c_A$ so that one can always extract a factor $\ep$ or $\breve \ep$.

\begin{lemma}[bounds on polymer weights]       \label{lem: bound on scalar polymers}
For sufficiently small $\str \la\str\neq 0$, there is a $a_v>0$ such that
\ben
\item{
For bulk polymers, 
\beq
 \sum_{A \subset I_n:  A \ni  \tau }  \e^{a_v \str A \str }   \dist(A)^{\al}  \str v(A) \str \leq   C \ep,  \label{eqbound scalar polymers}
\eeq
}
\item{
For boundary polymers, take $\tau=\initial,\final$, then
\beq
\mathop{ \sum}\limits_{ \scriptsize{ \left. \begin{array}{c} A \subset \breve I_n:  A \ni  \tau   \\   A \neq \{ \initial, \final \}, A \neq \breve I_n    \end{array} \right.  } }
  \e^{a_v \str A \str }    \dist(A)^{\al}  \str v(A)\str  \leq   \breve C \breve\ep,     \label{eqbound scalar boundary polymers}
\eeq
and the boundary polymers that are excluded in the sum above satisfy 
\beq
 \str v(\breve I_n)  \str  \leq \breve C \e^{-2 a_v n} , \qquad    \str v(\{ \initial, \final \})  \str  \leq \breve C n^{-\al}   \label{eqbound scalar boundary polymers special}
\eeq
}
\een
\end{lemma}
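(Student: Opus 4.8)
The plan is to turn the operator-valued bounds of Lemma~\ref{lem: bound on primary polymers} into the scalar bounds stated here, by tracking how $v(A)$ is assembled from the $G^c_{A''}$ and the spectral data $R,R_\perp,\theta_T$ of $T$; I treat the bulk case first, the boundary case differing only in bookkeeping. Starting from \eqref{eqdef polymer weight v}--\eqref{eq: toy polymer}, the $R$-factors on $\breve I_n\setminus A$ collapse under $\caT$ (since $R^m=R$, $R_\perp R=0$) to at most one $R$ on each side of the contracted weight, and these pass through the pairing $\langle\tilde\eta,\,\cdot\,\eta\rangle$ because $RT=TR=\e^{\theta_T}R$ and $\langle\tilde\eta,\eta\rangle=1$; hence $\str v(A)\str\le C\,\e^{c_0\str A\str}\,\norm {\scriptsize \Si}V(A)\normw$ with $c_0=\str\Re\theta_T\str\to0$ as $\la\to0$, so $c_0$ may be taken below any prescribed fraction of the decay rates produced below. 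By \eqref{eqdef polymer weight big v} it then suffices to bound $\sum_{\caS=(\caA,\caJ):\,\supp\caS=A}\bigl\|\caT[V((\caA,\caJ))\otimes\bigotimes_{\tau\notin A}R_\tau]\bigr\|$.

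\textbf{Bounding a single fusion.} Fix a fusion $\caS=(\caA,\caJ)$. By disjointness of the sets in $\caA$ and the intervals in $\caJ$, the operator $V((\caA,\caJ))$ of \eqref{eqdef polymer weight big v with collections} is an elementary tensor over the slots of $A$, carrying $G^c_{A''}$ on each $A''\in\caA$ and $R_\perp T$ on each $\tau\in\supp\caJ$. Under $\caT$ the slots are put in time order and multiplied, so each maximal run of $\ell$ consecutive $R_\perp T$-slots becomes $(R_\perp T)^{\ell}=R_\perp T^{\ell}$, with $\norm(R_\perp T)^{\ell}\norm\le C_T\,\e^{\ell(\str\theta_T\str-\gap_T)}$ by part~1 of Lemma~\ref{lem: bound on primary polymers} — exponentially small in $\ell$ (a single-slot run costs only a constant). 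Using $\norm XY\norm\le\norm X\norm\,\norm Y\norm$ together with \eqref{def: weird norm} and \eqref{eqbound w norm} one gets
\[
 \Bigl\|\caT\bigl[V((\caA,\caJ))\otimes{\textstyle\bigotimes_{\tau\notin A}R_\tau}\bigr]\Bigr\|\ \le\ C^{\,\#\caA+\#\caJ}\,\e^{-\tfrac12\gap_T\str\supp\caJ\str}\prod_{A''\in\caA}\norm G^c_{A''}\normw .
\]
Since $G^c$ vanishes on singletons, every $A''\in\caA$ has $\str A''\str\ge2$. Moreover, any bulk polymer carries an $R$-factor at each of the times $\initial,\final\notin A$, so by $R_\perp R=0$ any fusion in which an $R_\perp T$-run reaches an exposed end of $A$ contributes nothing; hence every fusion of nonzero weight has $\caA\neq\emptyset$. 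Therefore $\prod_{A''}(C\ep)^{-(\str A''\str-1)}\norm G^c_{A''}\normw$ is exactly the expression summed in \eqref{eqbound operator polymers}, and one extracts $\ep^{\str\supp\caA\str-\#\caA}\le\ep^{\str\supp\caA\str/2}$ — at least a full power of $\ep$ — which together with the $R_\perp T$-runs decays exponentially in $\str A\str=\str\supp\caA\str+\str\supp\caJ\str$ once $\ep$ (hence $\la$) is small.

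\textbf{Summation over fusions and the polynomial weight.} It remains to sum over the fusions with $\supp\caS=A$ and over $A\ni\tau$, and to recover the weight $\dist(A)^{\al}$. The sum over the families $\caA,\caJ$ whose graph $\Gamma(\caA,\caJ)$ is connected is a connected-cluster sum, controlled by iterating along $\Gamma$ the summability built into \eqref{eqbound operator polymers} (a Kotecký--Preiss-type bound on the operator polymers, which is just the hypothesis of the summation lemma reviewed in Appendix~\ref{appsec: cluster expansions}), together with the geometric series from the $(R_\perp T)^{\ell}$-runs. The step I expect to be the main obstacle is reproducing the polynomial weight $\dist(A)^{\al}$: the operator bound provides only $\dist(A'')^{-\al}$ \emph{per individual} block $G^c_{A''}$ and $\e^{-c\ell}$ per $R_\perp T$-run, and polynomials do not compose the way exponentials do. The resolution is that for a connected fusion $\dist(A)\le\sum_i\dist(\mathrm{block}_i)\le(\#\mathrm{blocks})\max_i\dist(\mathrm{block}_i)$: one spends the decay of the single widest block (a factor $\dist(\cdot)^{-\al}$ from \eqref{eqbound operator polymers}, or an exponential — which dominates any polynomial — if the widest block is an $R_\perp T$-run) against $\dist(A)^{\al}$, bounds the remaining $\dist(\cdot)^{-\al}$-factors by $1$, and absorbs $(\#\mathrm{blocks})^{\al}\le C^{\str A\str}$ and the per-block constants into the surviving exponential smallness $\ep^{\#\caA}\e^{-c\str\supp\caJ\str}$. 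Choosing $a_v>0$ small enough then yields $\sum_{A\ni\tau}\e^{a_v\str A\str}\dist(A)^{\al}\str v(A)\str\le C\ep$, which is \eqref{eqbound scalar polymers}.

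\textbf{Boundary polymers.} For $\tau=\initial,\final$ the argument is the same after three changes: the boundary estimate \eqref{eqbound operator polymers boundary} replaces \eqref{eqbound operator polymers} (turning $\ep$ into $\breve\ep$, $C$ into $\breve C$); one of $\tilde\eta,\eta$ is replaced by $O_\sys$ or $\rho_{\sys,0}$ as in \eqref{def: boundary values}; and one divides by $k_A$, so the nonvanishing of $k_{\realinitial},k_{\realfinal}$ is used (the degenerate case being set aside as announced after the lemma). This gives \eqref{eqbound scalar boundary polymers}. The two excluded polymers are immediate: a fusion with $\supp\caS=\breve I_n$ has $\str A\cap I_n\str=n$, hence by \eqref{eqbound operator polymers boundary} applied to its widest block, or by a long $R_\perp T$-run, it carries a factor of order $\breve\ep^{n}$ or $\e^{-cn}$, whence $\str v(\breve I_n)\str\le\breve C\,\e^{-2a_v n}$; and $v(\{\initial,\final\})$ is, up to $k_{\{\initial,\final\}}^{-1}$ and the intervening $R$, a multiple of $G^c_{\{\initial,\final\}}$, so the last clause of Lemma~\ref{lem: bound on primary polymers} with $\dist(\{\initial,\final\})=n+2$ gives $\str v(\{\initial,\final\})\str\le\breve C\,n^{-\al}$, which is \eqref{eqbound scalar boundary polymers special}.
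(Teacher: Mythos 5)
Your proposal is essentially correct and follows the same route as the paper: expand $v(A')$ over fusions, extract a factor $\ep$ (or $\breve\ep$) from each $G^c$-block and exponential decay $\e^{-\gap_T|J|}$ from each $R_\perp T$-run, and resum the connected fusions via the Kotecky--Preiss machinery of Appendix~\ref{appsec: cluster expansions}. The one point where your bookkeeping deviates from the paper's is the treatment of the polynomial weight $\dist(A')^\al$, which you correctly identified as the delicate step. The paper uses the purely multiplicative inequality
\[
\dist(A')^\al \;\leq\; \prod_{A\in\caA}\dist(A)^\al\;\prod_{J\in\caJ}(1+\str J\str)^\al,
\]
valid because $\Gamma(\caA,\caJ)$ is connected, and folds the entire polynomial factor into the cluster weights $w_\ep(S)$ \emph{before} invoking Lemma~\ref{lemma: combi trick abstract}; the $\al$-weighted summability of \eqref{eqbound operator polymers} is then used on \emph{every} $A$-block simultaneously, so no separate absorption is needed. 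You instead bound $\dist(A')^\al\leq(\#\mathrm{blocks})^\al\max_i\dist(\mathrm{block}_i)^\al$, spend the decay of only the widest block, and absorb $(\#\mathrm{blocks})^\al$ into the exponential smallness. That works, but is slightly lossier: you have to argue separately that the extra polynomial factor is dominated by the exponential gain, and the combinatorics is less transparently in the form required by Lemma~\ref{lemma: combi trick abstract} (one must sum over the choice of widest block, then over fusions pinned to contain it). The paper's product bound buys you a cleaner, single application of the combinatorial lemma with no leftover absorption step. Two minor remarks: your justification that every nonzero bulk fusion has $\caA\neq\emptyset$ is correct but would be more transparently stated via the pairing with $\eta=R\eta$ and $\tilde\eta=R^*\tilde\eta$ (which kill a pure $R_\perp T^n$-run) rather than via $R$-factors ``at the times $\initial,\final\notin A$'' (for a bulk $A'$ the contraction $\hat v$ inserts $R$'s only on $I_n\setminus A'$, not at $\initial,\final$); and for $v(\breve I_n)$ the phrase ``by its widest block'' misleadingly suggests a single block carries the decay, whereas in general the $\e^{-cn}$ comes from the \emph{total} length of $A$-sets and $J$-runs adding up to $n$, not from one block alone.
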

\begin{proof}
We first prove \eqref{eqbound scalar polymers}. Choose $A' \in I_n$.
By the properties of the norm $\norm \cdot \normw$ stated in Section \ref{sec: norms} , we get 
\baq
\str v(A') \str  &\leq &  C  \e^{-\str A \str \Re \theta_T} \sum_{\caS=(\caA ,\caJ) \in \frS_n^f:  \supp \caS =  A'}  \norm R \normw^{p}    \prod_{A \in \caA} \norm G^c_A \normw    \prod_{J \in \caJ}    \norm (TR_{\perp})^{\str J\str} \normw    \label{eq: naive bound on v}
\eaq
where $p$ is the number of maximal intervals in $\spann A' \setminus A'$, and we will upper bound   $p< C \str \supp \caA \str $. 
For any  fusion $\caS= (\caA, \caJ)$ such that $\supp (\caA \cup \caJ)=  A'$, we have 
\beq
\dist(A')^{\al} \leq  \prod_{A \in \caA} \dist(A)^{\al}   \prod_{J \in \caJ}  (1 + \str J \str)^{\al}    \label{eq: naive bound on dist}
\eeq
which follows immediately from the connectedness of the graph $\Gamma(\caA,\caJ)$ in Definition \ref{def: scalar weights}. 

To continue, let us interpret pairs $\caS= (\caA, \caJ)$ as a collection of sets $S \in \caA \cup \caJ$  together with a label indicating whether they are `$A$-sets' ($\in \caA$) or `$J$-sets' ($\in \caJ$) (of course, only intervals can be $J$-sets).    
Let us gather such collections of labelled sets in the set $\frS_n$.
Furthermore, we consider the adjacency relation $\sim$  defined by
$
S \sim S' $ iff.\  $\dist(S,S')=1$ and at least one of the sets $S,S'$ is an $A$-set.   Then, it immediately follows that, for any fusion $\caS$, the graph $(\caS, \sim)$ is connected.  

We introduce (with $C_w, c_w >0$, to be fixed below)
 \beq
 w_{\ep}(S) :=  \left\{ \begin{array}{ll}      \dist(A)^{\al} (\ep C_w)^{-(\str A \str-1)}   \norm G^c_A \normw    & \textrm{if $S$ is an $A$-set}   \\[2mm]
 C_w^{-1}  \e^{-c_w \str J \str}    & \textrm{if $S$ is a $J$-set} 
 \end{array} \right.    \label{def: weps}
 \eeq
Then we can bound the LHS of \eqref{eqbound scalar polymers} as
 \beq  \label{eq: bound by w s}
\e^{a_v \str A' \str} \dist(A')^{\al}  \str v(A') \str  \leq  \ep C    \sum_{\caS \in \frS_n^f:  \supp \caS=  A'}  \prod_{S \in \caS} w_{\ep}(S)
 \eeq
 provided one chooses  $c_w+a_v < \gap_T $ and  $\ep$ small enough (depending on $C_w$ and $c_w$).  To check  \eqref{eq: bound by w s}, use  (\ref{eq: naive bound on v}, \ref{eq: naive bound on dist}), the bounds $\norm R \normw \leq C$ and $\str \caJ \str \leq  C  \str \supp \caA \str$ for any  fusion $(\caA,\caJ)$, the fact that any fusion contains at least one $A$-set, and the bound from Lemma \ref{lem: bound on primary polymers} 1);
 \beq
\e^{-\str J \str  \Re \theta_T}     \norm (TR_{\perp})^{\str J\str}  \norm   \leq C_T  \e^{-\str J \str \gap_T }.
  \eeq
   
The desired bound \eqref{eqbound scalar polymers} will trivially follow once we establish
 \beq
 \sum_{\caS \in \frS_n: \supp \caS \ni \tau}  \indicator_{[(\caS,\sim)  \, \text{connected} ]} \prod_{S \in \caS}   w_{\ep}(S)   \leq   C
 \label{eq: desired s bound}
 \eeq
 because $(\caS,\sim)$ is connected for any fusion $\caS$. 
To prove \eqref{eq: desired s bound}, we first  establish that, provided $C_w$ is sufficiently large, 
 \beq
 \sum_{S \sim S'} w_{\ep}(S)  \e^{a(S)}  \leq   a(S')  \label{eq: kp for s},   \qquad  a(S):=  \left\{ \begin{array}{ll}    \str A \str     & \textrm{if $S$ is an $A$-set}   \\[1mm]
1   & \textrm{if $S$ is a $J$-set} 
 \end{array} \right.       
 \eeq
Indeed, if $S$ is an $A$-set, one uses \eqref{eqbound operator polymers} and if it is a $J$-set, one sums the exponential. Relying on \eqref{eq: kp for s} (which is a 'Kotecky-Preiss' criterion) in the terminology of Appendix \ref{appsec: cluster expansions}, we now apply  the general combinatorial estimate Lemma \ref{lemma: combi trick abstract},   obtaining
\beq
\sum_{\caS \in \frS_n }  1_{[(\caS \cup \{ S'\}, \sim) \,   \mathrm{connected}] }  \prod_{S \in \caS} w_{\ep}(S)  \leq  \e^{a(S')}
\eeq
(we bounded the indicator $\indicator_{[\ldots]}$ by $k(\caS \cup \{ S'\})$ and we used  \eqref{eqbound on clusters containing something abstract}). 
Consequently, 
 \baq
 \sum_{
 \scriptsize{ \left.\begin{array}{c}     \caS  \in \frS_n: \supp \caS \ni \tau       \\      (\caS, \sim)  \, \text{connected   }     \end{array}\right.   } }
 \prod_{S \in \caS}   w_{\ep}(S)   & \leq &    
 \sum_{S': S' \ni \tau} w_\ep(S')
 \sum_{
  \scriptsize{ \left.\begin{array}{c}     \caS  \in \frS_n        \\     (\caS \cup \{ S'\}, \sim)   \, \text{connected   }     \end{array}\right.   } 
  }   \prod_{S \in \caS}   w_{\ep}(S)   \nonumber  \\[3mm]
  & \leq  &      \sum_{S': S' \ni \tau} w_\ep(S')  \e^{a(S')}  \leq  C  \label{eq: trees of cas}
 \eaq
 This proves \eqref{eq: desired s bound} and hence also \eqref{eqbound scalar polymers}. 
 
  The other claims of Lemma \ref{lem: bound on scalar polymers} are proven by analogous reasoning, using the bound on boundary polymers \eqref{eqbound operator polymers boundary}.   
Indeed, the general idea of the proof of \eqref{eqbound scalar polymers}  was to define $w_{\ep} $ in \eqref{def: weps} such that it satisfies \eqref{eq: desired s bound} and to extract a factor $\ep^{-1}$ from all terms in the sum on the LHS of \eqref{eq: desired s bound}, which was possible because all fusions $\caS$ contain at least one $A$-set.
To get \eqref{eqbound scalar boundary polymers}, we modify the definition of $w_\ep$ for  $S$ such that $S \cap \{\initial, \final \} \neq \emptyset$ to (note a new boundary-dependent constant $\breve C_w$)
 \beq
 w_{\ep}(S) :=  \left\{ \begin{array}{ll}      \dist(A)^{\al} (\breve C_w)^{-1}(\breve\ep C_w)^{-\str A \cap I_n \str}   \norm G^c_A \normw    & \textrm{if $S$ is an $A$-set}   \\[2mm]
 C_w^{-1}  \e^{-c_w \str J \str}    & \textrm{if $S$ is a $J$-set} 
 \end{array} \right.    \label{def: weps boundary}
 \eeq
and we use $\breve\frS_n, \breve\frS^f_n$ instead of $\frS_n,\frS^f_n$ by replacing in the definitions $I_n$ by $\breve I_n$. 
Then \eqref{eq: bound by w s} still holds with $\ep$ replaced by $\breve \ep$ provided that $\caS$ contains at least one $A$-set with $\str A \cap I_n \str >0$. This is the case unless $A' = \supp \caS$ is $\{ \initial,\final\}$ or $\breve I_n$, which are the two special cases in Lemma \ref{lem: bound on scalar polymers}. Next, we need to establish \eqref{eq: desired s bound} with $C$ on the RHS replaced by $\breve C$ and $\tau=\initial,\final$.
It is pedagogical to split this estimates according to which boundary times $\caS$ contains. Let us consider the case where $0 \in \supp \caS$, but $ \final \not \in \supp\caS$, in particular choose $\tau=\initial$.  Then, we write, analogous to \eqref{eq: trees of cas},
\baq
 \sum_{
 \scriptsize{ \left.\begin{array}{c}     \caS  \in \breve\frS_n: \final \not \in \supp \caS  \\      (\caS, \sim)  \, \text{connected   }, \, \initial \in \supp \caS     \end{array}\right.   } }
 \prod_{S \in \caS}   w_{\ep}(S)   & \leq &    
 \sum_{S': S' \ni \initial} w_\ep(S')
 \sum_{
  \scriptsize{ \left.\begin{array}{c}     \caS  \in \frS_n        \\     (\caS \cup \{ S'\}, \sim)   \, \text{connected   }     \end{array}\right.   } 
  }   \prod_{S \in \caS}   w_{\ep}(S)   \nonumber  
 \eaq
The sum over $ \caS  \in \frS_n$ concerns only bulk quantities, and it is therefore bounded by $\e^{a(S')}$, just as before. We then conclude by bounding 
\beq
  \sum_{S': S' \ni \initial} w_\ep(S')  \e^{a(S')} \leq \breve C
\eeq
as follows from \eqref{eqbound operator polymers boundary} when $S$ is an $A$-set, otherwise it is the same bound as before. 
The cases where $\caS$ contains $\final$ but not $\initial$, or $\final$ and $\initial$, are dealt with analogously, and we get  \eqref{eqbound scalar boundary polymers}.
Finally, we turn to \eqref{eqbound scalar boundary polymers special}. The claim about $A=\{\initial, \final \}$ follows directly from the bound on that polymer in Lemma \ref{lem: bound on primary polymers}.   The claim concerning $A=\breve I_{n}$ differs from \eqref{eqbound scalar boundary polymers}  because it is now possible that the fusion $\caS$ consists of a single $J$-set $J=\breve I_{n}$, in which case one cannot extract $\breve \ep$. Taking this into account, this claim follows as above. 
\end{proof}


\section{From Hamiltonian dynamics to polymer models}   \label{secdiscretization}

In Section \ref{sec: polymer rep},  the reasoning  was largely independent of the details of the underlying Hamiltonian model. Indeed, in  that section, we assumed  some properties of the operator-valued polymers in Lemma \ref{lem: bound on primary polymers} and we explored the consequences of these properties.  Now,  time has come to prove Lemma  \ref{lem: bound on primary polymers}. This lemma   discusses the ergodicity properties of the operator $T$ and  bounds on the correlation functions  $G^{c}_A$.  In both cases, the proof starts by expanding the microscopic evolution in a  Dyson (or Duhamel) expansion. This expansion is standard and has been used many times in a related context.  What might however seem odd at a first glance, is our complicated presentation of the perturbation series, involving tensor products of copies of the space $\scrR = \scrB(\scrB_1(\scrH_\sys))$ (already introduced in Section \ref{sec: correlation functions of excitations}).  We use this formalism since we hope it makes the important estimates more natural and transparent. 
 We first derive the expansion for $G_A, G^c_A$, see the expressions (\ref{eq: correlation micro}, \ref{eq: connected micro}). This part consists of  purely algebraic manipulations (strictly speaking their validity is only established in Section \ref{sec: bounds on correlation functions} where we show that some series is absolutely summable). 
 
The analysis part comes in  Section \ref{sec: proof of bound on operator polymers}, where we control $\norm G^{c}_A \normw$. 
For reasons of readability,  we first restrict ourselves to 'bulk' polymers. Afterwards, the analysis is repeated with minor adjustments to include the boundary polymers; this is done in Section  \ref{sec: boundary polymers}-\ref{sec: bounds on boundary polymers}. 

As mentioned above, we also need to establish  the exponential ergodicity of the operator $T$, which is done in Section \ref{sec: analysis of t}.  The crux of this argument is to relate $T$ to the Markovian approximation, which was already discussed in the introductory Section \ref{sec: plan of the proof}.

\subsection{Dyson series} \label{secexpansions}

Recall and rewrite the reduced dynamics $Q_{t}$, introduced in Section \ref{sec: splitting of the dynamics};
\baq
Q_{t} \rho_\sys &=&  \Tr_{\res} \left[    \e^{-\i   t  L_{\ka} }    (\rho_\sys \otimes \initialresfinite)     \right] = \bbE  \left[   \e^{-\i   t  L_{\ka} }  \right]   \rho_\sys=  \bbE  \left[ \e^{\i   t  L_\res }   \e^{-\i   t  L_{\ka} }  \right]   \rho_\sys
\eaq
By the Duhamel formula \eqref{eq: duhamel on superspace} in Lemma \ref{lem: fake functional calculus}, we get 
\baq \label{eq: first duhamel series}
\e^{\i t L_\sys} Q_{t} \rho_\sys&=&     \sum_{m \in \bbN}  (-1
)^{m}\mathop{\int}\limits_{0< t_1 < \ldots < t_{2m} <t} \d t_1 \ldots \d t_{2m} \, \,  \Tr_{\res}\left[  L_{\inter,\ka}(t_{2m})  \ldots  L_{\inter,\ka}({t_2})  L_{\inter,\ka}({t_1})  (\rho_\sys\otimes P_{\Om})\right] 
\eaq
where, the RHS is understood to be $1$ for $m=0$.
Note that the RHS contains  only terms with an even number of operators $L_{\inter,\ka}(s)$ because $\Tr_{\res}( \Phi P_\Om)=0$ whenever $\Phi$ is a monomial of odd degree in creation/annihilation operators.  
We  use the  Wick theorem  to evaluate the expression in \eqref{eq: first duhamel series}. To this purpose, we use the formalism developed in Section \ref{sec: correlation functions of excitations}, as anticipated in Remark \ref{rem: category}.  For any $t_i, i=1, \ldots,2m$, we define a copy $\scrR_{t_i}$ of the space $\scrR$ (we do not aim to define "continuous" tensor  products; all our formulas contain a finite number of $t_i$'s).  
In particular, we write
\beq
\bbE\left[   L_{\inter,\ka}(t_{2m})  \ldots  L_{\inter,\ka}({t_2})  L_{\inter,\ka}({t_1})  \right] =  \caT  \bbE\left[L_{\inter,\ka}(t_{2m})  \otimes_\sys\ldots  \otimes_\sys L_{\inter,\ka}({t_2}) \otimes_\sys  L_{\inter,\ka}({t_1})  \right] 
\eeq
where, on the RHS, the operator $\caT$ acts on operators in $\scrR_{\{t_1, \ldots, t_{2m}\}}= \otimes_{i=1}^{2m} \scrR_{t_i} $ and it contracts the operators such that those in $\scrR_{t_1}$ are on the right, then those on $\scrR_{t_2}$, etc.., as in  \eqref{eq: def contraction}.

Let $\{u,v \}$ be a pair of times with the convention that $u < v$. Then we define 
\beq\label{Kdefi}
K_{u,v}  : =   -
 \bbE\left[ L_{\inter,\ka}({v})  \otimes_\sys  L_{\inter,\ka}({u})   \right], \qquad    K_{u,v}  \in  \scrR \otimes \scrR
 \eeq
which can be written more explicitly:
 \baq
 K_{u,v}   & = &  - \la^2   \e^{\ka}  h(v-u)  \left[  \caR(D(v)) \otimes     \caL(D(u)) \right]  -    \la^2\e^{\ka}  h(u-v) \left[ \caL(D(v)) \otimes    \caR(D(u)) \right]    \nonumber  \\[1mm]
   & +&      \la^2   h(v-u)  \left[ \caL(D(v)) \otimes   \caL(D(u)) \right]  +    \la^2 h(u-v)  \left[  \caR(D(v)) \otimes    \caR(D(u))  \right] 
   \label{eq: explicit for k}
 \eaq
 where $h(-s)= \overline{h(s)}$, defined in Assumption \ref{ass: decay correlation functions} and $D(s)= \e^{\i s H_\sys}D \e^{-\i s H_\sys}$.
 For later use, we note that  $\norm K_{u,v} \normw \leq   \la^2  C \str h(v-u)\str $ for sufficiently small $\ka$ (for example, $C= 4 \norm D \norm \e^{\ka}$).
 We view  the operator $ K_{u,v} $ as acting on the copies $\scrR_v \otimes \scrR_u$ (that is, we should in fact write $\bsI_{\{ u,v\}} [K_{u,v}]$ but, as for the operators $G^c_A$, we prefer to drop the embedding operators $\bsI [\cdot]$.)
Let us now apply the Wick theorem and expand any contribution to the integral in \eqref{eq: first duhamel series} in contractions of the creation and annihilation operators. This yields
\beq
 \caT  \bbE\left[  L_{\inter,\ka}(t_{2m})) \otimes_{\sys}  \ldots  \otimes_{\sys} L_{\inter,\ka}({t_1})) \right]  =  \sum_{\pi \in \textrm{Pair}(t_1,\ldots,t_{2m})} \caT\left[ \mathop{\otimes}\limits_{\{u,v \} \in \pi}  K_{u,v}  \right]    \label{eq: first condensed pair sum}
\eeq
where the sum on the RHS runs over pairings $\pi$, i.e.\ partitions of the times $t_1, \ldots, t_{2m}$ in $m$ pairs $\{u,v\}$ with the notational convention that $u < v$. 
This formula relies crucially on the fact that the interaction is linear in creation/annihilation operators (nonetheless, it is easy to extend the proof so as to cover an additional small quadratic interaction).
By plugging \eqref{eq: first condensed pair sum} into \eqref{eq: first duhamel series}, we obtain
\baq \label{eq: second duhamel series}
\e^{\i t L_\sys} Q_{t} &=&     \sum_{m \in \bbN}  \mathop{\int}\limits_{0< t_1 < \ldots < t_{2m} <t} d t_1 \ldots \d t_{2m} \, \,  \sum_{\pi \in \textrm{Pair}(t_1,\ldots,t_{2m})} \caT\left[ \mathop{\otimes}\limits_{\{u,v \} \in \pi}  K_{u,v}  \right] 
\eaq
where, for $m=0$, the RHS is understood as $1$.

\subsection{The evolution as an integral over time-pairs} \label{seccombinatorics}
 
  We will now rewrite  \eqref{eq: second duhamel series} in a convenient way. 
The integral over ordered $t_1, \ldots, t_{2m}$, together with the sum over pairings, $\pi$,  on the set of times, is represented as an integral/sum over unordered pairs $\{u_i, v_i\}$ with $u_i,v_i \in \bbR_{+}$  and $i=1, \ldots, m$,  such that 
\beq\label{mpairs}
u_i < v_i, \qquad    u_1 < \ldots  < u_m
\eeq
This is done as follows. For any pair $\{ t_r,t_s \} \in \pi$ with $r<s$,  hence $t_r < t_s$, we let $u_i=t_r, v_i =t_s$  where the index $i=1,\ldots,m$ is chosen such that the $u_i$ are ordered $u_1 < u_2 \ldots < u_m$. 
We write $w= \{u,v \}$ for a pair (with the convention that $u < v$) and $\uw$ for a finite, possibly empty, collection of pairs.  In the formulas below,  we treat $u,v$ as being implicitly defined by $w$. 
Given a Borel set $J \subset \bbR_{+}$,  let $\Si_{J}^m$ be the set of collections of $m$ pairs $\{u_i,v_i \}_{i=1, \ldots,m}$ with $u_i, v_i\in J$ and satisfying the conditions \eqref{mpairs}. 
Let $\mu_m(d\uw)$ be  the Lebesgue measure on the corresponding subset - a simplex - of $J^{2m}$. We can then
 rewrite
\baq \label{eq: second duhamel series1}
  \mathop{\int}\limits_{0< t_1 < \ldots < t_{2m} <t} d t_1 \ldots \d t_{2m} \, \,  \sum_{\pi \in \textrm{Pair}(t_1,\ldots,t_{2m})} \caT\left[ \mathop{\opprod}\limits_{\{u,v \} \in \pi}  K_{u,v}  \right] 
  = \mathop{\int}\limits_{\Si_{[0,t]}^m}  \mu_m(d\uw)    
   \caT\left[  \opprod_{w \in \uw} K_{w}    \right]
\eaq
where we abbreviated $K_w :=K_{u,v}$ for $w=\{u,v \}$. Let $\Si_{J}$ denote
the disjoint union  
$$\Si_{J}=\sqcup_{m=0}^\infty\Si_{J}^m, 
$$
i.e.\  $\uw\in\Si_{J}$ is
given by $\uw=(m,\uw^m), \ m\in{\bbN}, \ \uw^m\in\Si_{J}^m$
with the convention
$\Si_{J}^0=\{\emptyset\}$ and we write $\str \uw \str :=m $. Thus measurable functions $F$ on $\Si_{J}$ are
collections $\{F_m\}_{m\in\bbN}$ of measurable $F_m$ on $\Si_{J}^m$. We let
$ \mu(\d \uw)$ be 
the measure on $\Si_{J}$ given by
\beq\label{mu0def}
  \mathop{\int}\limits_{\Si_{J}} \mu(\d \underline{w}^{})F(\uw) :=
  \sum_{m \in \bbN}  \mathop{\int}\limits_{\Si_{J}^m}  \mu_m(d\uw) F_m(\uw) 
  \eeq
where we set $ \mu_0((0,\emptyset))=1$.  
Note that the elements of $\Si_J$ are naturally interpreted as sets, i.e.\
we write $w \in \uw$ to mean $w \in \uw^m$ for $m=\str \uw \str$ and $\uw=\uw' \cup \uw'' $ for the element in $\Si^{m}_J$ with $m=\str \uw'\str \cup \str \uw''\str$ with $w \in \uw$ whenever $w \in \uw'$ or $w \in \uw''$. 
With these conventions the Dyson expansion \eqref{eq: second duhamel series}  becomes
\beq  \label{eq: third duhamel series}
\e^{\i t L_\sys}  Q_t    =
\mathop{\int}\limits_{\Si_{[0,t]}}  \mu(\d \uw)   \caT\left[  \mathop{\opprod}\limits_{w \in \uw} K_{w}    \right]
\eeq
where, for $\uw=\emptyset$ the integrand is defined to be $1$.

\subsection{Correlation functions and the Dyson series} \label{sectreegraph}
\subsubsection{The contraction operator $\caT_A$}

To each macroscopic time $ \tau \in I_n$, we now associate a {domain} of microscopic times,
\beq
\Dom  ( \tau) = 
  [  \la^{-2} (\tau-1),   \la^{-2}  \tau ] 
  \eeq
 To a set $A \subset I_n$ of macroscopic times, we then associate the  domain 
 \beq
 \Dom (A)= \bigcup_{\tau \in A} \Dom  ( \tau)
 \eeq
The contraction operator $\caT[\cdot]$ defined in Section \ref{secexpansions} contracts operators so as to produce an operator in $\scrR$. We now define a contraction operator $\caT_A$ that produces operators in $\scrR_A$.
Let us first consider a finite family of operators $V_{t_i} \in \scrR_{t_i} $ where the indexed times $t_i$ satisfy $t_i < t_{i+1}$ and $t_i \in \Dom(A)$. Then we set
\beq
\caT_{A} \left[  \mathop{\otimes}\limits_{i }  V_{t_i}   \right]   :=         \mathop{\otimes}\limits_{\tau \in A}  \bsI_{\tau} \left[  \caT \left[
\mathop{\otimes}\limits_{j:  t_j  \in  \Dom  ( \tau)} V_{t_j} \right] \right]
\eeq 
and we extend by linearity to the whole of $\otimes_i\scrR_{ t_i }$, obtaining   $\caT_{A}: \otimes_i\scrR_{ t_i } \mapsto \scrR_A$. In words, $\caT_{A}$ puts each operator into the right 'macroscopic' time-copy and contracts the operators within each macroscopic time-copy. 

\subsubsection{The graph $\caG(\uw)$}

 A set of pairs $\uw \in \Si_{  \Dom  (A)}$ determines a graph $\caG(\uw)$ on $I_n$ by the following prescription: the vertices $\tau< \tau'$ are connected by an edge iff.\ there is at least one pair $w=\{ u,v \}$ in $\uw$ such that 
\beq
u \in   \Dom  ( \tau)  \qquad  \textrm{and} \qquad     v \in   \Dom  ( \tau') 
\eeq 
We write $\supp(\caG(\uw))$ for the set of non-isolated vertices of $\caG(\uw)$, i.e.\ the vertices that belong to at least one edge.  If $\uw \in \Si_{  \Dom  (A)}$ than  $\supp(\caG(\uw))$ is  a subset of $A$.  In that case, we write $\caG_A(\uw)$ for the induced  subgraph on $A$.

Let us denote the free $\sys$-evolutions  
\beq
Y_\tau =   \bsI_{\tau}[\e^{\i (\tau-1) L_\sys}], \quad Y_A =  \mathop{\otimes}\limits_{\tau \in A}Y_\tau, \qquad \text{and}  \qquad   \widetilde Y_\tau =    \bsI_{\tau}[\e^{-\i \tau L_\sys}], \quad    \widetilde Y_A =  \mathop{\otimes}\limits_{\tau \in A}  \widetilde Y_\tau
\label{eq: y operators}
\eeq 
We are now ready to state the connection between the Dyson expansion and the correlation functions $G_A, G^c_A$:
For $A \subset  I_n$ with $\str A \str \geq 2$, 
\baq
  \widetilde Y_A G_A Y_A   & = &  \mathop{\int}\limits_{\Si_{\Dom(A)}}  \,  \mu (\d \uw)   \,   \,      \indicator_{[  \supp(\caG(\uw)) =A]} \,     \caT_{A}   \left[ \mathop{\opprod}\limits_{w \in \uw} K_w   \right]    \label{eq: correlation micro} \\[2mm]
 \widetilde Y_A G^c_A  Y_A   &=&   \mathop{\int}\limits_{\Si_{\Dom(A)}} \,  \mu (\d \uw)   \,   \,      \indicator_{[   \caG_A(\uw)  \, \textrm{connected}  ]}  \, 
   \caT_{A}  \left[  \mathop{\opprod}\limits_{w \in \uw} K_w    \right]. \label{eq: connected micro}  \eaq
 To check this, it is helpful to note first  (with $A=\{\tau_1, \ldots, \tau_m\}$)
 \beq
\widetilde Y_A \bbE [U_{\tau_m} \otimes_\sys \ldots \otimes_\sys U_{\tau_1} ]  Y_A =     \mathop{\int}\limits_{\Si_{\Dom(A)}}  \,  \mu (\d \uw)   \,   \,       \caT_{A}  \left[ \mathop{\opprod}\limits_{w \in \uw} K_w   \right]
 \eeq
 which is a straightforward generalization of \eqref{eq: third duhamel series}.  Then, one observes that all pairs $\{ u,v\}$ with  $u,v \in\Dom(\tau)$ make up $ \bsI_\tau[T] $, and hence the operators $G_A$ are integrals over all those $\uw$ such that for each $\tau \in A$, the graph $\caG(\uw)$ has an edge $\{ \tau, \tau' \}$ containing $\tau$. This leads to \eqref{eq: correlation micro}.  Then, one verifies that $G^c_A$ as given by \eqref{eq: connected micro} satisfies  \eqref{def: cumulants}.  Since \eqref{def: cumulants}  fixes $G^c_A$ uniquely,  \eqref{eq: connected micro} is thereby proven.   
 
 \subsubsection{From pairings to  $G^{c}_A$: Pictorial representation}
 We divide the time-axis into blocks  (intervals of the form $\la^{-2}[\tau-1, \tau] $ with $\tau \in \bbN$)). We draw pairings as arcs connecting two times on the axis, either solid \includegraphics[width = 0.8cm, height=0.5cm]{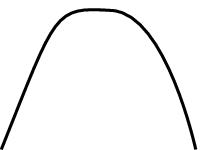}  or dotted \includegraphics[width = 0.8cm, height=0.5cm]{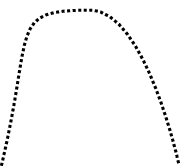}, the reason to draw some of them differently is addressed below.  The two times in a pairing, $ \{ u,v\}$, are denoted as 'legs'.
 This yields the upper picture in Figure \eqref{fig: splittingofmicro}.  

Then, we single out blocks which \textbf{do not} contain a leg of a pair whose other leg lies in another block.  Hence, such a block $\tau$ can contain no pairing at all, or one pairing, or two pairings, etc... In the language of the graph $\caG(\uw)$, this means that the time $\tau$ has no edge to any other time; it is an isolated vertex.   
Such 'isolated blocks' are now, in the middle picture, drawn as   \includegraphics[width = 0.8cm, height=0.2cm]{stukjeT.eps} and we omit the pairs on them (imagining that they have been resummed and \includegraphics[width = 0.8cm, height=0.2cm]{stukjeT.eps} stands for the sum). 
Blocks which are not isolated are drawn as \includegraphics[width = 0.8cm, height=0.2cm]{stukjeG.eps}, but we still indicate the pairs on them.  It is now clear that the pairings that were drawn with a dotted line in the above picture, are those inside an isolated vertex and only the other ones (drawn with a solid line) are reproduced in the middle picture. 

Finally, to arrive at the lower picture, we resum the pairings on the blocks \includegraphics[width = 0.8cm, height=0.2cm]{stukjeG.eps} corresponding to connected components of the graph $\caG(\uw)$. That, is whenever two blocks are connected by a pairing, we call them connected, and this induces a partition of the blocks \includegraphics[width = 0.8cm, height=0.2cm]{stukjeG.eps} into connected components.  Only this partition is indicated in the lower figure. Blocks belonging to the same (connected component) determine one polymer $A$.  Alternatively, the operator $G^c_A$ is determined by summing all sets of pairs $\uw$ that contribute to the connected component $A$.

\begin{figure}[h!] 
\vspace{0.5cm}
\begin{center}
\def\svgwidth{\columnwidth}
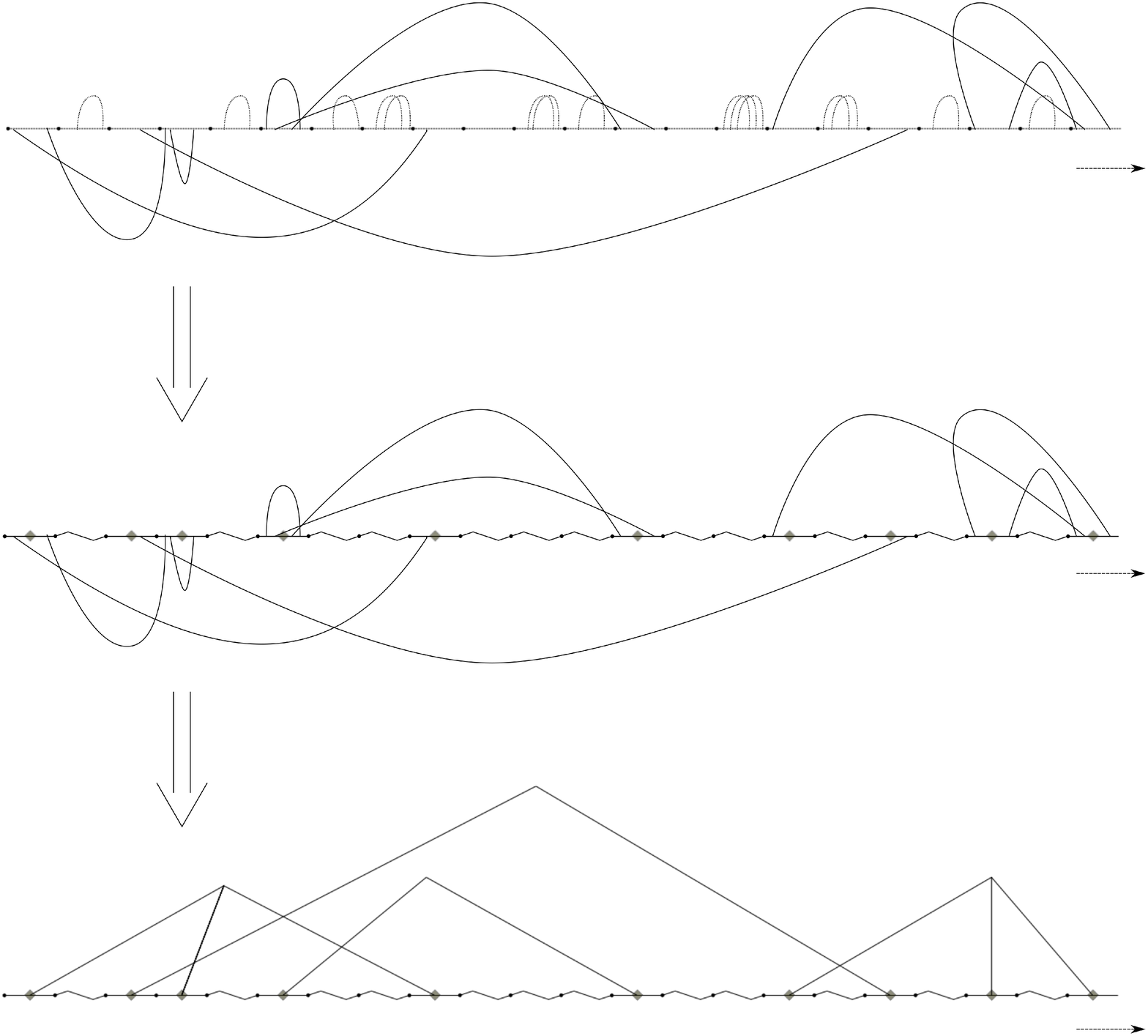
\caption{ \footnotesize{From the collection $\uw$ of pairs $\{u_i,v_i \}$ to a collection $\caA$ of polymers, $\caA=\{A_1, A_2, A_3,A_4  \}$.}
 \label{fig: splittingofmicro} 
}
\end{center}
\end{figure}

\subsubsection{Bounds on correlation functions} \label{sec: bounds on correlation functions}

Up to now, we did nothing more on the Dyson expansions than straightforward algebraic manipulations.  In what follows, we provide bounds. From the properties of the norm $\norm \cdot\normw$ discussed in \ref{sec: norms}, we get 
\beq \label{eqbasic bound on correlations bulk}
\norm G^c_A \normw = \norm   \widetilde Y_A G^c_A  Y_A \normw   \leq  \mathop{\int}\limits_{\Si_{\Dom(A)}}  \,  \mu (\d \uw)   \,    \prod_{w \in \uw} \norm K_{w} \normw
\eeq
To continue, we recall that  $\norm K_{w} \normw \leq   \la^2 C \str h(v-u)\str $ (Section \ref{secexpansions}) and
 that  $\int_0^\infty \d s (1+s)^{\al}\str h(s)\str < \infty$ by  Assumption \ref{ass: decay correlation functions}.  
 We say that `$\uw$ spans $A$ minimally' if the graph $\caG_A(\uw)$ is connected and no pair $w$ can be omitted from $\uw$ without losing this property.  In particular, if $\uw$ spans $A$ minimally, than  $\caG_A(\uw)$ is a spanning tree.  
\begin{lemma} \label{lem: a priori} Let  $\norm h \norm_1 := \int_0^\infty \d s \str h(s)\str$.   For $A \subset I_n$, 
\beq  \label{eqminimal spanning bound}
 \norm G^c_A  \normw   \leq    \e^{ C  \norm h \norm_1   \str A \str  } \int_{\Si_{ \Dom  (A)} }\mu(\d \uw)  \, \,   \indicator_{[ \underline{w} \, \textrm{spans} \, A\, \textrm{minimally}]}  \,      \left( \prod_{w \in \uw}  \la^2 C | h(v-u)|  \right) \,   
\eeq

\end{lemma}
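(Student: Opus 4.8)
The starting point is the exact representation \eqref{eq: connected micro}. Arguing exactly as for \eqref{eqbasic bound on correlations bulk}, but retaining the connectedness indicator present in \eqref{eq: connected micro} and inserting $\norm K_w\normw\leq\la^2 C\,\str h(v-u)\str$ from Section \ref{secexpansions}, one obtains
\[
\norm G^c_A\normw\ \leq\ \int_{\Si_{\Dom(A)}}\mu(\d\uw)\ \indicator_{[\caG_A(\uw)\,\textrm{connected}]}\ \prod_{w\in\uw}\la^2 C\,\str h(v-u)\str .
\]
Thus the whole task is to trade the connectedness indicator for $\indicator_{[\uw\,\textrm{spans}\,A\,\textrm{minimally}]}$ at the price of the prefactor $\e^{C\norm h\norm_1\str A\str}$. (For $\str A\str=1$ the bound is trivial since $G^c_{\{\tau\}}=0$, so assume $\str A\str\geq2$.)

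The idea is to peel off a spanning tree. If $\caG_A(\uw)$ is connected, it contains at least one sub-collection $\uw_0\subseteq\uw$ that spans $A$ minimally, and any such $\uw_0$ consists of exactly $\str A\str-1$ pairs --- one per edge of the tree $\caG_A(\uw_0)$ --- since a pair with both legs in a single block, or a second pair realizing an already-present edge, could be dropped without losing connectedness, contradicting minimality. Hence
\[
\indicator_{[\caG_A(\uw)\,\textrm{connected}]}\ \leq\ \#\big\{\uw_0\subseteq\uw:\ \uw_0\,\textrm{spans}\,A\,\textrm{minimally}\big\} .
\]
This over-counts by the number of spanning subtrees inside $\uw$, which is harmless for an upper bound and in fact mirrors the integral over minimal $\uw$ on the right-hand side of the statement. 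I would then unfold the ordered simplex into its unordered form, $\int_{\Si_J^m}\mu_m(\d\uw)\,F=\tfrac1{m!}\int_{J^{2m}}\prod_j(\indicator_{u_j<v_j}\,\d u_j\,\d v_j)\,F$ for symmetric $F$, split the $m$ pairs into the $m_0=\str A\str-1$ ``skeleton'' pairs of $\uw_0$ and the $k=m-m_0$ remaining ``extra'' pairs, and note that the integral factorizes: the skeleton carries the minimal-spanning constraint, while each extra pair ranges over $\Dom(A)^2$ subject only to $u<v$. The combinatorial weights combine as $\binom{m_0+k}{m_0}\big/(m_0+k)!=1/(m_0!\,k!)$.

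The one quantitative estimate needed is that a single free extra pair costs at most $C\norm h\norm_1\str A\str$: integrating its left endpoint over $\Dom(A)\cap\{u<v\}\subseteq(-\infty,v)$ gives at most $\la^2 C\int_0^\infty\str h(s)\str\,\d s=\la^2 C\norm h\norm_1$, after which integration of the right endpoint over $\Dom(A)$, a set of Lebesgue measure $\la^{-2}\str A\str$, cancels the $\la^2$. Summing over $k$ yields $\sum_{k\geq0}(C\norm h\norm_1\str A\str)^k/k!=\e^{C\norm h\norm_1\str A\str}$, while the leftover factor $\tfrac1{m_0!}\int_{\Dom(A)^{2m_0}}\prod_{j\leq m_0}(\indicator_{u_j<v_j}\,\d u_j\,\d v_j)\,\indicator_{[\textrm{skeleton spans}\,A\,\textrm{minimally}]}\prod_j\la^2 C\,\str h(v_j-u_j)\str$ is, re-folding the measure, precisely $\int_{\Si_{\Dom(A)}}\mu(\d\uw)\,\indicator_{[\uw\,\textrm{spans}\,A\,\textrm{minimally}]}\prod_{w\in\uw}\la^2 C\,\str h(v-u)\str$, which is the claim.

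The main obstacle is purely the factorial bookkeeping of the third paragraph: one must verify that the $1/m!$ of the ordered simplex, the $\binom{m}{m_0}$ choices of which pairs are designated skeleton, and the residual $1/m_0!$ combine so that the extra pairs form a genuinely unconstrained (Poisson-type) sum that exponentiates --- getting this wrong would produce something like $(1+C\norm h\norm_1\str A\str)^{m_0}$ instead, which is useless for the later cluster expansion. A minor side point is the measurability of $\uw\mapsto\#\{\uw_0\subseteq\uw:\uw_0\,\textrm{spans}\,A\,\textrm{minimally}\}$, which is routine since $\Si_{\Dom(A)}$ is assembled from countably many Lebesgue-measurable simplices, so exchanging the $\mu$-integral with the sum over $\uw_0$ is legitimate.
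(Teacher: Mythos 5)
Your proof is correct and takes essentially the same route as the paper: the paper's inequality \eqref{eq: minimal split off} is precisely your over-counting bound $\indicator_{[\text{connected}]}\leq\#\{\text{minimal spanning sub-collections}\}$ combined with the product property \eqref{eq: product property} of the simplex measure, and your exponential estimate on the unconstrained extra pairs reproduces \eqref{eq: unconstrained estimate}. The only difference is cosmetic: the paper fixes one deterministic minimal spanning subset per $\uw$ rather than summing over all of them, and leaves the factorial bookkeeping implicit, whereas you spell it out; both give the same final bound.
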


\begin{proof}
We start with an appealing estimate that was the main motivation for encoding the pairings $\pi$ in the pair-sets $\uw$.
 For integrable $F$, 
\beq
 \mathop{\int}\limits_{  \Si_{ \Dom  (A)}}     \mu (\d \uw)\, \indicator_ {[\caG_A(\uw)  \, \textrm{connected} ]}
    \,  \str F(\uw) \str  \quad   \leq \quad   \mathop{\int}\limits_{ \Si_{ \Dom  (A)}  }   \,  \mu (\d \uw')  \,
 \indicator_{[ \uw'\,  \textrm{spans} \, A\, \textrm{minimally}  ]}   
      \,      \mathop{\int}\limits_{   \Si_{ \Dom  (A)} }   \,  \mu (\d \uw'')    \,      \str F(\uw' \cup \uw'') \str  \label{eq: minimal split off}
\eeq
  To realize why this holds true,  choose a spanning tree $\scrT$ for the connected graph $\caG_A(\uw)$, pick a minimal subset $\uw'$ of the collection $  \uw$ such that $\caG_A(\uw')=\scrT$ and use that
  \beq \label{eq: product property}
  \mu( \d (\uw'\cup \uw'')) =   \mu( \d\uw' ) \mu(d \uw''). 
  \eeq

  We apply the inequality  \eqref{eq: minimal split off} to \eqref{eqbasic bound on correlations bulk}  with $F(\uw) = \prod_{w \in \uw} \norm K_{w} \normw$ and we estimate the $\uw''$-integral as follows: 
  \baq
  &&  \mathop{\int}\limits_{\Si_{\Dom(A)}}  \,  \mu (\d \uw'')   \,    \prod_{w \in \uw''}  \la^2 C \str h(v-u) \str  \nonumber  \\
  &\leq & \sum_{m \in \bbN}  \mathop{\int}\limits_{u_1 < \ldots < u_{m}, u_i \in \Dom(A) } d \underline{u} \,   \left( \prod_{i=1}^m  \int_{u_i}^t \d v_i  \,  \la^2 C \str h(v_i-u_i) \str \right)  \nonumber  \\
  &  \leq &   \sum_{m \in \bbN}   \frac{ (\la^2  C \str \Dom(A)\str \norm h \norm_1)^m}{m!}   \leq   \e^{\la^2 C \str   \Dom(A)\str  \norm h \norm_1}
  \label{eq: unconstrained estimate}
  \eaq
  and we conclude, since  $\str \Dom(A)\str = \la^{-2} \str A \str $. 
\end{proof}
In what follows, we no longer trace explicitly the dependence on $h$ and we simply estimate $\norm h \norm_1 \leq C$.

\subsection{Proof of bound \eqref{eqbound operator polymers} in  Lemma \ref{lem: bound on primary polymers}}  \label{sec: proof of bound on operator polymers}
In this section, we prove \eqref{eqbound operator polymers} of Lemma \ref{lem: bound on primary polymers}.
We start from Lemma \ref{lem: a priori}.
For each  $\uw$ that spans $A$ minimally, $\caG_A(\uw)$ is a spanning  tree on $A$. Hence we can reorganize the bound \eqref{eqminimal spanning bound} by first integrating all $\underline{w}$ that determine the same spanning tree $\scrT$. This amounts to integrate, for each edge of the tree, all pairs $\{u,v\}$ that determine this edge. Hence we arrive at the bound 
\beq  \label{eqspanning trees bound}
 \norm G^c_A  \normw  \leq     \e^{C \str A \str}     \,   \sum_{\textrm{trees} \, \scrT  \, \textrm{spanning} \, A}  \,\, 
\prod_{  \{\tau,\tau' \} \in \caE(  \scrT) }    \hat e(\tau,\tau')   
\eeq
where $\caE(  \scrT)$ is the set of edges of the tree $\scrT$ and, for $\tau' >\tau$, 
\baq \label{def: edge factors}
 \hat{e}(\tau,\tau') &:=&   \mathop{\int}\limits_{  \la^{-2} [\tau-1, \tau]} \d u   \mathop{\int}\limits_{  \la^{-2}  [\tau'-1, \tau']}  \d v \,  \,  \la^2C \str h(v-u)\str, \eaq
and  $ \hat{e}(\tau',\tau)  :=  \hat{e}(\tau,\tau') $.  Let   $\Delta \tau \equiv \tau'-\tau$ and  $s \equiv v-u$, then 
\baq
 \hat{e}(\tau,\tau')  &\leq &   C      \mathop{\int}\limits_{  (\Delta \tau-1)/\la^2}^{(\Delta \tau+1)/\la^2}   \d s | h(s)|, \qquad       \Delta \tau >1  \\[1mm]
  \hat{e}(\tau,\tau')    & \leq&  \la^2C  \int_{0}^{2/\la^2} \d s   \, s \str h(s)\str , \qquad       \Delta \tau =1 
\eaq
and, since  $\int_0^\infty \d s (1+ \str s \str)^{\al} \str h(s)\str  \leq C$, this implies  the bound 
\beq  \label{eqkotecky preiss for pairings}
\sum_{\tau':  \tau' \neq \tau}    \hat{e}_{\al}(\tau,\tau')  \leq C \ep
\eeq
where $\ep=\str \la \str^{2 \min{(\alpha,1)}}$ (cfr. Lemma \ref{lem: a priori}) and
\beq 
  \hat{e}_{\al}(\tau,\tau')  :=  ( 1+  \str \tau'-\tau \str)^{\al} \hat{e}(\tau,\tau').
\eeq  

Starting from  \eqref{eqspanning trees bound},  we bound
\baq
  \sum_{A: A \ni \tau_0} (C\ep )^{{ -(\str A \str-1) }}\dist(A)^{\al} \norm G_A^c \normw & \leq &  \sum_{A: A \ni \tau_0}  (C\ep )^{{ -(\str A \str-1) }}  \,    \sum_{\textrm{trees} \, \scrT  \, \textrm{spanning} \, A}  \, \,
\prod_{ \{\tau,\tau' \} \in \caE(\scrT) }  \hat e_{\al}(\tau,\tau')  \nonumber  \\[2mm] 
& \leq &    \sum_{\textrm{trees} \, \scrT:  \tau_0 \in \caV(\scrT)  }  (C \ep)^{-\str \caE(\scrT) \str}  \,     
\prod_{  \{\tau,\tau' \} \in \caE(\scrT) }  \hat e_{\al}(\tau,\tau') 
  \label{eqsum over trees}
\eaq
where  $\caE(\scrT), \caV(\scrT)$ are the edge, resp.\ vertex set of the tree $\scrT$.
To obtain the first inequality, we used that $\prod_{\{ \tau, \tau'\}  \in \caE(\scrT)} (1 + \str \tau' -\tau\str)^{\al}  >  \dist(A)^{\al}  $ for any spanning tree on $A$.
The sum over trees in \eqref{eqsum over trees} is estimated with the help of Lemma \ref{lemma: combi trick abstract} by choosing
\ben
\item the polymers $S$ as unordered pairs $\{ \tau, \tau' \}$ and $a(S)=1$. 
\item the adjacency relation $S \sim S'$ iff.\  $S \cap S' \neq \emptyset$.
\item the polymer weights $w(\{ \tau, \tau' \})= (C \ep)^{-1} \hat e_\al(\tau,\tau')$ 
\een 
The bound  \eqref{eqkotecky preiss for pairings} plays the role of the Kotecky-Preiss criterion \eqref{eqkotecky preiss abstract} and  Lemma \ref{lemma: combi trick abstract} yields
\beq
 \sup_{\tau_0}  \sum_{\textrm{trees} \, \scrT:  \tau_0 \in \caV(\scrT)  }  (C\ep)^{-\str \scrE(\scrT) \str}   \,  
\prod_{  \{\tau,\tau' \} \in \caE(\scrT) }  \hat e_{\al}(\tau,\tau')   \leq   1  \label{eq: sum over trees bounded one}
\eeq
For example, replace the restriction $\tau_0 \in \caV(\scrT)$ by the weaker restriction  $\scrE(\scrT) \sim \{\tau_0,\tau_1 \} $ for some arbitrary $\tau_1$ and use
 \eqref{eqbound on clusters touching something abstract}.
Upon combining this with \eqref{eqsum over trees}, the bound \eqref{eqbound operator polymers} of Lemma \ref{lem: bound on primary polymers} is proven. 
\subsection{Analysis of $T$: dissipativity and weak coupling limit}  \label{sec: analysis of t}
Recall that operator $T$, acting on $ \scrB_1(\scrH_\sys)$, that was introduced in 
Section \ref{sec: splitting of the dynamics}
 \beq
T \rho_{\sys}  =  \Tr_\res [  \e^{- \i \la^{-2} L_{\ka}} (\rho_{\sys} \otimes P_{\Om}) ]
 \eeq
We exhibit the dissipative properties  of $T$  and we  establish Statement 1) of Lemma \ref{lem: bound on primary polymers}.

\subsubsection{Construction of Lindblad generator $M$ } \label{sec: construction of lindblad generator}
To ease the presentation, we introduce some shorthand notation. 
Recall the one-dimensional spectral projections $P_e$ corresponding to  $H_\sys$ (see the discussion  preceding Assumption \ref{ass: Fermi Golden Rule}).  The set  of differences of eigenvalues will be called $\caE= \{ \ve= e-e',  e,e' \in \si(H_\sys) \}$. Note that it is the set of eigenvalues of the Liouvillian $L_\sys$. 
Let us for the moment assume that for any $\ve \in \caE, \ve \neq 0$, there is a unique pair $(e,e'), e,e' \in \si(H_\sys)$ such that  $\ve=e-e'$.  This assumption allows for a more explicit treatment, we indicate at the end of this section how to relax it. 
Relying on this assumption, we  abbreviate $D_\ve \equiv P_e D P_{e'}$ and we note that $D_{\ve}^*=D_{-\ve}$ since $D=D^*$.
We define the operator  $M $, acting on $\scrB_1(\scrH_\sys)$
\beq  \label{def: lindblad}
M \rho =   - \i [H_{\mathrm{Lamb}}, \rho] + \sum_{\ve \in \caE: \ve <0} \hat h(-\ve)   \left( \e^{\ka} D_{\varepsilon} \rho D^*_{\varepsilon} - \frac{1}{2}  \{  D^*_{\varepsilon}  D_{\varepsilon} , \rho \} \right)
\eeq
with $\hat h(\ve)$ as in \eqref{eqexpression jump rates}, $\{ A,B\}=AB+BA$,  and the energy-shift (`Lamb-shift') operator 
\beq
H_{\mathrm{Lamb}}=  \sum_{\ve \in \caE: \ve \neq 0}  \left( \Im \int_0^{\infty} \d s \,  \e^{\i s \ve} h(s) \right)   D^*_{\varepsilon} D_{\varepsilon},
\eeq
which is 
well-defined by  Assumption \ref{ass: decay correlation functions}).

Because of our strong non-degeneracy assumptions, we can greatly simplify the form of $M$. Let us choose a basis for $\scrB(\scrH_\sys)$ consisting of the one-dimensional projectors $P_e$ and the operators  $\str \psi_e \rangle \langle \psi_{e'} \str$, with $e \neq e'$, where $\psi_{e \in \sp (H_\sys)}$ are eigenfunctions of $H_\sys$.  Then, inspection of \eqref{def: lindblad} and the definition of the jump rates $j(\cdot,\cdot)$  in \eqref{eqexpression jump rates} yields the following claims:\\
\textbf{Off-diagonal elements} 
 The operators $\str \psi_e \rangle \langle \psi_{e'} \str$  are eigenvectors of $M$ with corresponding eigenvalue
\beq
- \i  \langle \psi_e,  H_{\mathrm{Lamb}}\psi_e \rangle  - \i  \langle \psi_{e'},  H_{\mathrm{Lamb}}\psi_{e'} \rangle -  (1/2) \sum_{e''} \left( j(e',e'')+  j(e,e'')\right)
\eeq
Assumption \ref{ass: Fermi Golden Rule} implies that the second term is bounded away from $0$ and hence one has
$$\sup_{e \neq e'} \norm \e^{t M} (\str \psi_e \rangle \langle \psi_{e'} \str) \norm  \leq \e^{-c t} $$ 
 with $c>0$ . 
This corresponds physically to decoherence.  \\
\textbf{Diagonal elements} 
The space spanned by the projectors $P_e$ is mapped into itself and we can identify the action of $M$ on this space with that of a $ d_\sys \times d_\sys$-matrix $\caM$ (with $d_\sys= \dim \scrH_\sys$) by setting $M(\sum_e \mu(e) P_e)= \sum_e ((\caM\mu)(e)) P_e$ for $\mu \in \bbC^{\d_\sys}$. Then
\beq
(\caM \mu)(e) = \sum_{e'} ( \e^{\ka} j(e',e) \mu(e')-    j(e,e') \mu(e) )   \label{eq: cal m}
\eeq
It is clear that, for $\ka=0$,  $\caM$ is the forward generator of a Markov process with state space $\si(H_\sys)$.  We determine the spectrum of $\caM$.   Recall  from the discussion following Assumption \ref{ass: decay correlation functions} that $j(e',e)  = 0$ whenever $e \geq e'$, hence $\caM$ is a triangle matrix and its singular values are the diagonal elements.  By inspection of \eqref{eq: cal m} and  Assumption  \ref{ass: Fermi Golden Rule}, it follows that all singular values lie in the region $ \{ z: \Re z < -g_{\caM} \}$ with $g_\caM = \min_{e: e \neq e_0} \sum_{e'} j(e,e')  $ save for a simple eigenvalue $0$, corresponding to the eigenvector $\mu(e)= \delta_{e,e_0}$. 
By the spectral mapping theorem for semigroups with bounded generator, we then get that  $\si (\e^{t \caM})  \subset \{ 0\} \cup \{z:  \str z \str \leq \e^{-t g_\caM} \}$

  Note that the right eigenvector does not depend on $\ka$, but the left eigenvector  does. 
    The Markov process at $\ka=0$ is called absorbing, with absorbing state $e_0$. Physically, this property is due to the fact that the field $\res$ is in the vacuum state and cannot excite the atom $\sys$.    
We summarize the discussion:
\begin{lemma} \label{lem: spectral analysis of m} Assume that for any $\ve \in \caE, \ve \neq 0$, there is a unique pair $(e,e'), e,e' \in \si(H_\sys)$ such that  $\ve=e-e'$. Then, 
The ground-state projection $P_{e_0}$ is an eigenvector of $\e^{tM}$ with simple eigenvalue $1$.  The rest of the spectrum of $\e^{tM}$ lies in the region $\{z:  \str z \str \leq \e^{-t g_M} \}$ for some $g_M >0$ which does not depend on $\ka$. 
\end{lemma}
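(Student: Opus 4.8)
I would prove Lemma~\ref{lem: spectral analysis of m} by exploiting the block structure of $M$ in the eigenbasis of $H_\sys$ isolated in the discussion preceding the statement. In the basis of $\scrB(\scrH_\sys)$ consisting of the one-dimensional projectors $P_e$ together with the matrix units $\str\psi_e\rangle\langle\psi_{e'}\str$ with $e\neq e'$, that discussion shows $M$ to be block-diagonal: it maps the span $\mathcal D$ of the $P_e$ into itself, where it is identified with the $d_\sys\times d_\sys$ matrix $\caM$ of \eqref{eq: cal m}, and it maps the span $\mathcal O$ of the off-diagonal matrix units into itself, acting diagonally in that basis. The plan is to analyse the two blocks separately and recombine. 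The uniqueness hypothesis of the lemma (one pair $(e,e')$ per nonzero element of $\caE$) is precisely what makes the explicit Lindblad form \eqref{def: lindblad}, and hence this clean picture, available.

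\emph{Off-diagonal block.} The eigenvalue of $M$ attached to $\str\psi_e\rangle\langle\psi_{e'}\str$, as computed above, has real part $-\tfrac12\sum_{e''}\big(j(e,e'')+j(e',e'')\big)$, the remaining terms being purely imaginary since $H_{\mathrm{Lamb}}=H_{\mathrm{Lamb}}^*$. As $e\neq e'$, at least one of the two, say $e$, differs from $e_0$, and Assumption~\ref{ass: Fermi Golden Rule} then furnishes an $e''$ with $j(e,e'')>0$, so $\sum_{e''}j(e,e'')>0$ and the real part is strictly negative. By finiteness of $\si(H_\sys)$ there is a uniform bound $c:=\min_{e\neq e'}\tfrac12\sum_{e''}\big(j(e,e'')+j(e',e'')\big)>0$, and $c$ depends only on the jump rates \eqref{eqexpression jump rates}, hence not on $\ka$. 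Therefore, for $t>0$, the part of $\si(\e^{tM})$ coming from $\mathcal O$ lies in $\{z:\str z\str\leq\e^{-ct}\}$.

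\emph{Diagonal block.} Order $\si(H_\sys)$ increasingly. Since $j(e',e)=0$ whenever $e\geq e'$, every nonzero off-diagonal entry of $\caM$ in \eqref{eq: cal m} sits strictly above the diagonal, so $\caM$ is (upper) triangular; its $e$-th diagonal entry equals $\e^\ka j(e,e)-\sum_{e'}j(e,e')=-\sum_{e'}j(e,e')$, using $j(e,e)=0$. This vanishes for $e=e_0$, since every rate out of $e_0$ is zero, and equals at most $-g_\caM<0$ for $e\neq e_0$ by Assumption~\ref{ass: Fermi Golden Rule}, with $g_\caM=\min_{e\neq e_0}\sum_{e'}j(e,e')$; neither $0$ nor $g_\caM$ depends on $\ka$. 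Hence the spectrum of $\caM$ consists of the eigenvalue $0$ --- simple, because it occurs exactly once on the diagonal of the triangular matrix --- together with further real eigenvalues, all $\leq-g_\caM<0$; the eigenvector for $0$ is $P_{e_0}$, and one checks directly from \eqref{def: lindblad} that $MP_{e_0}=0$ ($H_{\mathrm{Lamb}}$ is diagonal, hence commutes with $P_{e_0}$, while for $\ve<0$ one has $D_\ve P_{e_0}D_\ve^*=0$ and $\{D_\ve^*D_\ve,\,P_{e_0}\}=0$ because $\ve<0$ forces the projectors involved to avoid $P_{e_0}$). The spectral mapping theorem for the bounded generator $\caM$ then gives that $\si(\e^{t\caM})$ is contained in $\{1\}\cup\{z:\str z\str\leq\e^{-tg_\caM}\}$, with $1$ a simple eigenvalue and eigenvector $P_{e_0}$.

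Finally I would recombine the blocks: with $g_M:=\min(c,g_\caM)>0$, which is independent of $\ka$, the operator $\e^{tM}$ on $\scrB(\scrH_\sys)=\mathcal D\oplus\mathcal O$ has $1$ as an eigenvalue with eigenvector $P_{e_0}$, simple because $\mathcal O$ contributes only spectrum of modulus $<1$ for $t>0$ and $1$ is already simple within $\mathcal D$; every remaining point of $\si(\e^{tM})$ lies in $\{z:\str z\str\leq\e^{-tg_M}\}$. I do not expect a serious obstacle here: the substantive work lies in constructing $M$ and establishing its block structure in the surrounding section, and the only two points requiring genuine care are the \emph{simplicity} of the eigenvalue $1$ --- which follows from the triangular form of $\caM$ together with Assumption~\ref{ass: Fermi Golden Rule} forcing every other diagonal entry to be strictly negative --- and the $\ka$-uniformity of the gap $g_M$, which is immediate once one observes that the jump rates and $H_{\mathrm{Lamb}}$ do not involve $\ka$.
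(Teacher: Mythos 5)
Your proof is correct and follows essentially the same route as the paper's own argument in Section 3.5.1: split $M$ into its diagonal block (identified with the triangular matrix $\caM$) and off-diagonal block (diagonal in the $\str\psi_e\rangle\langle\psi_{e'}\str$ basis), read off the spectrum from the diagonal entries, use Assumption B to bound the nonzero entries away from zero, and observe that neither the jump rates nor $H_{\mathrm{Lamb}}$ involve $\ka$. The only cosmetic difference is that you verify $MP_{e_0}=0$ directly from the Lindblad form \eqref{def: lindblad}, whereas the paper reads the eigenvector off as $\mu(e)=\delta_{e,e_0}$ from the triangular structure of $\caM$; both are trivial checks.
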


Finally, if the extra non-degeneracy condition that we introduced at the beginning of this section does not hold, i.e. if there are several pairs $(e,e'), e,e' \in \si(H_\sys)$ such that  $\ve=e-e'$ for $\ve \neq 0$, then  we still define the operator $M$ by \eqref{def: lindblad} and we note that $M$  still preserves the subspace of diagonal and off-diagonal density operators,  i.e.\ we can write $M=\caM \oplus M_{\mathrm{o}}$ and for any density matrix $\rho$, define
\beq
\mu(e):= \Tr P_e \rho, \qquad \rho_{\mathrm{o}} := \rho -\sum_{e} \mu(e) P_e,
\eeq
such that $\rho = \mu \oplus \rho_{\mathrm{o}}$, then
\beq
\e^{t M}\rho =   \e^{t \caM} \mu   \oplus  \e^{t M_{\mathrm{o}} } \rho_{\mathrm{o}}
\eeq
where $ \e^{t M_{\mathrm{o}}} \rho_{\mathrm{o}}$ is off-diagonal, i.e.\ $P_e  (\e^{t M_{\mathrm{o}}} \rho_{\mathrm{o}}) P_e=0$ for any $e \in \si(H_\sys)$. 
The spectral analysis of $\caM$ given above remains unchanged and hence it follows that $\e^{t \caM} \mu \to P_{e_0}$ exponentially fast. For $\ka \in \bbR$, the semigroup $\e^{t M}$ is positivity preserving: from  the positivity of $\e^{t M}\rho$,  combined with the fact that $ P_{e_0}$ is one-dimensional, we deduce that $\e^{t M_{\mathrm{o}} } \rho_{\mathrm{o}}  \to 0$ exponentially fast, independently of $\ka \in \bbR$.  Therefore, for $\ka \in \bbR$, Lemma \ref{lem: spectral analysis of m} remains true without the extra non-degeneracy assumption and for sufficiently small $\ka\in \bbC$, it follows  by perturbation theory of isolated eigenvalues. 

\subsubsection{The Lindblad generator $M$ and the microscopic model} \label{sec: lindblad generator and micro}

The relation between the the semigroup generated by the Lindblad generator $M$ and the microscopic model is a classical result in mathematical physics.  It was discussed already in Section  \ref{sec: plan of the proof}.
\begin{proposition}\label{prop: weak coupling}
For any $t >0$, 
\beq
\norm Q_{t}  -  \e^{-\i t L_{\sys} + \la^2 t M }  \norm  \leq  C \e^{C \la^2 t}     \str\la\str^{2\al_*}
\eeq
where $\al_*=\min(1,\al) $ for $\al \neq 1$, and $  \str\la\str^{2\al_*}=\la^2 \str\log \str \la \str \str $ for $\al=1$. Using that $ L_{\sys}$ commutes with $M$, we get
\beq
\lim_{\la\to 0} \e^{\i \la^{-2}\frt L_{\sys} }  Q_{\la^{-2}\frt}   =   \e^{ \frt M } 
 \eeq
for any $\frt>0$.
\end{proposition}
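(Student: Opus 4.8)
Proposition~\ref{prop: weak coupling} is a Davies-type weak-coupling (van~Hove) limit, and I would prove it by turning the Dyson expansion \eqref{eq: third duhamel series} for the interaction-picture reduced dynamics
\beq
\widetilde Q_t:=\e^{\i t L_\sys}Q_t=\mathop{\int}\limits_{\Si_{[0,t]}}\mu(\d\uw)\,\caT\Big[\mathop{\opprod}\limits_{w\in\uw}K_w\Big]
\eeq
into a closed integro-differential (memory-kernel) equation and then comparing its solution with $\e^{\la^2 tM}$ by a Gronwall argument in which the decay of $h$ (Assumption~\ref{ass: decay correlation functions}) is used quantitatively. Since $Q_t=\e^{-\i tL_\sys}\widetilde Q_t$ and $\e^{-\i tL_\sys}$ is a trace-norm isometry, it is equivalent to bound $\norm\widetilde Q_t-\e^{\la^2 tM}\norm$; all constants below are uniform for $\str\ka\str$ small (the operators $Q_t,M$ carry $\ka$). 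To set up the kernel I would group each pair-configuration $\uw$ by the finest cut of the time axis into intervals no pair of $\uw$ straddles, call the sub-configurations living in a single such interval \emph{irreducible}, and resum them, obtaining an operator-valued kernel $\kappa(s_1,s_2)$ on $\scrB_1(\scrH_\sys)$, decaying in $s_1-s_2$, with
\beq
\widetilde Q_t=\Id+\mathop{\int}\limits_{0<s_2<s_1<t}\d s_1\,\d s_2\;\kappa(s_1,s_2)\,\widetilde Q_{s_2},\qquad \kappa=\sum_{p\ge1}\kappa^{(2p)},
\eeq
$\kappa^{(2p)}$ being the $p$-pair irreducible part. From $\norm K_w\normw\le C\la^2\str h(v-u)\str$ and the tree/cluster estimate of Section~\ref{sec: bounds on correlation functions} one gets the weighted bound $\int_0^\infty(1+r)^\al\sup_s\norm\kappa^{(2p)}(s+r,s)\norm\,\d r\le(C\la^2)^p$, so for $\str\la\str$ small the series converges, $\int_0^\infty(1+r)^\al\sup_s\norm\kappa(s+r,s)\norm\,\d r\le C\la^2$, and the rearrangement is justified.

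\noindent\textbf{Identifying the Born term.} Next I would compute $\kappa^{(2)}$ from \eqref{eq: explicit for k}: because the frequency weight $\hat h$ resonates only at $\ve\in\caE$, the time integral $\int_0^\infty\kappa^{(2)}(s+r,s)\,\d r$ equals the $s$-independent operator $\la^2 M$ with $M$ \emph{exactly} as in \eqref{def: lindblad} (all surviving terms are frequency-matched, of type $D_\ve(\cdot)D_{-\ve}$, so this is the secular generator and it is exact here, not a rotating-wave truncation). In particular $\la^2 M$ commutes with $L_\sys$ — consistent with the remark in the statement — and $\widetilde P_t:=\e^{\la^2 tM}$ solves the memory equation with the degenerate kernel $\la^2 M\,\delta(s_1-s_2)$.

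\noindent\textbf{Comparison.} Differentiating the memory equation and iterating once gives $\norm\partial_t\widetilde Q_t\norm\le C\la^2\e^{C\la^2 t}$, hence the modulus of continuity $\norm\widetilde Q_s-\widetilde Q_{s'}\norm\le C\la^2\str s-s'\str\e^{C\la^2 s}$. Subtracting the two memory equations and inserting this yields
\beq
\norm\widetilde Q_t-\widetilde P_t\norm\le\int_0^t C\la^2\,\norm\widetilde Q_s-\widetilde P_s\norm\,\d s+\mathrm{Err}(t),
\eeq
with $\mathrm{Err}(t)$ the sum of: (I) the memory tail from replacing $\int_0^{s_1}$ by $\int_0^\infty$ in the Born term, $\lesssim\la^2\e^{C\la^2 t}\int_0^t(1+s)^{-\al}\d s$; (II) the non-Markovian term $\lesssim\e^{C\la^2 t}\int_0^t\d s_1\int_0^{s_1}\norm\kappa^{(2)}(s_1,s_2)\norm\,\la^2\str s_1-s_2\str\,\d s_2$; (III) the higher irreducible kernels, $\lesssim tC\la^4\e^{C\la^2 t}$. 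For $\al>1$ all three are $O(\la^2)$, using $\int_0^\infty r\str h(r)\str\,\d r<\infty$ in (II); for $\al=1$ the same with an extra $\str\log\str\la\str\str$ (from $\int_0^t(1+s)^{-1}\d s$ and the analogous truncation in (II)); for $\al<1$, (I) is directly $\lesssim\la^2 t^{1-\al}\sim\la^{2\al}$, while in (II) one cuts the $r$-integral at $T_0\sim\la^{-2}$, balancing $\la^4\int_0^{T_0}r\str h(r)\str\,\d r$ against $\la^2\int_{T_0}^\infty\str h(r)\str\,\d r$, again $O(\la^{2+2\al})$ per unit time, hence $O(\la^{2\al})$ after integrating over $t=\frt/\la^2$. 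Thus $\mathrm{Err}(t)\le C\e^{C\la^2 t}\str\la\str^{2\al_*}$, and Gronwall gives $\norm\widetilde Q_t-\widetilde P_t\norm\le C\e^{C\la^2 t}\str\la\str^{2\al_*}$, the claimed estimate; the limit $\lim_{\la\to0}\e^{\i\la^{-2}\frt L_\sys}Q_{\la^{-2}\frt}=\e^{\frt M}$ follows by setting $t=\frt/\la^2$.

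\noindent\textbf{Main obstacle.} The real work is the first paragraph together with the sharp exponent in the third: establishing the memory-kernel resummation with the \emph{weighted} ($(1+r)^\al$) $L^1$-bound uniformly in $\la,\ka$ — exactly where the tree/Kotecky--Preiss machinery of Section~\ref{sec: bounds on correlation functions} is needed — and then extracting the error with exponent precisely $2\al_*$ \emph{uniformly in $t$}. The borderline $\al=1$ (the logarithm) and the cutoff optimisation for $\al<1$ are the points where a naive Gronwall/moment bound is insufficient and the decay assumption must be exploited with care.
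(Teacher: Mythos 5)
Your strategy (Nakajima--Zwanzig memory kernel plus Gronwall) is a genuinely different route from the paper's, which instead splits the Dyson expansion into \emph{leading} pair-configurations ($v_i<u_{i+1}$ for all $i$) and non-leading ones, Laplace-transforms the leading part to an exact resolvent $\hat Q_{z,\mathrm{l}}=(z+\i L_\sys-\la^2\hat F_z)^{-1}$, and compares with $(z+\i L_\sys-\la^2 M)^{-1}$ by a resolvent identity and a contour integral. However, your argument has a genuine gap at the step where you identify the Born term.

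You assert that $\int_0^\infty\kappa^{(2)}(s+r,s)\,\d r$ is $s$-independent and ``equals $\la^2M$ exactly,'' claiming the secular generator arises without a rotating-wave truncation. This is false. Using $\caT[K_{s,s+r}]=\la^2\,\e^{\i (s+r) L_\sys}F_r\,\e^{-\i s L_\sys}$ one finds
\beq
\int_0^\infty\kappa^{(2)}(s+r,s)\,\d r
=\la^2\,\e^{\i s L_\sys}\Bigl(\sum_{\ve\in\caE}\hat F_{-\i\ve}P_\ve\Bigr)\e^{-\i s L_\sys}
=\la^2 M+\la^2\!\!\sum_{\ve\neq\ve'}\e^{\i s(\ve'-\ve)}\,P_{\ve'}\hat F_{-\i\ve}P_\ve,
\eeq
since $M=\sum_\ve P_\ve\hat F_{-\i\ve}P_\ve$ retains only the diagonal ($\ve=\ve'$) blocks. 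The off-diagonal (non-secular) terms are present, oscillate in $s$ with frequency $\ve'-\ve=\caO(1)$, and are of size $\caO(\la^2)$ per unit time. In your Gronwall scheme these do not appear in any of your error terms (I)--(III): (I) is the memory tail, (II) the non-Markovian replacement $\widetilde Q_{s_2}\to\widetilde Q_{s_1}$, (III) the higher irreducible kernels. A naive $L^1$-in-time bound on the non-secular term gives $\int_0^{\la^{-2}\frt}\caO(\la^2)\,\d s=\caO(\frt)$, which does not vanish as $\la\to0$; you must exploit the oscillation (integration by parts against the slowly-varying $\widetilde P_s$, or equivalently the resolvent estimate near $z=-\i\ve$) to reduce it to $\caO(\la^2\str\log\str\la\str\str)$. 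This is precisely what the paper's contour-integral argument does (the $V_\ve$ contribution in Lemma~\ref{lemma: inverse laplace of lead}); it is the crux of the van Hove analysis and cannot be assumed away. Your decomposition of the Dyson series into irreducible blocks is sound, and the tail, non-Markovian, and higher-order errors (I)--(III) are handled correctly, but as written the proof is missing the dominant source of deviation from Markovianity at intermediate $\al$, namely the off-diagonal oscillations.
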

For completeness, we review the simple proof of this convergence in Appendix \ref{secweak coupling limit}.     We now turn to the

\begin{proof}[Proof of Statement 1) of Lemma \ref{lem: bound on primary polymers}] Note
\begin{itemize}
\item[$i.)$]   $L_\sys$ commutes with and $M$. 
\item[$ii.)$]   $\e^{\i t L_\sys}$ is an isometry on $\scrB_1{(\scrH_\sys)}$.
\item[$iii.)$] $L_\sys(P_e)=0$ for any $e \in \si(H_\sys)$, in particular $e_0$.
\end{itemize}
Therefore,  the spectral analysis of $M$ (Lemma \ref{lem: spectral analysis of m}) and the discussion following it implies that 
$\e^{-\i \la^{-2} L_\sys + M}$ has a simple eigenvalue $1$ and all other spectrum lies inside a circle of radius $ \e^{-g_M} $. 
On the other hand, recall that $T= Q_{\la^{-2}}$ and hence by Proposition \ref{prop: weak coupling}, $T- \e^{-\i \la^{-2} L_{\sys} +  M }  $ vanishes as $\la \to 0$. We apply spectral perturbation theory of isolated eigenvalues to conclude that, for sufficiently small $\str\la\str,\str\ka\str$, $T$ has an isolated eigenvalue $\e^{\theta_T}$ with corresponding spectral projector $R$ such that statement 1) of Lemma \ref{lem: bound on primary polymers} holds.   Since $T$ conserves the trace for $\ka=0$, it follows that  $\theta_T(\la,\ka)$ indeed vanishes for $\ka=0$; as $\la \to 0$, it reduces to the eigenvalue of $M$, hence to $0$, as well.  Since $T$ preserves positivity and the trace
 for $\ka=0$, we can choose  $\tilde \eta=\lone$ and $\eta$ a density matrix, where $\eta, \tilde\eta$ were defined as the right, resp.\ left  eigenvectors of $T$ (see beginning of Section \ref{sec: scalar polymer model}).  For $\ka \in \bbR$, the operator $T$ remains positivity-preserving and therefore $\eta,\tilde\eta$ can be chosen to be positive operators.
\end{proof}
For later use, we note that this perturbation argument also yields the bounds, for sufficiently small $\str \la \str, \str\ka\str$, 
\beq   \label{eq: bound eigenvectors}
 \norm \tilde\eta -\lone \norm \leq \caO(\str\ka\str), \qquad   \norm \eta - P_{e_0} \norm \leq C   \str\la\str^{2\al_*}.
\eeq

\subsection{Boundary polymers} \label{sec: boundary polymers}

\subsubsection{Dyson expansion}
We aim to write the analogue of the expansion \eqref{eq: third duhamel series} for the expression of $ \breve{Q}_{n}  \rho_{\sys}  $ given in \eqref{modifieddynamics}. Recalling the definition \eqref{Phikappa}, we 
 may write $U_\realinitial$  in the same form as \eqref{eq: duhamel on superspace}:
 \baq
 U_\realinitial  (\rho_\sys\otimes P_{\Om}) &=& \rho_\sys\otimes \e^{\i\Phi_{\ka}(\psi_\realinitial,0)}P_{\Om}\e^{-\i\Phi_{-\ka}(\psi_\realinitial,0)}\nonumber\\
 &=& \sum_{m \in \bbN}(-\i )^m  \mathop{\int}\limits_{-1\leq t_1 < \ldots < t_m\leq 0} \d t_1 \ldots \d t_m \,      L_{\inter, \ka} (t_m)  \ldots  L_{\inter, \ka} (t_2) L_{\inter, \ka} (t_1)  \rho_\sys\otimes P_{\Om}
  \eaq  
  with the definition
\baq\label{Linit}
  L_{\inter,\ka}(s) =      - \caL ( \lone  \otimes \Phi_{\ka}(\psi_\realinitial,0) )
  +  \caR ( \lone  \otimes\Phi_{-\ka}(\psi_\realinitial,0) ), \qquad s\in [-1,0).  
\eaq
In the same way we get 
 \beq
 U_\realfinal   =  \sum_{m \in \bbN}(-\i )^m  \mathop{\int}\limits_{n/\la^2\leq t_1 < \ldots < t_m\leq n/\la^2+1} \d t_1 \ldots \d t_m \,      L_{\inter,\ka} (t_m)  \ldots  L_{\inter, \ka} (t_2) L_{\inter, \ka} (t_1) 
  \eeq  
where
  \baq\label{Lfinal}
  L_{\inter,\ka}(s) =      - \caL ( \lone  \otimes \Phi(\e^{\i n \str q \str/\la^2}\psi_\realfinal,0) ), \qquad s\in (n/\la^2,n/\la^2+1)].  
\eaq
Thus we end up with a similar series as in \eqref{eq: first duhamel series} and in \eqref{eq: second duhamel series}:
\baq
\e^{\i n/\la^2 L_\sys} \breve Q_{n} \rho_\sys&=&     \sum_{m \in \bbN}  (-1
)^{m}\mathop{\int}\limits_{-1\leq t_1 < \ldots < t_{2m} <n/\la^2+1} \d t_1 \ldots \d t_{2m} \, \,  \Tr_{\res}\left[  L_{\inter,\ka}(t_{2m})  \ldots  L_{\inter,\ka}({t_2})  L_{\inter,\ka}({t_1})  (\rho_\sys\otimes P_{\Om})\right] \nonumber\\
&=&     \sum_{m \in \bbN}  \mathop{\int}\limits_{-1\leq t_1 < \ldots < t_{2m} \leq n/\la^2+1} d t_1 \ldots \d t_{2m} \, \,  \sum_{\pi \in \textrm{Pair}(t_1,\ldots,t_{2m})} \caT\left[ \mathop{\opprod}\limits_{\{u,v \} \in \pi}  K_{u,v}  \right] 
\nonumber\\
&=&  
\mathop{\int}\limits_{\Si_{[-1,n/\la^2+1]}} \mu(\d \uw)   \caT\left[  \mathop{\opprod}\limits_{w \in \uw} K_{w}    \right]
 \label{eq: 4th duhamel series}
\eaq
where $K_{u,v}$  is defined by the formula \eqref{Kdefi}, now with $u,v$ in $[-1,n/\la^2+1]$ instead of $[0,n/\la^2]$.
Note that the choice of  time-intervals $[-1,0]$ and $[n/\la^2, n/\la^2+1]$ is somewhat arbitrary; we do this to display the boundary terms in a way that resembles the bulk terms as close as possible. 

\subsubsection{Correlation functions}

First, we extend  the definition of the graph $\caG(\uw)$ by allowing $A \subset \breve{I}_n$ and setting
\beq
\Dom (0)  = [-1, 0], \qquad  \Dom (n+1)  = [n/\la^2, n/\la^2+1]  
\eeq
We generalize \eqref{eq: y operators} to $\tau =0, n+1$ by putting
\beq
Y_{\initial} =\widetilde Y_{\initial} = \bsI_{\initial} [\lone], \qquad   Y_{\final} = \bsI_{\final}  [\e^{\i (n/\la^2) L_\sys}], \qquad  \widetilde Y_{\final} =  \bsI_{\final}  [\e^{-\i (n/\la^2) L_\sys}] 
\eeq
and then defining $Y_A, \widetilde Y_A$  with $A \subset \breve I_n$ as in  \eqref{eq: y operators}.
With these definitions, the formulae \eqref{eq: correlation micro}, \eqref{eq: connected micro} hold generally for $A \subset \breve I_n$;
 \beq
  \widetilde Y_A G_A Y_A  =      \mathop{\int}\limits_{{\Si}_{\Dom(A)}}   \mu(\d \uw)     \indicator_{[\supp\caG_A(\uw) =A] } \caT_A \left[   \mathop{\opprod}\limits_{w \in \uw} K_{w}    \right] 
 \eeq
 \beq
 \widetilde Y_A  G^c_A  Y_A  =       \mathop{\int}\limits_{{\Si}_{\Dom(A)}}     \mu(\d \uw)     \indicator_{[ \caG_A(\uw)  \, \textrm{connected}  ] }       \caT_A \left[   \mathop{\opprod}\limits_{w \in \uw} K_{w}    \right]
\eeq
\subsection{Bounds on boundary polymers } \label{sec: bounds on boundary polymers}

In this section, we prove the bound \eqref{eqbound operator polymers boundary} in Lemma  \ref{lem: bound on primary polymers}. 
We first  generalize Lemma \ref{lem: a priori} to read
\begin{lemma} For $A \subset \breve I_n$
\label{lem: a priori generalized}
\baq  \label{eqminimal spanning bound boundary}
 \norm G^c_A  \normw  & \leq  &   \breve C \e^{C \str A \str }  \mathop{\int}\limits_{\Si_{ \Dom  (A)} } \mu(\d \uw)  \,      \left( \mathop{\prod}\limits_{w \in \uw }   \norm K_{w} \normw \right) \,   \indicator_{[ \underline{w} \, \textrm{spans} \, A\, \textrm{minimally}]}
\eaq
\end{lemma}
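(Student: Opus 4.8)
The plan is to follow the same three steps as in the proof of Lemma~\ref{lem: a priori}, the only novelty being that the pairs $w=\{u,v\}$ may now have a leg in one of the two boundary intervals $\Dom(\initial)=[-1,0]$ or $\Dom(\final)=[n/\la^2,n/\la^2+1]$, where the Dyson vertices are the boundary ones \eqref{Linit},\eqref{Lfinal} instead of the bulk vertex \eqref{LIdef}. First I would start from the Dyson representation of $G^c_A$ for $A\subset\breve I_n$ recorded in Section~\ref{sec: bounds on boundary polymers}, apply the properties of $\norm\cdot\normw$ from Section~\ref{sec: norms} together with the isometry of the $Y$-operators exactly as in \eqref{eqbasic bound on correlations bulk}, and obtain $\norm G^c_A\normw\le\int_{\Si_{\Dom(A)}}\mu(\d\uw)\,\indicator_{[\caG_A(\uw)\ \mathrm{connected}]}\prod_{w\in\uw}\norm K_w\normw$. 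Then I would invoke the minimal-spanning-tree splitting \eqref{eq: minimal split off} (a purely combinatorial fact, insensitive to whether the indices label bulk or boundary times) together with the product property \eqref{eq: product property} to write $\uw=\uw'\cup\uw''$ with $\uw'$ spanning $A$ minimally, and factor the estimate accordingly.

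The only genuinely new input is the boundary analogue of the unconstrained estimate \eqref{eq: unconstrained estimate}, i.e.\ a bound on $\int_{\Si_{\Dom(A)}}\mu(\d\uw'')\prod_{w\in\uw''}\norm K_w\normw$. As in \eqref{eq: unconstrained estimate} this integral is at most $\exp$ of the total one-pair measure $I(A):=\int_{\Si^{1}_{\Dom(A)}}\mu_1(\d w)\,\norm K_w\normw$, which I would bound by splitting according to which intervals $\Dom(\tau)$, $\tau\in\breve I_n$, the two legs of $w$ fall into. When both legs lie in the bulk part $\Dom(A\cap I_n)$ one gets $\le\la^2C\,\str\Dom(A\cap I_n)\str\,\norm h\norm_1=C\,\str A\cap I_n\str\,\norm h\norm_1\le C\str A\str$, verbatim as before. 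The remaining, finitely many, possibilities --- both legs in $\Dom(\initial)$; both in $\Dom(\final)$; one in $\Dom(\initial)$ and one in the bulk; one in the bulk and one in $\Dom(\final)$; one in $\Dom(\initial)$ and one in $\Dom(\final)$ --- are handled by Wick-contracting the appropriate vertices: one checks that $\norm K_{u,v}\normw$ is then at most a constant times $\str h_{\ltimes}(\cdot)\str$, $\str h_{\rtimes}(\cdot)\str$ or $\str h_{\Join}(\cdot)\str$, or --- when both legs sit in the same boundary interval, where $\Phi_{\ka}(\psi_{\realinitial},0)$ and $\Phi_{\ka}(\psi_{\realfinal},0)$ carry no time label and $K_{u,v}$ is simply a constant --- at most a boundary-dependent constant controlled by $\norm\psi_{\realinitial}\norm_\frh$ or $\norm\psi_{\realfinal}\norm_\frh$ and $\e^{\str\ka\str}$. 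Since each boundary integration variable ranges over an interval of length $1$, the $L^1$-conditions \eqref{ass: regularity initial},\eqref{ass: regularity final} and the $\sup$-condition \eqref{ass: regularity join} of Assumption~\ref{ass: initial states and observables} make every such contribution at most a constant $\breve C$ depending on $\psi_{\realinitial},\psi_{\realfinal}$. Hence $I(A)\le C\str A\str+\breve C$ and $\int_{\Si_{\Dom(A)}}\mu(\d\uw'')\prod_{w\in\uw''}\norm K_w\normw\le\e^{C\str A\str+\breve C}=\breve C\e^{C\str A\str}$.

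Putting the two steps together yields $\norm G^c_A\normw\le\breve C\e^{C\str A\str}\int_{\Si_{\Dom(A)}}\mu(\d\uw)\,\indicator_{[\uw\ \mathrm{spans}\ A\ \mathrm{minimally}]}\prod_{w\in\uw}\norm K_w\normw$, which is \eqref{eqminimal spanning bound boundary}. I expect the only mildly delicate point to be the case of two legs in the same boundary interval: because the boundary interaction \eqref{Linit},\eqref{Lfinal} does not depend on its dummy time variable, the corresponding $K_{u,v}$ does not decay in $v-u$; but it is a bounded operator and is integrated only over the bounded simplex $\{-1\le u<v\le0\}$ (resp.\ its $\Dom(\final)$ counterpart), so it contributes harmlessly to $\breve C$. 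Everything else is a direct transcription of the bulk argument, the main bookkeeping being to keep track of which of $h,h_{\ltimes},h_{\rtimes},h_{\Join}$ governs each Wick contraction.
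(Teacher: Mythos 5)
Your argument is correct and follows the paper's own route: reduce to a minimally spanning subset via \eqref{eq: minimal split off}, then bound the residual integral by an exponential in the total one-pair measure, controlling the boundary contributions with \eqref{eq: bound k as h} and Assumption \ref{ass: initial states and observables}. The only cosmetic difference is that the paper organizes the unconstrained $\uw''$-integral by splitting $\uw''$ into bulk, $\realinitial$-, $\realfinal$- and $\Join$-pieces and factorizing via \eqref{eq: product property}, whereas you bound the whole thing by $\e^{I(A)}$ and then decompose $I(A)$ by leg types; the two give the same estimate.
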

\begin{proof}
To obtain bounds on the boundary pairings, we first note that  (recall $ \Dom (I_n)=[0, n/\la^2]$)
\beq
\norm K_{u,v} \normw \leq  \left\{  \begin{array}{ll}   C  \str \la h_{\realinitial}(v)\str   & u\in \Dom(0),  v\in \Dom(I_n)   \\[4mm]
  C  \str \la h_{\realfinal}(n/\la^2-u)\str   &    u\in \Dom(I_n),  v\in \Dom(n+1)     \\[4mm]
   C  \str  h_{\Join}(n/\la^2)\str  & u \in \Dom(0) , v \in \Dom(I_n)   \\[4mm]
   C   \norm \psi_{\realinitial} \norm^2  &   u, v \in \Dom(0)   \\[4mm]
      C   \norm \psi_{\realfinal} \norm^2  &   u, v \in \Dom(n+1)   \\[4mm]
 \end{array}\right. 
  \label{eq: bound k as h} 
  \eeq

We proceed as in the proof of Lemma \ref{lem: a priori} and extract from each set of pairs $\uw$ a minimally spanning subset $\uw'$. To bound the $\uw''= \uw \setminus \uw'$-integral in \eqref{eq: minimal split off}, we need an estimate on   
\beq
\int_{\Si_{\Dom(A)}} \mu (\d \uw)    \prod_{w \in \uw} \norm K_w \normw.  \label{eq: apriori boundary}
\eeq  
Denote, for $w=\{ u,v\}$, 
\baq
\chi_{\realinitial}(w)  &:= &   1_{u \in \Dom(0)} 1_{v \notin  \Dom(n+1)}   \nonumber\\
\chi_{\realfinal}(w)  &:= &   1_{u \notin \Dom(0)} 1_{v \in \Dom(n+1)}   \nonumber\\
\chi_{\Join}(w)  &:= &   1_{u \in \Dom(0)} 1_{v \in \Dom(n+1)} 
\eaq
and $\chi_{\iota}(\uw) := \prod_{w \in \uw} \chi_{\iota}(w)$ for $\iota=\realinitial,\realfinal,\Join$. 
For any $\uw \in \Dom(A)$, we can write $\uw = \uw' \cup \uw_{\realinitial} \cup \uw_{\realfinal} \cup \uw_{\Join} $ such that $\uw' \in \Dom(A \cap I_n)$ and $\chi_{\iota}(\uw_{\iota})=1$. Hence \eqref{eq: product property} implies 
\beq
\eqref{eq: apriori boundary}
\leq
\left(\mathop{\int}\limits_{\Si_{\Dom(A \cap I_n)}} \mu (\d \uw)   \prod_{w \in \uw}  \norm K_w \normw  \right)\, \left( 
\prod_{\iota\in\{\realinitial,\realfinal,\Join\}}
\mathop{\int}\limits_{\Si_{\Dom(A)}} \mu (\d \uw)   \chi_{\iota}(\uw) \prod_{w \in \uw} \norm K_w \normw \right).  \label{eq: apriori boundary1}
\eeq
The first integral is estimated
in \eqref{eq: unconstrained estimate}, and the three others in the same way using the 
bounds \eqref{eq: bound k as h}. Thus,  the $\iota=\realinitial,\realfinal$ -factors are bounded by  
$$ \sum_{m \in \bbN}   \frac{1}{m!}\left(   C   \norm \psi_{\iota} \norm^2 + C\int_{[0, n/\la^2]} \d s \,  \str \la  h_{\iota}(s) \str \right)^m  \leq \e^{\str \la \str C \norm h_{\iota} \norm_1+  C  \norm \psi_{\iota} \norm^2}$$ 
and the $\iota=\Join$ by $\exp (C \str h_{\Join}(n/\la^2)\str)$. 
Altogether  we conclude that the product \eqref{eq: apriori boundary1}, and hence \eqref{eq: apriori boundary}, is bounded by $\breve C \e^{ C \str A \str}$.
\end{proof}

To establish the bound \eqref{eqbound operator polymers boundary}, we closely follow the proof of \eqref{eqbound operator polymers} given in Section \ref{sec: proof of bound on operator polymers}. 
First, we apply the definition of the edge factors  $\hat e(\tau, \tau')= \hat e(\tau', \tau)$ in \eqref{def: edge factors} to the case where  $\{ \tau,\tau' \} \not\subset I_n$; explicitly,
\baq
 \hat{e}(\tau,\tau') &:=&   \left\{  \begin{array}{ll}   C  \mathop{\int}\limits_{\Dom{(\tau')}}   \d s \,    \str \la h_{\realinitial}(s) \str    &  \tau=\initial, \tau' \in I_n   \\[4mm]
 C \mathop{\int}\limits_{\Dom{(\tau')}}   \d s \,    \str \la h_{\realfinal}(n/\la^2-s) \str    &   \tau=\final, \tau' \in I_n      \\[4mm]
 C  \,\,\,\,  \str  h_{\Join}(n/\la^2) \str    &   \tau=\initial, \tau'= \final   
 \end{array}\right. 
  \eaq
  and then also $ \hat{e}_\al(\tau,\tau') :=   (1+ \str \tau'-\tau \str)^{\al} \hat{e}(\tau,\tau')$.
These boundary  edge factors satisfy 
\beq
 \hat{e}_\al(\initial,\final) \leq   \breve C, \qquad  \text{and} \qquad \sum_{\tau' \in I_n}   \hat{e}_\al(\tau,\tau')   \leq  \str\la\str \breve C, \qquad \textrm{for}\, \tau = \initial,\final   \label{eq: bounds edge factors boundary}
\eeq
as follows immediately from the  properties of $h_{\ltimes}, h_{\rtimes}, h_{\Join}$ in Assumption \ref{ass: initial states and observables}. 
To get \eqref{eqbound operator polymers boundary}, we first restrict ourselves  to $A$ with $\initial \in A$ but $\final \not \in A$ (in particular, $\tau=0$ in \eqref{eqbound operator polymers boundary}).  Starting from \eqref{eqminimal spanning bound boundary} and employing the definitions of $ \hat{e}_\al(\tau,\tau')$ above, we proceed as in \eqref{eqsum over trees} to derive
  \baq
\frac{1}{\breve C}  \sum_{A \in \breve I_n: A \ni \initial, \final \not \in A} (C\breve\ep)^{-\str A \cap I_n \str }\dist(A)^{\al} \norm G_A^c \normw & \leq &     \sum_{\textrm{trees} \, \scrT:  \initial \in \caV(\scrT), \final \not \in \caV(\scrT)   }  (C\breve \ep)^{-\str \caE(\scrT) \str}  \,     
\prod_{ (\tau,\tau') \in \caE(\scrT) }  \hat e_{\al}(\tau,\tau') \nonumber
  \label{eqsum over trees boundary}
\eaq
where we used $\str A\cap I_n \str \geq \str \scrE(\scrT)\str$.
For every tree $\scrT$ in this sum, we identify vertices $\tau_1 <\tau_2< \ldots < \tau_m$ such that $\{ \initial, \tau_j\}  \in \scrE(\scrT)$ and $m\geq 1$.  Then the sum over $\scrT$ is recast as a sum over $m$, choice  of edges $\tau_1 <\tau_2< \ldots < \tau_m$,  and trees growing out of them (including the possibility of no tree). This yields the bound
\baq
&& \sum_{m \geq 1} \sum_{\tau_1 < \ldots <\tau_m}   \prod_{j=1}^m   ( C\breve\ep)^{-1}\hat e_{\al}(0,\tau_j) \left(1+  \sum_{\textrm{trees} \, \scrT:  \tau_j \in \caV(\scrT) \subset I_n   }  (C \breve\ep)^{-\str \caE(\scrT) \str}  \,     
\prod_{ (\tau,\tau') \in \caE(\scrT) }  \hat e_{\al}(\tau,\tau') \right)  \\[2mm]
&&\leq  \sum_{m \geq 1} \sum_{\tau_1 < \ldots <\tau_m}   \prod_{j=1}^m   2 (C \breve \ep)^{-1}\hat e_{\al}(0,\tau_j) \leq  \sum_{m \geq 1} \frac{1}{m!} \left( \sum_{\tau \in I_n}   2 (C \breve \ep)^{-1} \hat e_{\al}(0,\tau) \right)^m  \leq   \breve C 
\eaq
The first inequality follows by  \eqref{eq: sum over trees bounded one}, since the trees are in the bulk, the last inequality uses \eqref{eq: bounds edge factors boundary}.
%
The bound over $A$ that do contain $\final$ but not $\initial$ and  $A$ that contain both $\final$ and $\initial$,  is analogous. Note however that for  $A$ that contain both $\final$ and $\initial$ we cannot extract an $\breve \ep$ factor for every edge since some trees contain the edge $\{\initial,\final \}$ (see in \eqref{eq: bounds edge factors boundary}). Therefore, the power of $(\breve\ep)^{-1}$ in  \eqref{eqbound operator polymers boundary} is not $\str A\str-1$ but $\str A \cap I_n \str$.

\section{Analysis of polymer model: Proof of main results}  \label{sec: analysis of polymer models}

\subsection{Proof of Theorem  \ref{thm: steady state}: Approach to steady state} \label{sec: approach to steady state general}

To prove Theorem \ref{thm: steady state}, we first exhibit approach to a steady state for discrete times of the form $t= n/{\la^2}$ and for a restricted class of initial states and observables (Section \ref{sec: approach to steady state}). Then we eliminate these restrictions in Sections \ref{secfromdiscretetocontinuoustime}, \ref{sec: from disc to cont} and \ref{eq: real proof of steady state}.  The reasoning in Section \ref{sec: approach to steady state} is based heavily on cluster expansions  and we advise the reader to read Appendix  \ref{appsec: cluster expansions} before continuing. 

\subsubsection{Approach to steady state for discrete times} \label{sec: approach to steady state}

We start from the polymer representation \eqref{eqscalar polymer model boundaries}  with $\ka=0$, hence $\theta_T=0$, 
\beq
Z_{n} =  k_\realinitial k_\realfinal  \sum_{\caA \in \breve \frB^{1}_{n}}       \prod_{A \in \caA}  v(A)  
\eeq
where $Z_n =Z_n(O,\rho_0, \ka=0)$  depends on the initial state and observable via $k_\realinitial,  k_\realfinal$ and the weights $v(A)$ for $A \cap \{ 0, n+1\} \neq \emptyset$.

We write $A \sim A'$ whenever $ \distance (A,A')  \leq 1$, and for a collection $\caA$, $\caA \sim A'$ whenever there is at least one $A \in \caA$ such that $A \sim A'$ (i.e., $\supp \caA \sim A'$). These definitions coincide with those in Section \ref{appsec: logarithm}  of Appendix  \ref{appsec: cluster expansions}. 

We separate each collection $\caA$ into its boundary and bulk polymers by writing
\beq
Z_{n} =   k_\realinitial k_\realfinal   \sum_{\scriptsize{\left.\begin{array}{c}   
\breve\caA \in \breve{\frB}^1_{n} 
\\
\forall A \in \breve\caA:  A \cap \{ \initial, \final\} \neq \emptyset
\end{array} \right. }}
\left(\prod_{A \in \breve\caA} v(A) \right )   Z_{n , \supp\breve\caA}  \label{eqpolymer with boundary isolated}
\eeq
where
\beq   Z_{n, A'}  :=
  \sum_{\scriptsize{\left.\begin{array}{c}   \caA \in {\frB}^1_{n}     \\   \caA \nsim  A' 
  \end{array} \right. }}    \prod_{A \in \caA}  v(A)      \label{eq: def z restricted}
\eeq
Note that $Z_{n, A'}$ depends only on bulk polymer weights and that $\breve\caA$ consist maximally of two sets, one containing the element $\initial$ and one containing $\final$.  By the identity \eqref{eq: z empty} with $\ka=0$, 
\beq  \label{eq: definition empty zet}
Z_{n , \emptyset} =  Z_n( \lone, \eta \otimes P_{\Om}, \ka=0)
\eeq  
By unitarity and the fact that $\Tr_\sys \eta=1$, we have
\beq
Z_{n , \emptyset}=  \Tr [\e^{-\i (n/ \la^{2})  H } (\eta \otimes P_{\Om})  \e^{\i  (n/ \la^{2})   H }] = \Tr ( \eta \otimes P_{\Om} )=1   \label{eq: z emptyset is one}
\eeq
The quantity $Z_{n, A'}$ can be viewed as the partition function $\Upsilon_n$ of a polymer gas with  polymer weights  $
w (A) \equiv  v(A)   1_{[A \nsim  A']} $ (i.e.\ it is of the form \eqref{eq: abstract polymer gas} In Section \ref{appsec: logarithm}).
For $\ep$ small enough, the  Kotecky-Preiss criterion   \eqref{eqkotecky preiss abstract} is satisfied with $\delta  \equiv C \ep$ because of \eqref{eqbound scalar polymers} with $\al=0$ ($\al \neq 0$ will be used below), and hence Proposition \ref{prop: basic cluster expansion result} applies and yields
\beq \label{eq: rep z n a}
\log Z_{n, A'}   = \sum_{\caA \in \frB_n}   w^T(\caA)   =  \sum_{\caA \in \frB_n}   v^T(\caA)      \indicator_{[ \caA \nsim A' ]} 
\eeq
where the truncated weights $v^T(\cdot), w^T(\cdot)$ are related to $v(\cdot), w(\cdot)$, respectively,  through the formula \eqref{def: truncated cluster weights} in Section \ref{appsec: logarithm}.  Comparing to the expansion of $\log Z_{n, \emptyset}=0$, we get
\beq \label{eqgeneral expression for excluded z}
\log Z_{n,A'}  =   \log \frac{Z_{n,A'} }{Z_{n, \emptyset} }   =   -\sum_{\caA \in \frB_n}  v^T(\caA)       \indicator_{[ \caA \sim A' ]} 
\eeq
By the bound \eqref{eqbound on clusters touching something} in Proposition \ref{prop: basic cluster expansion result}, applied with $A_0=A'$, we get immediately 
\beq  \label{eq: boundzcon}
\str \log Z_{n, A'} \str \leq  C\ep \str A' \str 
\eeq
Next, we state

\begin{lemma}   \label{lem: existence z limits}    For $\str\la\str$ small enough and $\ka=0$, the following limits exist
\baq
z_{\realinitial}   &:=&   \lim_{n \to \infty}     \sum_{A \subset \breve I_n:  A \ni \initial, \final \notin A}  v(A)   Z_{n, A}    \label{eq: limit zinitial} \\
z_{\realfinal}   &:=&   \lim_{n \to \infty}     \sum_{A  \subset \breve I_n: A \ni \final, \initial \notin A}  v(A)   Z_{n, A}   \label{eq: limit zfinal}
\eaq
\end{lemma}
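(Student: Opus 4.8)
The plan is to run the standard passage to the infinite-volume limit for a convergent polymer expansion, carefully tracking which weights depend on the macroscopic time $n$ and at which point they stabilise. I treat $z_{\realinitial}$ in detail; $z_{\realfinal}$ follows by the same argument after reindexing the polymers sitting near $\final$ by their offset from the right end and using the covariance relation \eqref{eq: independence final} (together with the translation invariance of the bulk weights) in place of \eqref{eq: independence macro}. Throughout $\ka=0$, so $\theta_T=0$, $Z_{n,\emptyset}=1$ by \eqref{eq: z emptyset is one}, and the estimates of Lemma \ref{lem: bound on scalar polymers} are available.

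First I fix a boundary polymer $A$ with $\initial\in A$ and $\final\notin A$ and show that $v(A)$ and $Z_{n,A}$ converge separately as $n\to\infty$. The weight $v(A)$ is eventually independent of $n$ by \eqref{eq: independence macro} (it stops depending on $n$ once $n\geq\max A$); write $v_\infty(A)$ for this stable value. For $Z_{n,A}$ I feed its defining polymer gas into the cluster expansion exactly as in \eqref{eqgeneral expression for excluded z}, obtaining $\log Z_{n,A}=-\sum_{\caB\in\frB_n}v^T(\caB)\,\indicator_{[\caB\sim A]}$. Here every $\caB$ is a cluster of \emph{bulk} polymers only, whose truncated weights $v^T(\caB)$ carry no $n$-dependence at all — the $n$-dependence entered the model solely through $U_{\final}$, which touches no bulk polymer — so as $n$ grows one merely enlarges the index set $\frB_n$. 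Since $\sum_{\caB:\,\caB\sim A}|v^T(\caB)|\leq C\ep|A|<\infty$ uniformly in $n$ (this is the bound \eqref{eqbound on clusters touching something} underlying \eqref{eq: boundzcon}), the series is absolutely convergent and $\log Z_{n,A}\to-\sum_{\caB\sim A}v^T(\caB)=:\log Z_{\infty,A}$, the sum now ranging over clusters of polymers in $\{1,2,\dots\}$; exponentiating gives $Z_{n,A}\to Z_{\infty,A}$, with the uniform two-sided bound $\e^{-C\ep|A|}\leq|Z_{n,A}|\leq\e^{C\ep|A|}$ coming again from \eqref{eq: boundzcon}.

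Next I promote this to the full sum \eqref{eq: limit zinitial} by dominated convergence over a growing index set. View $\sum_{A\subset\breve I_n:\,\initial\in A,\,\final\notin A}v(A)Z_{n,A}$ as a sum over the increasing family $\mathcal I_n=\{A\subset\breve I_n:\ \initial\in A,\ \final\notin A\}$, whose union is the set $\mathcal I_\infty$ of all finite subsets of $\{0,1,2,\dots\}$ containing $\initial$, and extend each summand to $\mathcal I_\infty$ by setting it to $0$ off $\mathcal I_n$. By the previous paragraph these functions converge pointwise to $v_\infty(A)Z_{\infty,A}$, and for $\ep$ small enough (so that $C\ep\leq a_v$) they are dominated in modulus by $|v_\infty(A)|\,\e^{a_v|A|}$. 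Finally, each $A\in\mathcal I_\infty$ satisfies $A\neq\{\initial,\final\}$ and $A\neq\breve I_n$, so passing to the limit $n\to\infty$ in \eqref{eqbound scalar boundary polymers} (its left-hand side is increasing in $n$, each term stabilising to the $v_\infty$-value) yields $\sum_{A\in\mathcal I_\infty}|v_\infty(A)|\,\e^{a_v|A|}\leq\breve C\breve\ep<\infty$ (using $\dist(A)^\al\geq1$), a summable majorant independent of $n$. Dominated convergence then gives $z_{\realinitial}=\sum_{A\in\mathcal I_\infty}v_\infty(A)Z_{\infty,A}$, establishing \eqref{eq: limit zinitial}; \eqref{eq: limit zfinal} is proved verbatim.

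The only genuine obstacle is the bookkeeping in the second paragraph: one must be explicit that the clusters entering $Z_{n,A}$ are built from bulk polymers alone (hence carry truly $n$-independent weights), and that the exponential weight in the Kotecky--Preiss criterion supplies one and the same summable majorant for all $n$, so that the two nested dominated-convergence arguments — first over clusters, then over boundary polymers — can both be run. Everything else is the routine thermodynamic-limit argument for a convergent polymer model.
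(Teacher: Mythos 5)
Your proof is correct, and it establishes Lemma \ref{lem: existence z limits} by a genuinely different route from the paper. The paper runs a Cauchy argument: for $n'>n$ it writes the difference of the two partial sums as a sum of polymers sticking out past $n$ plus a sum weighted by $Z_{n',A}/Z_{n,A}-1$, and bounds each piece by $\breve C\breve\ep\, n^{-\al}$ using the decay of cluster weights and the boundary polymer bound \eqref{eqbound scalar boundary polymers}. You instead prove pointwise convergence of each factor --- $v^{(n)}(A)$ stabilises by \eqref{eq: independence macro}, and $\log Z_{n,A}$ is a monotonically growing index set of $n$-independent bulk cluster weights with a uniform absolute bound --- and then close with two nested dominated-convergence arguments, the second using $|v_\infty(A)|\e^{a_v|A|}$ as a majorant, summable by passing to the limit in \eqref{eqbound scalar boundary polymers}. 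Your version is conceptually cleaner and isolates the key $n$-independence observations nicely; the trade-off is that the paper's Cauchy computation produces the quantitative rate $n^{-\al}$ as a by-product, and the proof of Lemma \ref{lem: convergence invariant state} explicitly reuses those rate estimates ("the upper line has already been obtained in the proof of Lemma \ref{lem: existence z limits}"). So if one adopted your proof, the rate estimates on $\sum_{A_\realinitial}v(A_\realinitial)Z_{n,A_\realinitial}-z_\realinitial$ would still have to be derived (by essentially the paper's splitting) inside the proof of Lemma \ref{lem: convergence invariant state} rather than inherited from here. Your sketch of $z_\realfinal$ via reindexing by the offset from the right end and \eqref{eq: independence final} together with translation invariance of bulk weights is the right fix and matches the paper's one-line remark about it.
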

Note that the number 
 $z_{\realinitial}=z_{\realinitial}(\rho_0)$ does not depend on $O$ and $z_\realfinal= z_\realfinal(O)$ does not depend on $\rho_0$. 
\begin{proof}
To deal with the first limit, we recall the (partial) $n$- independence of polymer weights \eqref{eq: independence macro}, which enables, for $n'>n$, 
\baq
 && \sum^{(n')}_A  v^{(n')}(A)   Z_{n', A} -  \sum^{(n)}_A v(A)   Z_{n, A} =  \sum^{(n')}_{A: \max A >n}  v^{(n')}(A)  Z_{n',A} + \sum^{(n)}_Av(A) (\frac{Z_{n', A}}{Z_{n, A}}-1) Z_{n,A}   \label{eq: cauchy n and n prime}
\eaq
where we abbreviated $ \sum_{A \subset \breve I_n:  A \ni \initial, \final \notin A}$ by $\mathop{\sum}\limits^{(n)}_A$ (in particular, the constraint $0 \in A$ is implicit in all terms) and we indicated the $n$-dependence of polymer weights whenever its omission would cause an  ambiguity. 
We use \eqref{eq: boundzcon} and the bound $$\sum^{(n)}_{A: \dist(A)\geq m}  \str v(A) \str \e^{a_v \str A \str}    \leq  m^{-\al}  \breve C \breve \ep,$$ which follows from  (\ref{eqbound scalar boundary polymers}) in Lemma \ref{lem: bound on scalar polymers}, to conclude that the first term in \eqref{eq: cauchy n and n prime} is bounded by $\breve C \breve \ep n^{-\al}$.  
To bound the second term,  we write, for $\max A \leq n$, 
\beq
\str \log\frac{Z_{n', A}}{Z_{n, A}}\str \leq  \sum_{\caA \in \frB_{n'}:  \caA \sim A} 1_{\max \supp \caA > n} \str v^T(\caA)  \str \leq  C  \ep (1+n-\max A)^{-\al}
\eeq
The first inequality follows by inspecting \eqref{eqgeneral expression for excluded z}  and the second follows by the bound \ref{eqbound scalar polymers} in Lemma \ref{lem: bound on scalar polymers}, which implies decay of cluster weights, as explained in Section \ref{appsec: decay of cluster weights}. 
To conclude, we estimate the second term of \eqref{eq: cauchy n and n prime} as 
\beq
 \sum^{(n)}_A \str v(A) \str  \left\str \frac{Z_{n', A}}{Z_{n, A}}-1 \right\str \str Z_{n,A}  \str \leq  \sum^{(n)}_{A: \max A \leq n/2}   \frac{ \ep C   \e^{\ep C \str A \str}  \str v(A) \str }{(1+n-\max A)^{-\al}}   +   \sum^{(n)}_{A: \max A > n/2}    \str v(A) \str     \e^{\ep C \str A \str}  
\eeq
Using again Lemma \ref{lem: bound on scalar polymers}, this is bounded by $\breve C \breve \ep n^{-\al} $.

The convergence in  \eqref{eq: limit zfinal} is similar upon recalling  the comparison \eqref{eq: independence final} between weights of polymers in $\breve I_n$ and $\breve I_{n'}$.
\end{proof}
%
In what follows, we now longer trace everywhere the factors $\ep,\breve \ep$, thus making the bounds less sharp then possible.

\begin{lemma}  \label{lem: convergence invariant state}    For $\str\la\str$ small enough and $\ka=0$,
\beq \label{eq: convergence to k z}
\left\str Z_n(O, \rho_0, \ka=0)  -   k_{\realinitial} k_{\realfinal} (1+ z_{\realinitial})(1+z_{\realfinal}) \right\str    \leq   \breve C n^{-\al}
\eeq
and we identify $ k_{\realinitial}  (1+ z_{\realinitial})= \Tr \rho_0$. 
\end{lemma}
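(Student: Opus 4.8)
The plan is to start from the decomposition \eqref{eqpolymer with boundary isolated}, which expresses $Z_n(O,\rho_0,\ka=0)$ as a sum over the boundary collection $\breve\caA$ (which is empty, a single set containing $\initial$, a single set containing $\final$, or a pair of such sets, or the single set $\breve I_n$) of $\big(\prod_{A\in\breve\caA}v(A)\big)Z_{n,\supp\breve\caA}$. Thus we organise the sum according to which boundary times appear in $\supp\breve\caA$. First I would isolate the four contributions: (i) $\breve\caA=\emptyset$, giving $Z_{n,\emptyset}=1$ by \eqref{eq: z emptyset is one}; (ii) $\breve\caA=\{A\}$ with $\initial\in A,\ \final\notin A$, giving $\sum^{(n)}_A v(A)Z_{n,A}$; (iii) the mirror-image term with $\final$; (iv) $\breve\caA=\{A_1,A_2\}$ with $\initial\in A_1,\ \final\in A_2$, giving $\big(\sum^{(n)}_{A_1}v(A_1)\big)\big(\sum^{(n)}_{A_2}v(A_2)\big)$ up to the compatibility constraint $A_1\nsim A_2$ and up to $Z_{n,A_1\cup A_2}$; together with the single-set term $A=\breve I_n$, which by \eqref{eqbound scalar boundary polymers special} is bounded by $\breve C\e^{-2a_v n}$ and hence negligible.

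Next I would show each of the three nontrivial sums converges to the advertised limit with rate $n^{-\al}$. For the one-sided terms (ii) and (iii), this is exactly Lemma \ref{lem: existence z limits}: those sums converge to $z_\realinitial$ and $z_\realfinal$ respectively. Moreover the proof of that lemma already produced the quantitative Cauchy estimate $\big|\sum^{(n')}_A v^{(n')}(A)Z_{n',A}-\sum^{(n)}_A v(A)Z_{n,A}\big|\le \breve C\breve\ep\, n^{-\al}$; letting $n'\to\infty$ converts this into $\big|\sum^{(n)}_A v(A)Z_{n,A}-z_\realinitial\big|\le\breve C n^{-\al}$, and likewise for $z_\realfinal$. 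For the two-sided term (iv) I would argue that, since boundary polymers are summably small (the bound \eqref{eqbound scalar boundary polymers}), the constraint $A_1\nsim A_2$ and the interpolating factor $Z_{n,A_1\cup A_2}$ differ from the unconstrained product by terms supported on polymers whose diameters together span a distance $\ge n-O(1)$, hence of order $n^{-\al}$; what remains factorises as $\big(\sum^{(n)}_{A_1}v(A_1)Z_{n,A_1}\big)\big(\sum^{(n)}_{A_2}v(A_2)Z_{n,A_2}\big)(1+O(n^{-\al}))\to z_\realinitial z_\realfinal$. Collecting (i)--(iv) and pulling out the prefactor $k_\realinitial k_\realfinal$ from \eqref{eqpolymer with boundary isolated}, the sum $1+z_\realinitial+z_\realfinal+z_\realinitial z_\realfinal=(1+z_\realinitial)(1+z_\realfinal)$ emerges, giving \eqref{eq: convergence to k z}.

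Finally I would identify $k_\realinitial(1+z_\realinitial)=\Tr\rho_0$. The cleanest route is to specialise: choosing $O=\lone$ (hence $O_\sys=\lone$, $\psi_\realfinal=0$, so $k_\realfinal=1$ and, by the third remark after \eqref{eqscalar polymer model boundaries}, $v(A)=0$ whenever $\final\in A$, forcing $z_\realfinal=0$) reduces \eqref{eq: convergence to k z} to $\lim_n Z_n(\lone,\rho_0,0)=k_\realinitial(1+z_\realinitial)$. But $Z_n(\lone,\rho_0,0)=\Tr\big[\e^{-\i(n/\la^2)H}\rho_0\,\e^{\i(n/\la^2)H}\big]=\Tr\rho_0$ for every $n$ by unitarity, exactly as in \eqref{eq: z emptyset is one}. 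Hence $k_\realinitial(1+z_\realinitial)=\Tr\rho_0$, as claimed. The main obstacle is the two-sided boundary term (iv): one must check carefully that the mutual-compatibility constraint between the $\initial$- and $\final$-polymers, together with the bulk interpolation factor $Z_{n,A_1\cup A_2}$ connecting them, really only costs $O(n^{-\al})$ and does not spoil the factorisation — this is where the decay bound \eqref{eqbound scalar boundary polymers} on boundary polymers and the cluster-expansion decay estimates of Section \ref{appsec: decay of cluster weights} are used in tandem, much as in the proof of Lemma \ref{lem: existence z limits}.
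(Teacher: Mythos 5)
Your overall strategy matches the paper's proof: both decompose $Z_n$ via \eqref{eqpolymer with boundary isolated} according to which boundary times appear in $\supp\breve\caA$, identify the leading contributions $1$, $z_\realinitial$, $z_\realfinal$, $z_\realinitial z_\realfinal$, control the corrections via the cluster-decay estimates of Section \ref{appsec: decay of cluster weights} and Lemma \ref{lem: bound on scalar polymers}, and derive the identity $k_\realinitial(1+z_\realinitial)=\Tr\rho_0$ by specialising to $O=\lone$ and using unitarity, exactly as the paper does.

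There is, however, a genuine omission in your enumeration of the boundary collections. You list $\breve\caA$ as being ``empty, a single set containing $\initial$, a single set containing $\final$, a pair of such sets, or the single set $\breve I_n$,'' and then treat (ii) and (iii) under the exclusivity conditions $\initial\in A,\ \final\notin A$ and $\final\in A,\ \initial\notin A$. This misses an entire family: singletons $\breve\caA=\{A\}$ with $\{\initial,\final\}\subset A$ and $A$ a \emph{proper} subset of $\breve I_n$, including the extreme case $A=\{\initial,\final\}$. These are precisely the paper's fifth term $\sum_{A_{\realinitial,\realfinal}}v(A_{\realinitial,\realfinal})Z_{n,A_{\realinitial,\realfinal}}$. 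They are not covered by your pair term (iv) (which concerns two disjoint boundary sets), nor by the $\breve I_n$ case alone. This contribution is not negligible in the same exponential sense as $v(\breve I_n)$; rather it is $O(n^{-\al})$, because every such $A$ has $\dist(A)\geq n+2$, so \eqref{eqbound scalar boundary polymers} supplies a factor $\dist(A)^{-\al}$ (and \eqref{eqbound scalar boundary polymers special} handles $A=\{\initial,\final\}$ separately). Once you add this fifth term and bound it accordingly, the proof closes; without it, the identity $(1+z_\realinitial)(1+z_\realfinal)$ is not actually accounted for term by term. The rest of your argument, in particular the treatment of the two-sided term (iv) via the unconstrained product minus a distance-$n$ correction, is the same idea as the paper's estimate on $q_n(A_\realinitial,A_\realfinal)$, though you phrase it as a multiplicative $(1+O(n^{-\al}))$ while the paper bounds the \emph{difference}; this is a harmless variation once the sums over $A_\realinitial$, $A_\realfinal$ are shown to be $O(\breve\ep)$.
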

\begin{proof}
We start from \eqref{eqpolymer with boundary isolated}   and we split the sum over $\breve \caA$  in five parts, of which the first corresponds to  $\breve \caA = \emptyset$. To describe the other parts, let $A_{\realinitial}  $ in general stand for  subsets of $\breve I_n$ such that $\initial \in A, \final \not \in A$, let  $A_{\realfinal}$ stand for  subsets such that  $\initial \not\in A, \final \in A$ and, finally, write  $A_{\realinitial,\realfinal}$ for  subsets such that  $\{\initial,\final \} \subset A_{\realinitial,\realfinal}$. The splitting is
\beq
( k_\realinitial k_\realfinal  )^{-1}Z_{n}= Z_{n, \emptyset} +   \sum_{A_\realinitial}  v(A_{\realinitial}) Z_{n, A_{\realinitial}} + \sum_{A_\realfinal} v(A_{\realfinal}) Z_{n, A_{\realfinal}} +   \sum_{A_\realinitial, A_\realfinal: A_{\realinitial} \nsim A_{\realfinal}}  v(A_{\realinitial})  v(A_{\realfinal})  Z_{n, A_{\realinitial}\cup A_{\realfinal}}  +   \sum_{A_{\realinitial,\realfinal} }  v(A_{\realinitial,\realfinal}) Z_{n, A_{\realinitial,\realfinal}}   \label{eq: splitting in five sums}
\eeq
We have already argued that the first term on the RHS  is $1$ and the second and third term converge to $z_{\realinitial},z_{\realfinal}$, respectively.
We split the fourth term on the RHS as  $\sum_{A_\realinitial, A_\realfinal: A_{\realinitial} \nsim A_{\realfinal}} =\sum_{A_\realinitial, A_\realfinal} -\sum_{A_\realinitial, A_\realfinal: A_{\realinitial} \sim A_{\realfinal}}  $.
Let us concentrate on the first term in this splitting, i.e.\ the unconstrained sum over $A_{\realinitial}, A_{\realfinal}$. Clearly, this expression should tend to $z_{\realinitial} z_{\realfinal}$. To see this we rewrite using  \eqref{eqgeneral expression for excluded z}  
\baq
 \sum_{A_\realinitial, A_\realfinal} v(A_{\realinitial}) v(A_{\realfinal}) Z_{n, A_{\realinitial} \cup A_{\realfinal}}    -  \sum_{A_\realinitial, A_\realfinal} v(A_{\realinitial}) v(A_{\realfinal}) Z_{n, A_{\realinitial}} Z_{n, A_{\realfinal}} \nonumber  
 = \sum_{A_\realinitial, A_\realfinal}   v(A_{\realinitial})  v(A_{\realfinal}) Z_{n, A_{\realinitial}} Z_{n, A_{\realfinal}} 
 q_n(A_{\realinitial}, A_{\realfinal})  \label{eqboth ainitial and afinal}
\eaq
with
\beq\label{qndef}
q_n(A_{\realinitial}, A_{\realfinal})= \e^{ -\sum_{\caA \in \frB_n}   v^T(\caA)       \indicator_{[ \caA \sim A_{\realinitial} ]}  \indicator_{[\caA \sim A_{\realfinal} ]}               }  -1.
\eeq
To bound $q_n$ note first that (set $\str a \str_+:=\max(a,0)$)
\beq
\distance(A_{\realinitial}, A_{\realfinal})  \geq   \str n- \dist(A_{\realinitial}) -\dist( A_{\realfinal}) \str_+
\eeq
so that the bound \eqref{eqdecay of cluster weights} in Section \ref{appsec: decay of cluster weights} together with Lemma \ref{lem: bound on scalar polymers} yields
\beq
 \sum_{\caA \in \frB_n}   \str v^T(\caA) \str      \indicator_{[ \caA \sim A_{\realinitial} ]}  \indicator_{[\caA \sim A_{\realfinal} ]}  \\
 \leq  
C(1+  \str n- \dist(A_{\realinitial}) -\dist( A_{\realfinal})\str_+)^{-\al} \ep \str A_{\realinitial}  \str  . \label{eq: bound exponent q}
\eeq 
 Using $ \str \e^{\ga x}-1  \str \leq   \ga \e^{\str x\str}$ for $ 0\leq \ga \leq 1, x\in \bbR$ we get from  \eqref{qndef}
and \eqref{eq: bound exponent q}
\beq
\str q_n(A_{\realinitial}, A_{\realfinal}) \str  \leq   (1+  \str n- \dist(A_{\realinitial}) -\dist( A_{\realfinal})\str_+)^{-\al}  \e^{ \ep C \str A_{\realinitial}  \str }.
\eeq
Furthermore, since for all $A_{\realinitial}, A_{\realfinal}$:
\beq
(1+  \str n- \dist(A_{\realinitial}) -\dist( A_{\realfinal})\str_+) \dist(A_{\realinitial})\dist(A_{\realfinal}) > cn,
\eeq
we conclude that 
\beq
\left\str\textrm{RHS of \eqref{eqboth ainitial and afinal}} \right\str  \leq C n^{-\al}    \left(\sum_{A_{\realinitial}} \dist(A_{\realinitial})^{\al} \str v(A_{\realinitial}) \str  \str Z_{n, A_{\realinitial}}\str  \e^{ \ep C  \str (A_{\realinitial} \str}  \right)   \left(\sum_{A_{\realfinal}} \dist(A_{\realfinal})^{\al} \str v(A_{\realfinal})\str  \str Z_{n, A_{\realfinal}}\str   \right) 
\eeq
The sums between brackets are estimated by $\breve C \breve\ep$ by first using \eqref{eq: boundzcon} and then Lemma  \ref{lem: bound on scalar polymers}.

Hence, we have obtained the asymptotics of the first part of the fourth term in \eqref{eq: splitting in five sums}:
\beq
 \left\str \sum_{A_\realinitial, A_\realfinal}  v(A_{\realinitial})  v(A_{\realfinal})  Z_{n, A_{\realinitial}\cup A_{\realfinal}}  - z_{\realinitial} z_{\realfinal} \right\str \leq  \breve C n^{-\al}.
  \eeq 
  By similar, but rather simpler reasoning, we can now treat the other terms in \eqref{eq: splitting in five sums}:
  \baq
  \left\str \sum_{A_\realinitial}  v(A_{\realinitial}) Z_{n, A_{\realinitial}} - z_{\realinitial}  \right\str  \leq  \breve C n^{-\al}  
   &&  \left\str  \sum_{A_\realfinal} v(A_{\realfinal}) Z_{n, A_{\realfinal}}  - z_{\realfinal}    \right\str  \leq  \breve C n^{-\al}  \\[3mm]   
      \left\str  \sum_{A_\realinitial, A_\realfinal: A_{\realinitial} \sim A_{\realfinal}}  v(A_{\realinitial})  v(A_{\realfinal})  Z_{n, A_{\realinitial}\cup A_{\realfinal}}      \right\str  \leq  \breve C n^{-\al}  
       &&    \left\str   \sum_{A_{\realinitial,\realfinal} }  v(A_{\realinitial,\realfinal}) Z_{n, A_{\realinitial,\realfinal}}     \right\str   \leq  \breve C n^{-\al} 
  \eaq
  In fact, the upper line has already been obtained in the proof of Lemma \ref{lem: existence z limits}.
This proves \eqref{eq: convergence to k z}. 

 Take now $O= \lone$, then $Z_n = \Tr \rho_0$ and, by the third remark following \eqref{eqscalar polymer model boundaries} and using that $\tilde \eta=\lone$ for $\ka=0$, we have $k_{\realfinal}=1$ and $z_\realfinal=0$. Then, by the convergence established above, $Z_n \to k_{\realinitial} (1+ z_{\realinitial}) $ and hence  $\Tr \rho_0 =k_{\realinitial} (1+ z_{\realinitial}) $.\end{proof}

As promised, we comment on the case in which either $k_\realinitial$ or $k_\realfinal$ vanishes, thus invalidating our expression for the polymer weights given in \eqref{def: boundary values}. Let us assume for concreteness $k_\realinitial=0$. Then all non-vanishing contributions to \eqref{eqscalar polymer model boundaries} have $0 \in \supp \caA$. All polymers weights $v(A), A \ni 0$ are obtained by dividing a certain expression by $k_\realinitial$.  Therefore, it is possible to redefine $v(A)$ without the division and to omit  $k_\realinitial$ in the RHS of 
\eqref{eqscalar polymer model boundaries}.  It is straightforward to check that then the conclusions of the above lemmata still applies when $k_\realinitial (1+z_\realinitial)$ is replaced by $z'_\realinitial$ where  $z'_\realinitial$ is defined by removing the division by  $k_\realinitial$ in all its terms.
The same remark applies to $k_\realfinal$.

\subsubsection{Continuity  of the polymer representation} \label{secfromdiscretetocontinuoustime}

We envisage the situation that the model Hamiltonian and/or initial state and observable depend on a parameter $\nu \in \caD_\nu \subset \bbC$ such that our assumptions are satisfied for any $\nu \in \caD_\nu$.
Note first that the numbers $Z_{n}(\rho_0, O, \ka)$ are determined by $a)$ the correlation functions $ h,  h_{\realinitial},  h_{\realfinal}, h_{\Join}$ (we write $\tilde h$ to denote any of the four correlation functions),  $b)$ the operators  $D, H_\sys, \rho_{\sys,0}, O_{\sys}$ in $\scrB(\scrH_\sys)$  (we write $X_\sys$ to denote any of these operators), and $c)$ the parameter $\ka$.   Indeed, the dependence on, for example, the Weyl operators, is hidden in the correlation functions. 

Below, we indicate the dependence on the parameter $\nu$ by writing $\tilde h^{(\nu)}$ and $X_\sys^{(\nu)}$. 

\begin{lemma} \label{lem: continuity implies continuity of polymers}
Assume that
\ben
\item The bounds of Assumptions \ref{ass: decay correlation functions} and \ref{ass: initial states and observables} hold with the functions $\tilde h(s)$ replaced by
$\sup_{\nu \in \caD_\nu}\str \tilde h^{(\nu)}(s)\str$. 
\item Assumption \ref{ass: Fermi Golden Rule} is satisfied uniformly in $\nu \in \caD_\nu$, more precisely,  the gap $\gap_T$ and the constant $C_T$ on the RHS of \eqref{eq: iterates of t} can be chosen uniformly in $\nu$. 
\item  The  operators $X_\sys^{(\nu)}$ are continuous in $\nu$ and the  functions $\tilde h^{(\nu)}$ are pointwise continuous in $\nu$.
\een
Then,  for sufficiently small $\str \la \str, \str \ka\str$, the bounds on scalar polymer weights $v(A)= v^{(\nu)}(A) , A \in \breve I_n$ stated in Lemma \ref{lem: bound on scalar polymers} hold uniformly in $\nu$ and $v^{(\nu)}(A)$ is continuous in $\nu$, for any $A \in \breve I_n$. Also the factors $k^{(\nu)}_{\realinitial}, k^{(\nu)}_{\realfinal}$ are continuous in $\nu$. 
\end{lemma}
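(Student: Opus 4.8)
The plan is to prove both assertions by going through the construction of the scalar weights in Sections~\ref{sec: polymer rep}--\ref{secdiscretization} and tracking the dependence on the parameter $\nu$. The uniform bounds are the routine part: inspection of the proof of Lemma~\ref{lem: bound on primary polymers}, hence of Lemma~\ref{lem: bound on scalar polymers}, shows that every constant occurring there depends only on $\norm D^{(\nu)}\norm$, $\norm H_\sys^{(\nu)}\norm$, $\norm\rho^{(\nu)}_{\sys,0}\norm_1$, $\norm O^{(\nu)}_\sys\norm$, the integrals $\int_0^\infty\!\d t\,(1+t)^\al\str\tilde h^{(\nu)}(t)\str$, and the ergodicity data $\gap_T,C_T$ of $T^{(\nu)}$ from \eqref{eq: iterates of t}. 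By hypotheses~1 and~2 all of these are bounded uniformly in $\nu\in\caD_\nu$ -- in particular one may replace $\str\tilde h^{(\nu)}(t)\str$ throughout (in Lemma~\ref{lem: a priori}, Lemma~\ref{lem: a priori generalized}, and in \eqref{eq: bound k as h}, \eqref{eq: bounds edge factors boundary}) by the $\nu$-independent integrable majorant $\sup_{\nu}\str\tilde h^{(\nu)}(t)\str$ -- so the estimates of Lemma~\ref{lem: bound on scalar polymers} hold with $\nu$-independent $a_v,C,\breve C$.

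For the continuity statement I would argue bottom-up. First, for every fixed $s$ the operator $D^{(\nu)}(s)=\e^{\i s H^{(\nu)}_\sys}D^{(\nu)}\e^{-\i s H^{(\nu)}_\sys}$ is continuous in $\nu$ (finite dimension, continuity of the operator exponential), and likewise the free $\sys$-evolutions $Y^{(\nu)}_\tau,\widetilde Y^{(\nu)}_\tau$; hence, for each fixed pair $w=\{u,v\}$, the building block $K^{(\nu)}_w$ in \eqref{eq: explicit for k} is continuous in $\nu$, with $\normw$-norm dominated by the $\nu$-independent integrable function $\la^2 C\sup_\nu\str\tilde h^{(\nu)}(v-u)\str$. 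Feeding this into the Dyson representation \eqref{eq: connected micro} of $G^{c}_A$ and into $T^{(\nu)}=\bbE(U^{(\nu)}_\tau)$, dominated convergence shows that each term of fixed order $m$ in these series is continuous in $\nu$, while the a priori estimates of Section~\ref{secdiscretization} (Lemma~\ref{lem: a priori}, Lemma~\ref{lem: a priori generalized}, with $\norm h\norm_1$ replaced by $\sup_\nu\norm h^{(\nu)}\norm_1$) give a $\nu$-uniform bound on the order-$m$ contribution that is summable in $m$. A uniformly convergent series of continuous functions being continuous, $T^{(\nu)}$ is continuous in operator norm and $G^{c,(\nu)}_A$ is continuous in $\normw$ for every $A\subset\breve I_n$.

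Next, continuity of $T^{(\nu)}$ together with the \emph{uniform} spectral gap of hypothesis~2 lets one represent the rank-one spectral data by Riesz integrals over a fixed contour $\Ga$ that, for $\nu$ in a neighbourhood of any $\nu_0$, encloses the simple eigenvalue $\e^{\theta^{(\nu)}_T}$ and no other spectrum of $T^{(\nu)}$: $R^{(\nu)}=\frac{1}{2\pi\i}\oint_\Ga(z-T^{(\nu)})^{-1}\,\d z$ and $\e^{\theta^{(\nu)}_T}R^{(\nu)}=\frac{1}{2\pi\i}\oint_\Ga z\,(z-T^{(\nu)})^{-1}\,\d z$. Since $(z,\nu)\mapsto(z-T^{(\nu)})^{-1}$ is jointly continuous on $\Ga\times\caD_\nu$, the projector $R^{(\nu)}$ and the number $\theta^{(\nu)}_T$ are continuous in $\nu$; fixing a continuous choice of the scaling in the rank-one factorisation of $R^{(\nu)}$ makes $\eta^{(\nu)},\tilde\eta^{(\nu)}$ continuous as well (consistent with $\tilde\eta=\lone$ at $\ka=0$). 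At this point every ingredient of the scalar weight is continuous: $R^{(\nu)}_\perp T^{(\nu)}$ and its powers, $G^{c,(\nu)}_A$, hence also the \emph{finite} sums over fusions $\caS$ with $\supp\caS=A$ of finite products of these, composed with the $\nu$-independent linear contraction $\caT$, i.e.\ ${\scriptsize\Si}V^{(\nu)}(A)$ and $\hat v^{(\nu)}(A)$. Pairing with the continuous operators $\eta^{(\nu)},\tilde\eta^{(\nu)},\rho^{(\nu)}_{\sys,0},O^{(\nu)}_\sys$ and multiplying by $\e^{-\str A\cap I_n\str\theta^{(\nu)}_T}$ and $(k^{(\nu)}_A)^{-1}$ gives continuity of $v^{(\nu)}(A)$, and $k^{(\nu)}_\realinitial,k^{(\nu)}_\realfinal$ are continuous by their explicit expressions \eqref{kinitial}, \eqref{kfinal}, the Weyl--vacuum overlaps there being exponentials of integrals of $h^{(\nu)}_\realinitial,h^{(\nu)}_\realfinal$, again continuous by dominated convergence.

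The only genuinely delicate point -- and the one place where hypotheses~1 and~3 must be used together -- is the passage from termwise continuity to continuity of the \emph{summed} Dyson series for $T^{(\nu)}$ and $G^{c,(\nu)}_A$: one has to combine (i) dominated convergence inside each fixed-order term, which needs both the $\nu$-uniform integrable majorant $\sup_\nu\str\tilde h^{(\nu)}\str$ and the pointwise continuity of $\tilde h^{(\nu)},X^{(\nu)}_\sys$, with (ii) a $\nu$-uniform tail bound for the series coming from the a priori estimates of Section~\ref{secdiscretization}. Everything downstream -- spectral perturbation theory with a uniform gap, the finite combinatorics of fusions, and the algebra of continuous maps -- is then routine.
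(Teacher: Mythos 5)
Your proposal is correct and follows exactly the route the paper's own (one-sentence) proof gestures at: apply dominated convergence with the $\nu$-uniform integrable majorant $\sup_\nu\str\tilde h^{(\nu)}\str$ to get uniformity and continuity of $T^{(\nu)}$ and $G^{c,(\nu)}_A$, use the uniform gap to make the spectral data $R^{(\nu)},\theta^{(\nu)}_T$ depend continuously on $\nu$ via Riesz integrals, and then observe that each scalar weight $v^{(\nu)}(A)$ is a finite algebraic expression in these continuous quantities. You merely fill in (in a useful and correct way) the combination of termwise dominated convergence with a $\nu$-uniform tail bound for the Dyson series, and the need to fix a continuous rank-one factorisation $R^{(\nu)}=\str\tilde\eta^{(\nu)}\rangle\langle\eta^{(\nu)}\str$, both of which the paper leaves implicit.
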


\begin{proof}
By the dominated convergence theorem, one first establishes that the bounds on $\norm G^c_A \normw$ hold uniformly in $\nu$, and that the  operators $T_\tau, G^c_A $ are continuous in $\nu$ (and the gap of $T$ can be chosen uniform in $\nu$)  and then one passes from the operators $T_\tau, G^c_A $ to the weights  $v^{(\nu)}$.
\end{proof}
Note that we do not claim the continuity to be uniform in $\la$ as $\la \to 0$. 
This can indeed not be deduced because of the rapidly oscillating factors $\e^{\i\la^{-2} t H_\sys}$ in the definition of $G_A, G_A^c$.

\subsubsection{From discrete to continuous time} \label{sec: from disc to cont}

We extend the result of Lemma \ref{lem: convergence invariant state} to all times $t$, rather than times of the form $n \la^{-2}$. 
Choose $\ell>0$ and imagine that we had started our analysis from Hamiltonian $ \ell^{-1} H$ instead of $H$. 
This would modify the operators $X_{\sys}$ and correlation functions $\tilde h$ (notation as above) as
\beq
D^{(\ell)} = \ell^{-1} D, \qquad  H_\sys^{(\ell)} = \ell^{-1} H_\sys, \qquad  \tilde h^{(\ell)}(s)  =   \tilde h(\ell^{-1} s) 
\eeq 
It is clear that all of our assumptions are satisfied for, say, $\ell \in [1,2]$. 
Hence all conclusions of our analysis hold for this modified model as well. We focus on  $ Z_n=Z_n( O,\rho_0,\ka=0)$ and let us  indicate explicitly the dependence on $\ell$ by writing $ Z_n^{(\ell)}$.  By the result of Section \ref{sec: approach to steady state}, for sufficiently small $\str \la \str$, the limit $ Z^{(\ell)}_{\infty} := \lim_{n \to \infty} Z^{(\ell)}_{n}$ exists for any $\ell \in [1,2]$.  Choose two values $\ell_1, \ell_2$ such that $\ell_1/\ell_2 \in \bbQ$, then there is a subsequence of  $Z^{(\ell_1)}_{\bullet}$ that coincides with a subsequence of $Z^{(\ell_2)}_{\bullet}$ and hence  $ Z^{(\ell_1)}_{\infty}=  Z^{(\ell_2)}_{\infty}$. 

%
Since the correlation functions $\tilde h^{(\ell)}$ are Fourier transforms of  $L^1$-functions, they are pointwise continuous in $\ell$. Hence,  Lemma \ref{lem: continuity implies continuity of polymers} applies with $\nu \equiv \ell$ and the polymer weights are continuous in $\ell \in [1,2]$. Since the bounds of Section \ref{sec: approach to steady state} can be chosen uniform in $\ell$, we deduce  that $Z^{(\ell)}_{\infty}$ is continuous in $\ell$, as well. 
Combining this with the arguments above, it follows that $Z^{(\ell)}_{\infty}$ is in fact independent of $\ell$, and we can hence define
\beq
  \langle O \rangle_{\infty} :=  \left( 1/ \Tr \rho_0 \right) \lim_{t \to \infty}  \Tr [O \e^{-\i t H}\rho_0 \e^{\i t H}  ]   \label{eq: independence of ell}
\eeq
which enables us to proceed to the
 
 \subsubsection{Proof of Theorem \ref{thm: steady state}}  \label{eq: real proof of steady state}

The convergence \eqref{eq: independence of ell} is proven up to now 
for observables $O$ and initial states $\rho_0$ satisfying Assumption  \ref{ass: initial states and observables}.  
To get point 1) of Theorem \ref{thm: steady state}, we show that the convergence holds for any initial density matrix $\rho_0 \in \scrB_1(\scrH)$ and observable $O \in \frW_{\al}$. First, we note that vectors $\psi_{\realinitial}$ satisfying  Assumption \ref{ass: initial states and observables} are dense in $L^2({\bbR^d})$. By the irreducibility of the Fock representation of the canonical commutation relations, this implies that vectors of the type $\psi_\sys  \otimes {\caW}(\psi_{\realinitial}) \Omega$ are dense in $\scrH$ (see e.g.\ \cite{petzcanonicalcommutationrelations}). Hence, linear combinations of initial density matrices $\rho_0$ for which \eqref{eq: independence of ell} holds, are dense in $\scrB_1(\scrH_\sys)$. On the other hand, by the unitarity of the dynamics, we have
\beq
\str \Tr (O \rho_t) -\Tr (O \rho'_t)    \str   \leq   \norm O \norm_{\scrB(\scrH)}  \norm \rho_0 -\rho'_0 \norm_{\scrB_{1}(\scrH)}   
\eeq
and hence by a simple density argument the convergence \eqref{eq: independence of ell} holds for any $\rho_0 \in \scrB_{1}(\scrH) $ and $O \in \frW_{\al}$.

Point 2) of Theorem \ref{thm: steady state}.  Recall that here $O$ is of the form \eqref{eq: choice o and rho}  and $\psi_{\realfinal}$ satisfying \eqref{ass: regularity final}. Without loss, we take   $\norm O_\sys \norm=1$.  Then 
\baq
 \str \langle O \rangle_{\infty} -  \Tr \left[ (P_{e_0} \otimes P_\Om) O \right]  \str &\leq &  \str \langle O \rangle_{\infty} - k_{\realfinal} \str   +  \str k_\realfinal-  \Tr \left[ (P_{e_0} \otimes P_\Om) O \right]  \str   \\
 &\leq&   \str k_{\realfinal}  z_{\realfinal} \str  +  \norm \eta - P_{e_0} \norm_{\scrB_1(\scrH_\sys)}   \left\str \langle \Om, {\caW}(\psi_\realfinal) \Om \rangle\right\str \\
 &\leq&   \breve C \breve\ep   \str k_{\realfinal} \str   +    C \str\la\str^{2 \al_*}       \left\str  \langle \Om, {\caW}(\psi_\realfinal) \Om \rangle  \right\str
\eaq
To obtain the second inequality, we used that $\langle O \rangle_{\infty}= k_\realfinal (1+ z_\realfinal)$, as obtained in Section \ref{sec: approach to steady state} and the definition \eqref{kfinal}.  The third inequality was established in the proof of Lemma \ref{lem: existence z limits} (bound on $z_\realfinal$) and in \eqref{eq: bound eigenvectors} (bound on $\eta - P_{e_0}$).  

Point 3) of the Theorem was established in Section \ref{sec: approach to steady state} for discrete times and is immediately extended to continuous times by the reasoning in Section \ref{sec: from disc to cont}.   Hence the full Theorem \ref{thm: steady state} is  now proven. 

\subsection{Photon number bound} \label{sec: photon number bound}


To derive a photon number bound,  we are interested in $\Tr[\rho_t \e^{\ka N}]$, that is, $Z_n$ with $O= \lone$ and general $\rho_0$. However, it is more convenient to investigate first a slightly different quantity;  namely
 \beq
 p(\ka) := \lim_{n \to \infty}  \frac{1}{n}  \log   \Tr [(\tilde \eta \otimes \e^{\ka N})  \rho_{n/\la^2} ], \qquad \rho_0 =  \eta \otimes P_{\Om}   \label{def: pressure fake}
 \eeq
with $\eta, \tilde \eta$ as in \eqref{eqr explicitly}.
We call this quantity a 'pressure', since it is the logarithm of $Z_{n}$, which we view as a partition function.   The advantage of the definition \eqref{def: pressure fake} is that we can establish right away
\begin{lemma} \label{lem: analyticity of pressure in kappa}
 For $\str \ka \str$ small enough, the pressure $p(\ka)$ exists and is analytic in $\la$
\end{lemma}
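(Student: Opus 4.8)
The plan is to derive the pressure $p(\ka)$ from the scalar polymer representation \eqref{eqscalar polymer model boundaries} applied to the specific choice $O = \tilde\eta \otimes \lone$, $\rho_0 = \eta \otimes P_\Om$, and a nonzero $\ka$. With these boundary data we have $Z_n = Z_n(\tilde\eta\otimes\lone, \eta\otimes P_\Om, \ka)$, and by the third remark following \eqref{eqscalar polymer model boundaries} the weight $v(A)$ vanishes whenever $\initial \in A$ (because $\rho_{\sys,0}=\eta$, so $R_\perp\rho_{\sys,0}=0$) and whenever $\final \in A$ (because $O_\sys = \tilde\eta$). Hence only bulk polymers survive, and \eqref{eqscalar polymer model boundaries} reduces to
\beq
 Z_n = \e^{n\theta_T} k_\realinitial k_\realfinal \sum_{\caA \in \frB^1_n} \prod_{A \in \caA} v(A),
\eeq
with $k_\realinitial = \langle \tilde\eta, T_\initial\eta\rangle$, $k_\realfinal = \langle\tilde\eta, T_\final\eta\rangle$ both equal to a nonzero constant ($\psi_\realinitial = \psi_\realfinal = 0$ here, so $T_\initial = T_\final = \lone$ and $k_\realinitial = k_\realfinal = \langle\tilde\eta,\eta\rangle = 1$).

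The next step is to take the logarithm via the cluster expansion. By Lemma \ref{lem: bound on scalar polymers}, the bulk-polymer bound \eqref{eqbound scalar polymers} holds with $\ep = |\la|^{2\min(\al,1)}$ small; thus the Kotecky--Preiss criterion \eqref{eqkotecky preiss abstract} is satisfied and Proposition \ref{prop: basic cluster expansion result} gives
\beq
 \log Z_n = n\theta_T + \log(k_\realinitial k_\realfinal) + \sum_{\caA \in \frB_n} v^T(\caA),
\eeq
where the truncated weights $v^T$ satisfy the standard bound from \eqref{eqbound on clusters touching something} and the decay estimate of Section \ref{appsec: decay of cluster weights}. Dividing by $n$ and letting $n\to\infty$, the constant term $\log(k_\realinitial k_\realfinal)/n \to 0$, and $\frac1n\sum_{\caA}v^T(\caA)$ converges because the cluster sum is, up to boundary corrections of size $O(1)$, translation invariant in time: writing $p(\ka) = \theta_T(\la,\ka) + \lim_n \frac1n\sum_{\caA\in\frB_n} v^T(\caA)$, the limit exists by a subadditivity/telescoping argument identical in structure to the proof of Lemma \ref{lem: existence z limits}, using that $v^{(n)}(A) = v^{(n)}(A+\tau)$ for translates staying inside $I_n$ (cf.\ \eqref{eq: independence macro}) and that the tail $\sum_{A: \dist(A)\geq m, A\ni\tau}|v^T|$ is summable. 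Concretely one shows $\frac1n\sum_{\caA\in\frB_n}v^T(\caA) \to \sum_{\caA\in\frB_\bbZ:\, \min\supp\caA = 0} v^T(\caA)$, the "per-site free energy" of the one-dimensional polymer gas.

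For analyticity: $\theta_T(\la,\ka)$ is analytic in $\ka$ (for $|\ka|$ small) by analytic perturbation theory of the isolated eigenvalue $\e^{\theta_T}$ of $T = T(\la,\ka)$, since $T$ is an analytic family in $\ka$ — this is part of the content established around Lemma \ref{lem: fake functional calculus} and the proof of Statement 1) of Lemma \ref{lem: bound on primary polymers}. For the cluster-sum term, each $v^T(\caA)$ is a finite polynomial in finitely many $v(A)$, and each $v(A)$ is built from $G^c_A$, $R$, $\theta_T$, $\eta$, $\tilde\eta$, all of which are analytic in $\ka$ on a fixed disc $|\ka| < \ka'_0$ (the operator-valued weights $G^c_A$ are manifestly analytic in $\ka$ since $L_{\inter,\ka}(s)$ in \eqref{LIdef} is, and the Dyson series \eqref{eq: connected micro} converges absolutely and locally uniformly by Lemma \ref{lem: a priori}). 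The Kotecky--Preiss bound \eqref{eqbound scalar polymers} holds uniformly for $|\ka| \leq \ka'_0$ after possibly shrinking $\ka'_0$, so the cluster expansion converges absolutely and uniformly on this disc; hence the limiting sum defining $p(\ka)$ is a uniform limit of analytic functions and is therefore analytic by Weierstrass. Analyticity in $\la$ (for fixed small $\la\neq 0$) follows the same way.

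The main obstacle is the interchange of the $n\to\infty$ limit with analyticity, i.e.\ establishing that the convergence $\frac1n\log Z_n \to p(\ka)$ is locally uniform in $\ka$ on a complex neighborhood. This is where the uniform-in-$\ka$ validity of Lemma \ref{lem: bound on scalar polymers} is essential: one must verify that the constants $C, a_v$ and the smallness threshold on $\ep$ can be chosen independently of $\ka$ for $|\ka| \leq \ka'_0$, which in turn rests on the gap $\gap_T$ of $T$ being bounded below uniformly in $\ka$ (Statement 1 of Lemma \ref{lem: bound on primary polymers}) and on the operator-polymer bound \eqref{eqbound operator polymers} being $\ka$-uniform. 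Granting these — which the earlier sections assert — the uniform exponential decay of truncated cluster weights yields a rate $|\frac1n\log Z_n - p(\ka)| \leq C/n$ uniform in $\ka$, and Lemma \ref{lem: analyticity of pressure in kappa} follows.
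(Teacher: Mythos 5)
Your proof is correct and follows essentially the same route as the paper: reduce to the bulk polymer representation by choosing $O=\tilde\eta\otimes\lone$ and $\rho_0=\eta\otimes P_\Om$ (so boundary weights vanish), obtain existence of the pressure via the cluster expansion of Appendix~\ref{appsec: pressure} with translation invariance, and deduce analyticity in $\ka$ from the $\ka$-analyticity of $T,\,\theta_T$ and of the Dyson-series ingredients $G^c_A$ together with a $\ka$-uniform Kotecky--Preiss bound. The only cosmetic difference is your use of the Weierstrass uniform-limit theorem where the paper invokes Vitali --- interchangeable here --- and your closing one-line remark about analyticity in $\la$ is not needed for (and not supported by) the rest of the argument; the lemma is used downstream only for analyticity in $\ka$.
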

\begin{proof}
Let $ \rho_{n/\la^2}$ be as above, then by  \eqref{eq: toy polymer} and \eqref{eq: z empty}, 
\beq  \label{eq: photon bound bulk only}
 \Tr [(\tilde \eta \otimes \e^{\ka N}) \rho_{n/\la^2} ] =  Z_n(\tilde\eta\otimes \lone,\eta \otimes P_{\Om},  \ka)= \Tr_\sys R Q_{n/\la^2}R =   \e^{n \theta_T}  \sum_{\scriptsize{\left.\begin{array}{c}   \caA \in {\frB}^1_{n}   
  \end{array} \right. }}    \prod_{A \in \caA}  v(A)  
\eeq 
Starting from \eqref{eq: photon bound bulk only} and using the cluster expansion results in the Appendix section \ref{appsec: pressure}, we get that $ p(\ka)$ exists for $\ep$ (hence $\la$) sufficiently small.     To verify that Section  \ref{appsec: pressure} is applicable, we need Lemma \ref{lem: bound on scalar polymers} and the translation invariance of polymer weights, see below \eqref{eqscalar polymer model boundaries}.

  We argue that $p(\ka)$ is analytic in $\ka$ by proceeding as in the proof of Lemma  \ref{lem: continuity implies continuity of polymers}. 
  Note first that the parameter $\ka$ enters only (at least for bulk polymers, which are the only ones concerning us here) by multiplying two of the four terms in $K_{u,v}$ by $\e^\ka$, see \eqref{eq: explicit for k}. Hence, by the Vitali convergence theorem, the operators $G^c_A, T$, and also the scalar polymer weights $v(A)$ are analytic in $\ka$. 
Since analytic polymer weights imply analyticity of the pressure by the Vitali convergence theorem (see Section \ref{appsec: continuity and analyticity of the pressure})  the lemma follows. 
\end{proof}

 Of course,  given sufficient infrared regularity, we expect $p(\ka)=0$.  This will be established in Section \ref{sec: pressure vanishes} by introducing and removing an infrared cutoff.  However, we would like to draw attention to the fact that there is in principle a more straightforward way. One could establish that  $\lim_{n \to \infty}\frac{1}{n}\log\Tr [ \e^{\ka N}  \rho_{n/\la^2} ]$ is independent of $\rho_0$, and then \emph{use the existence of a ground state} $\Psi_{\mathrm{gs}}$ with the property that $\langle\Psi_{\mathrm{gs}}, \e^{\ka N}  \Psi_{\mathrm{gs}} \rangle < \infty $ to argue that $\lim_{n \to \infty}\frac{1}{n}\log\Tr [ \e^{\ka N}  \rho_{n/\la^2} ]$ must vanish in general, since it vanishes for $\rho_0 = \str \Psi_{\mathrm{gs}} \rangle \langle \Psi_{\mathrm{gs}} \str$. 
 We choose not to exploit this approach so as to keep our presentation as self-contained as possible. 
 
Once we have $p(\ka)=0$, we still have to exclude that $\Tr [ \e^{\ka N}  \rho_{t} ]$ grows subexponentially in $t$, and this is done in Section \ref{sec: finite size}.

\subsubsection{Infrared cutoff and spectral translations} \label{secinfrared cutoff}
In this section, we introduce an infrared cutoff $\ga$ and we argue that the pressure $p(\ka,\ga)$ is continuous in $\ga$.

We implement the cutoff by translating the form factor in the spectral parameter of the one-photon Hamiltonian, i.e.\  the multiplication with $\str q \str$ on $L^{2}(\bbR, \d q)$. We write $q = \om \hat q$ with $\om =q$ and $\hat q \in \bbS^{d-1}$, the sphere equipped with the surface measure $\d \hat q$.
Consider the isomorphism of  Hilbert spaces 
\beq
J: L^2(\bbR^d,  \d q)\to L^2(\bbR_{+}, L^2(\bbS^{d-1}, \d \hat q),  \d \om   ), 
\eeq such that   $ (J\psi)(\om)$ is the function $\hat q \to \om^{\frac{d-1}{2}} \psi(\om \hat q)$ in $L^2(\bbS^{d-1}, \d \hat q)$.  We  define the '$\ga$-translated'  form factor $ \phi^{(\ga)}$ by specifying $J\phi^{(\ga)}$:
\beq
(J\phi^{(\ga)}) (\om)  :=    \left\{ \begin{array}{ll}  J \phi( \om-\ga ) & \om \geq \ga  \\[1mm]
 0   &     \om < \ga
   \end{array} \right.  \qquad \textrm{for} \, \ga \geq 0 
\eeq
In general, the correlation function $h(\cdot)$ can be written as
\beq
h(t) =  \mathop{\int}\limits_{\bbR^d} \d q \e^{-\i \str q \str t}  \str \phi(q) \str^2 =  \mathop{\int}\limits_{\bbR^+} \d \om \e^{-\i \om t}    \norm  J\phi(\om) \norm_{L^2(\bbS^{d-1})} 
\eeq
It is immediate that $h^{(\ga)}(t) =\e^{-\i \ga t} h(t)$ and hence in particular 
\beq
\left\str h^{(\ga_1)}(t)  -  h^{(\ga_2)}(t)      \right\str   \leq   \left\str 1- \e^{\i (\ga_1- \ga_2) t}    \right\str  \str h(t) \str  \label{eq: uniform boundedness of f correlation functions}
 \eeq

We want to apply Lemma \ref{lem: continuity implies continuity of polymers} to conclude that the  polymer weights $v^{(\ga)}(A)$ are continuous in $\ga$, and consequently (see Section \ref{appsec: continuity and analyticity of the pressure}) also $\ga \mapsto p(\ka,\ga)$ is continuous.    In fact,  the conditions of Lemma \ref{lem: continuity implies continuity of polymers} demand that we also check continuity of the functions $h_{\ltimes}, h_{\rtimes}, h_{\Join} $, but this is not necessary as long as we only use bulk polymers (since those do not depend on $h_{\ltimes}, h_{\rtimes}, h_{\Join} $). Indeed, the pressure $p(\ka,\ga)$ above depends only on the bulk polymers and hence we conclude that  $p(\ka,\ga)$ is continuous in $\ga \in [0,1]$.

\subsubsection{Pressure vanishes} \label{sec: pressure vanishes}
The advantage of having a sharp infrared cutoff is that we can bound the expectation of the number operator in  the state $\Psi_t$ in terms of the expectation in the state $\Psi_0$.
Indeed, let us decompose the number operator $N$ as
\beq
N^{}= N^{\geq \ga}+ N^{< \ga}, \qquad  N^{\geq \ga} :=   \int_{\str q \str \geq \ga} \d q  \,   a^*_q a_q, \qquad  N^{< \ga} :=    \int_{\str q \str < \ga} \d q \,    a^*_q a_q  
\eeq
Then because of the cutoff,  $[N^{< \ga}, H]=0$, and we can bound $N^{ \geq \ga} \leq \ga^{-1} H_\res$.  Choosing $\Psi_0$ such that $\langle\Psi_0, N \Psi_0 \rangle <C$ and $ \langle\Psi_0, H \Psi_0 \rangle < C$ and $\norm \Psi_{0} \norm=1$, we estimate
\baq
 \ga \langle \Psi_t, N \Psi_t \rangle &\leq&  \ga \langle \Psi_0, N^{< \ga}  \Psi_0 \rangle  +  \langle \Psi_t, H_\res \Psi_t \rangle   \label{eqbound number}
\eaq
To get an estimate on $  \langle \Psi_t, H_\res \Psi_t \rangle$, we write $H_\res= H- H_\sys- \la H_{\inter}$, we use that $\langle \Psi_{t}, H \Psi_{t} \rangle= \langle \Psi_{0}, H \Psi_{0} \rangle $, the boundedness of $H_{\sys}$ and  the  estimate \eqref{eq: infinitesimal perturbation} to obtain
\beq
\langle \Psi_t, H_\res \Psi_t \rangle \leq C +   \str \la \str C'   \langle \Psi_t, H_\res \Psi_t \rangle  \quad  \Rightarrow  \quad  \langle \Psi_t, H_\res \Psi_t \rangle \leq C''
\eeq
for $\la$ sufficiently small.
%
By using the spectral theorem and Jensen's inequality, we can now bound $ \langle \Psi_t, \e^{\ka N} \Psi_t \rangle$  with $\ka \leq 0$ from below and obtain
 \beq
 \lim_{n \to \infty}\frac{1}{n} \log  \langle \Psi_{n\over\la^2} , \e^{\ka N} \Psi_{n\over\la^2} \rangle =0, \qquad \ka \leq 0  \label{eq: physical pressure zero}
 \eeq
 We want to deduce that also  $p$ as defined in \eqref{def: pressure fake} vanishes. 
 This indeed follows immediately. Firstly, since the operator $\tilde\eta$ in the definition of $p(\ka,\ga)$ satisfies $\norm \lone- \tilde\eta \norm=\caO(\str \ka \str)$ and  for real $\ka$, it is a positive-definite operator, we can find constants $c,C$ such that  $ c \lone \leq  \tilde \eta \leq C\lone $ for sufficiently small, real $\ka $.  Secondly, the density matrix $\rho_0 = \eta \otimes P_{\Om}$ is a finite sum  of terms $\str \Psi_0 \rangle \langle \Psi_0 \str$ with $\Psi_0$ satisfying all conditions necessary for the estimate \eqref{eqbound number}. 
Hence we have shown
\beq
p(\ka,\ga) =0, \qquad   \ka \leq 0, \ga >0. 
\eeq
 However, by analyticity in $\ka$ (Lemma \ref{lem: analyticity of pressure in kappa}), this holds for sufficiently small  $ \str\ka \str$ and by continuity  of the pressure in $\ga$ (established in Section \ref{secinfrared cutoff}), it holds for  $\ga=0$, as well.    

\subsubsection{Proof of Theorem \ref{thm: photon bound}: Finite-size corrections}  \label{sec: finite size}

The fact that $p=0$ tells us merely that the number of emitted photons grows slower than linearly in time, a result that has been established with other techniques as well. We aim to prove that it does not grow at all.  Let us denote the quantitiy in \eqref{eq: photon bound bulk only} by 
  $Z_{n,\emptyset}$, in analogy to the definition in Section \ref{sec: approach to steady state} (the only difference is that now $\ka \neq 0$).  We invoke the discussion in Section \ref{appsec: pressure}, in particular the formula \eqref{eqfinite size correction to pressure} which, applied to the case at hand, reads
\baq
\log Z_{n,\emptyset}  &= &  - \sum_{\caA \in\frB_{\infty}: \min \supp \caA =1 }  \min \left(\dist(\caA)-1, n \right)   v^T(\caA)    \label{eqfinite size correction to zero pressure}
\eaq
where we also used that $p=0$.  We bound \eqref{eqfinite size correction to zero pressure} as
\baq
\str \log Z_{n,\emptyset} \str  
& \leq  &    C n^{1-\min(\al,1)} \sum_{\caA \in\frB_{\infty}: \min \supp \caA =1 }   \dist(\caA)^{\al}\str v^T(\caA) \str < C  n^{1-\min(\al,1)}  
\eaq
by Proposition \ref{prop: basic cluster expansion result}.
Next, we show that this bound remains valid for arbitrary initial state satisfying Assumption \ref{ass: initial states and observables}. We proceed as in Section \ref{sec: approach to steady state}, obtaining
\beq  \label{eq: ration of z with kappa}
\frac{Z_n}{Z_{n,\emptyset}} =   k_\realinitial k_\realfinal   \sum_{\scriptsize{\left.\begin{array}{c}   
\breve\caA \in \breve{\frB}^1_{n} 
\\
\forall A \in \breve\caA:  A \cap \{0, n+1\} \neq \emptyset
\end{array} \right. }}
\left(\prod_{A \in \breve\caA} v(A) \right )   \frac{Z_{n , \supp\breve\caA}}{Z_{n,\emptyset}} 
\eeq
In contrast to the situation in Section \ref{sec: approach to steady state}, $Z_n \neq 1$ since $\ka\neq 0$. However, the second equality in  \eqref{eqgeneral expression for excluded z} still applies and, in analogy to \eqref{eq: boundzcon}  we get $\str \log \frac{Z_{n , \supp\breve\caA}}{Z_{n, \emptyset}}  \str \leq \ep C \str \supp\breve\caA \str$ and then also 
\beq
\str \frac{Z_n}{Z_{n,\emptyset}} \str \leq \breve C,
\eeq
by combining  \eqref{eq: ration of z with kappa} and \eqref{eqbound scalar boundary polymers} and making $\ep$ sufficiently small.  

Hence we have in general
\beq \label{eq: photon bound discrete}
\str Z_n\str  =   \str \frac{Z_n}{Z_{n,\emptyset}} \str \str Z_{n,\emptyset} \str   \leq    \breve C   \e^{ C n^{1-\min(\al,1)}  }
\eeq

Finally, we mimic the reasoning in Section \ref{sec: from disc to cont} to extend the bound \eqref{eq: photon bound discrete} to continuous time. This finishes the proof of Theorem \ref{thm: photon bound}.

\appendix
\renewcommand{\theequation}{\Alph{section}\arabic{equation}}
\setcounter{equation}{0}

\section{Cluster Expansions} \label{appsec: cluster expansions}

We present  some standard results on cluster expansions. We opted not to present a very general exposition, but instead a framework that is maximally adapted to our problem. A slight exception is Section \ref{sec: combinatorial lemma} where we state a combinatorial lemma that is used in several places in our paper.   In that section, the 'polymers' can be anything, but in the remaining sections of this appendix, we assume them to be subsets $A$ of $\bbN$, just as in our construction, and although we call the polymer weights $w(A)$, the reader can safely confuse them with our concrete polymer weights $v(A)$ defined in Section \ref{sec: scalar polymer model}.  In the sections other than Section \ref{sec: combinatorial lemma}, we also freely use the notation introduced in Section \ref{sec: notation for combinatorics}. 
This appendix follows very closely the presentation in  \cite{ueltschi}, which uses the convergence criterion by \cite{koteckypreiss}.

\subsection{A combinatorial lemma} \label{sec: combinatorial lemma}

Let $\scrS$ be a countable set. We call its elements  $S\in\scrS$  {\it polymers}.  A function
$w:\scrS\to \bbC$ will be called  a \emph{polymer weight}.  Furthermore, let   $\sim$ be
an adjacency relation on  $\scrS$  i.e.  $\sim$ is  a symmetric and irreflexive relation.  Given
a finite collection of polymers
$\caS\subset\scrS$ we define
\beq
k(\caS) :=    \sum_{\scrG \in \frG^c(\caS) }  
 \prod_{\{S_i, S_j\} \in  \scrE(\scrG)} 1_{[S_i \sim S_j]} 
\eeq
where the sum runs over $\frG^c(\caS)$, the set of connected graphs with vertex set $\caS$, and 
 $\scrE(\scrG) $ is the edge set of the graph $\scrG$.  
Finally, for a collection $\caS$ and a polymer $S_0$, we write  $ \caS \sim S_0$ whenever there is a $S \in \caS$ such that $S \sim S_0$.
Then, we have the following combinatorial result

\begin{lemma}\label{lemma: combi trick abstract}
Assume that there is a positive function $S \mapsto a(S)$ such that the so-called "Kotecky-Preiss criterion"
\beq  \label{eq: kp extra abstract}
\sum_{S: S \sim S'}  \str w (S) \str \e^{a(S)} \leq   a(S')
\eeq
holds. 
Then 
\beq
    \sum_{\caS:    \caS \sim S_0 }    k(\caS)   \prod_{S \in \caS} \str w(S) \str  \leq  a(  S_0 )   \label{eqbound on clusters touching something abstract}
\eeq
and
\beq
    \sum_{\caS }    k(\caS\cup \{S_0\})   \prod_{S \in \caS} \str w(S) \str  \leq  \e^{a(  S_0 )}   \label{eqbound on clusters containing something abstract}
\eeq
where the term corresponding to $\caS= \emptyset$ is understood to equal $1$.
\end{lemma}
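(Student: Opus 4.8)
\textbf{Proof plan for Lemma \ref{lemma: combi trick abstract}.}

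The plan is to follow the classical Koteck\'y--Preiss argument as presented in \cite{ueltschi,koteckypreiss}, adapted to the present setting where polymers are abstract elements of $\scrS$ and the compatibility relation is $\sim$. The two displayed bounds \eqref{eqbound on clusters touching something abstract} and \eqref{eqbound on clusters containing something abstract} are essentially equivalent restatements of each other, so I would prove the first and deduce the second (or vice versa), the bridge being that every connected graph on $\caS \cup \{S_0\}$ restricts to a collection of connected subgraphs on the connected components of $\caS$, at least one of which must be $\sim$-adjacent to $S_0$.

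First I would introduce the generating quantity
\beq
\Xi(S_0) := \sum_{\caS:\, \caS \sim S_0} k(\caS) \prod_{S \in \caS} \str w(S)\str,
\eeq
the sum running over finite nonempty collections $\caS$ (multisets, or collections with the usual cluster-expansion convention) such that at least one member of $\caS$ is adjacent to $S_0$. The standard trick is to decompose a connected graph on $\caS$ according to a spanning tree rooted at the vertex adjacent to $S_0$, or more cleanly to use the tree-graph inequality / Penrose partition scheme to bound $k(\caS)$ by a sum over tree graphs with nonnegative coefficients. Either way one obtains a recursive inequality of the shape
\beq
\Xi(S_0) \leq \sum_{S \sim S_0} \str w(S)\str \,\exp\!\Big( \Xi(S) \Big),
\eeq
where the exponential collects the contribution of the subtrees hanging off $S$. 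The key step is then to prove by induction on the size of the collections (truncating the sum defining $\Xi$ to collections of cardinality $\leq m$ and letting $m \to \infty$) that $\Xi(S_0) \leq a(S_0)$ for every $S_0$: if $\Xi(S) \leq a(S)$ for all smaller collections, then the right-hand side above is $\leq \sum_{S \sim S_0} \str w(S)\str e^{a(S)} \leq a(S_0)$ by the hypothesis \eqref{eq: kp extra abstract}. This also supplies the absolute convergence of all series involved. Once \eqref{eqbound on clusters touching something abstract} is in hand, \eqref{eqbound on clusters containing something abstract} follows: a connected graph on $\caS \cup \{S_0\}$ with $\caS$ nonempty forces $S_0$ to be adjacent (through the graph, hence via at least one neighbour) to the rest, and organizing by the neighbours of $S_0$ in a spanning tree yields
\beq
\sum_{\caS \neq \emptyset} k(\caS \cup \{S_0\}) \prod_{S \in \caS} \str w(S)\str \leq \exp\big(\Xi(S_0)\big) - 1 \leq e^{a(S_0)} - 1,
\eeq
and adding the $\caS = \emptyset$ term (which equals $1$) gives exactly $e^{a(S_0)}$.

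The main obstacle is purely bookkeeping: making the "decompose the connected graph along a spanning structure" step rigorous so that the recursive inequality has the right combinatorial constants and no overcounting. The cleanest route is to invoke the Penrose (or Rota) identity rewriting $k(\caS)$ as an alternating sum over subgraphs that collapses to a sum over trees with signs controlled by nonnegative weights, which is exactly the content of the standard references; since the lemma is explicitly stated as following \cite{ueltschi, koteckypreiss}, I would cite that identity rather than reprove it, and concentrate the writing on the induction that turns the Koteck\'y--Preiss hypothesis \eqref{eq: kp extra abstract} into the stated bounds. No analytic subtlety arises because $\scrS$ is countable and all sums are of nonnegative terms once absolute values are taken.
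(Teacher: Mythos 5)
Your plan reverses the paper's order, and the bridge you use to close the loop does not hold. The paper proves \eqref{eqbound on clusters containing something abstract} \emph{first}, by induction on $\str\caS\str$, and then obtains \eqref{eqbound on clusters touching something abstract} from it in one line: for $\caS$ with $\caS\sim S_0$ choose any $S_1\in\caS$ with $S_1\sim S_0$, write $\caS=\{S_1\}\cup\caS'$ so that $k(\caS)=k(\caS'\cup\{S_1\})$, apply \eqref{eqbound on clusters containing something abstract} at $S_1$, and absorb the (over)sum over $S_1$ with \eqref{eq: kp extra abstract}. Your proposed bridge in the opposite direction, namely
$\sum_{\caS\neq\emptyset}k(\caS\cup\{S_0\})\prod_{S\in\caS}\str w(S)\str\leq\e^{\Xi(S_0)}-1$
with $\Xi(S_0)$ the left side of \eqref{eqbound on clusters touching something abstract}, is not correct. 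When you remove $S_0$ from a connected graph on $\caS\cup\{S_0\}$, each surviving component $\caS_i$ of $\caS$ hangs off $S_0$ through some neighbour $S\sim S_0$, and the quantity appearing in the exponent of the resulting recursion is $\sum_{S\sim S_0}\str w(S)\str\cdot\bigl(\text{LHS of \eqref{eqbound on clusters containing something abstract} at }S\bigr)$: it is indexed by polymers adjacent to $S_0$, each carrying the full \emph{containing} partition function, not by clusters touching $S_0$. This is generically strictly larger than $\Xi(S_0)$ (a cluster that touches $S_0$ through several polymers is counted several times), so $\e^{\Xi(S_0)}$ does not dominate the containing sum. The recursion closes cleanly for the containing quantity, which is precisely why the induction should be run on \eqref{eqbound on clusters containing something abstract}: the base case $\caS=\emptyset$ gives $1\leq\e^{a(S_0)}$, and the inductive step combines the decomposition into hanging sub-clusters, exponentiation of the sum over their number, the inductive hypothesis, and \eqref{eq: kp extra abstract}.

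A secondary point: the appeal to the Penrose / tree-graph inequality to \emph{bound $k(\caS)$ by a sum over tree graphs with nonnegative coefficients} is misplaced. Those identities exploit the alternating signs $(-1)^{\str\scrE(\scrG)\str}$ of the Ursell function, whose cancellations collapse the sum over connected graphs onto a sum over trees; the positive count $k(\caS)$ used in the lemma has no such cancellation (it dominates the number of spanning trees rather than the reverse), so it cannot be reduced to trees by that device. No Penrose-type input is required for this lemma; the induction sketched above is entirely elementary and is what the paper intends.
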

An easy way to prove \eqref{eqbound on clusters containing something abstract} is via induction in the number of elements of the collections $\caS$.  The claim \eqref{eqbound on clusters touching something abstract} then follows by choosing $S_1 \in \caS$ such that $S_0 \sim S_1$ and applying \eqref{eqbound on clusters containing something abstract} with $S_1$ in the role of $S_0$.

\subsection{Logarithm of the partition function} \label{appsec: logarithm}

We make the setup introduced in Section \ref{sec: combinatorial lemma} more concrete
\ben
\item The polymers  are now subsets of $I_n=\{1, \ldots, n \}$. They are denoted by $A,A', \ldots$. 
\item The adjacency relation $\sim$ is defined to be  $A \sim A' \Leftrightarrow      \distance(A, A' )  \leq 1$ with $\distance(A, A' ) = \inf_{\tau \in A, \tau' \in A'} \str \tau -\tau'\str$.
\item Collections of polymers are denoted by $\caA$. We use the sets of collections $\frB_n,\frB^{1}_n$ introduced in Section \ref{sec: notation for combinatorics}.
\een
%
%
%
%
We start from a polymer-gas representation of some \emph{partition function} $\Upsilon_n$:
\beq
\Upsilon_n  =  \sum_{\caA \in \frB^1_n }   \prod_{A \in \caA} w(A)  =  \sum_{\caA \in \frB_n }   \prod_{A \in \caA} w(A)    \prod_{\{A_{i}, A_j \} \subset \caA  }   \indicator_{A_{i} \nsim A_j}  \label{eq: abstract polymer gas}
   \eeq
where the last product runs over pairs in the collection $\caA$, and the summand is defined to be $1$ if $\caA =\emptyset$.

We define the \emph{truncated weights} $w^T$ of $\caA \in \frB_n$ as follows
  \beq  \label{def: truncated cluster weights}
w^T(\caA) :=    \sum_{\scrG \in \frG^c(\caA) }  (-1)^{\str  \scrE(\scrG) \str}    \prod_{\{A_{i}, A_j\} \in  \scrE(\scrG)} 1_{[A_{i} \sim A_j]}  \prod_{A_{i} \in \caA} w(A)
\eeq
where the sum runs over $\frG^c(\caA)$, the set of connected graphs with vertex set $\caA$, and 
 $\scrE(\scrG) $ is the edge set of the graph $\scrG$.   We call $\caA$ a cluster whenever the graph on $\caA$ with edge set $\{ \{A_{i}, A_j\},    A_{i} \sim A_j \}$ is connected.  Hence, if $\caA$ is not a cluster, then $w^T(\caA)=0$. 
%

Next, we state the basic result of cluster expansions (cfr.\ (eq. 4) in \cite{ueltschi}).

\begin{proposition}   \label{prop: basic cluster expansion result}
Assume there is  $\delta >0$ such that the 'Kotecky-Preiss criterion'
\beq
\sum_{A \subset I_n: A \sim A'}   \e^{\delta \str A \str}  w(A)   \leq \delta \str A' \str     \label{eqkotecky preiss abstract}
\eeq
holds.
Then $\Upsilon_n \neq 0$ and
 \beq
\log \Upsilon_n=  \sum_{\caA \in \frB_n}    w^T(\caA),
 \eeq
and, for any $A_0 \subset I_n $ 
\beq
 \sum_{\caA \in \frB_n:   \caA \sim A_0  }    \left \str w^T(\caA) \right\str    \leq    \delta \str  A_0 \str   \label{eqbound on clusters touching something}
\eeq
\end{proposition}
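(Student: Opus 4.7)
The plan is to follow the standard Mayer/Kotecky--Preiss route and reduce everything to the combinatorial Lemma \ref{lemma: combi trick abstract} that has already been set up. First I would write the hard-core constraint in \eqref{eq: abstract polymer gas} in its Mayer form
\[
\prod_{\{A_i,A_j\}\subset\caA}\mathbf{1}_{[A_i\nsim A_j]}=\prod_{\{A_i,A_j\}\subset\caA}\bigl(1+(-\mathbf{1}_{[A_i\sim A_j]})\bigr)=\sum_{\scrG\subset K_{|\caA|}}(-1)^{|\scrE(\scrG)|}\prod_{\{A_i,A_j\}\in\scrE(\scrG)}\mathbf{1}_{[A_i\sim A_j]},
\]
so that, exchanging the order of summation,
\[
\Upsilon_n=\sum_{\caA\in\frB_n}\sum_{\scrG}(-1)^{|\scrE(\scrG)|}\prod_{\{A_i,A_j\}\in\scrE(\scrG)}\mathbf{1}_{[A_i\sim A_j]}\prod_{A\in\caA}w(A).
\]
Organizing the graph $\scrG$ by its connected components and recalling the definition \eqref{def: truncated cluster weights} of $w^T(\caA)$ (which vanishes unless $\caA$ is a cluster), one obtains the purely formal identity
\[
\Upsilon_n=\exp\!\left(\sum_{\caA\in\frB_n}w^T(\caA)\right).
\]
This is the classical exponential--logarithm relation between sums over arbitrary graphs and sums over connected graphs; I would simply record it and move on to the analytic part, which is the real content.

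The analytic step is to prove absolute convergence of $\sum_{\caA}w^T(\caA)$ together with the localized bound \eqref{eqbound on clusters touching something}. From \eqref{def: truncated cluster weights} and the triangle inequality one has the pointwise bound
\[
|w^T(\caA)|\leq k(\caA)\prod_{A\in\caA}|w(A)|,
\]
with $k(\caA)$ as in Section \ref{sec: combinatorial lemma}. Now I would apply Lemma \ref{lemma: combi trick abstract} with the polymer set taken to be the subsets of $I_n$, adjacency given by $A\sim A'\iff \mathrm{dist}(A,A')\leq 1$, weights $|w(A)|$, and the function $a(A):=\delta|A|$. The Kotecky--Preiss hypothesis \eqref{eqkotecky preiss abstract} is precisely \eqref{eq: kp extra abstract} in this setup, so the conclusion \eqref{eqbound on clusters touching something abstract} yields
\[
\sum_{\caA\in\frB_n:\,\caA\sim A_0}|w^T(\caA)|\leq\sum_{\caA\,:\,\caA\sim A_0}k(\caA)\prod_{A\in\caA}|w(A)|\leq a(A_0)=\delta|A_0|,
\]
which is the bound \eqref{eqbound on clusters touching something}. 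Summing this over $A_0$ running through the singletons $\{\tau\}$, $\tau\in I_n$, and noting that every non-empty cluster is adjacent to at least one such singleton (up to the cardinality factor $|\supp\caA|$ which is absorbed by $k(\caA\cup\{\{\tau\}\})\leq \e^{a(\{\tau\})}$ via \eqref{eqbound on clusters containing something abstract}), one obtains absolute convergence of the full series $\sum_\caA|w^T(\caA)|$ on any finite volume, uniformly in $n$.

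Once absolute convergence is in hand, the formal identity derived in the first paragraph is a genuine equality of complex numbers, and in particular $\Upsilon_n\neq 0$ and $\log\Upsilon_n=\sum_\caA w^T(\caA)$, where the logarithm is understood via the convergent exponential series. The main (and really only) obstacle is the combinatorial bookkeeping needed to go from the pointwise bound $|w^T(\caA)|\leq k(\caA)\prod|w(A)|$ to a summable estimate, and that obstacle is already dealt with abstractly by Lemma \ref{lemma: combi trick abstract}; everything else is reshuffling of sums. I would also remark that the argument is entirely volume-independent, so $n$ enters only through the range of $\caA$ and the bounds are uniform in $n$, which is exactly what is needed for the applications in Sections \ref{sec: approach to steady state} and \ref{appsec: pressure}.
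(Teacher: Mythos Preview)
Your proposal is correct and takes essentially the same approach as the paper, which simply states that the proposition ``is a direct consequence of Lemma \ref{lemma: combi trick abstract}'' and cites \cite{ueltschi}. You have filled in the standard details the paper omits: the formal Mayer identity $\Upsilon_n=\exp\bigl(\sum_\caA w^T(\caA)\bigr)$, the pointwise bound $|w^T(\caA)|\le k(\caA)\prod_{A\in\caA}|w(A)|$, and the reduction to Lemma \ref{lemma: combi trick abstract} with $a(A)=\delta|A|$; the only cosmetic remark is that absolute convergence of $\sum_\caA w^T(\caA)$ is trivially a finite sum for fixed $n$ (since $\frB_n$ is finite), so the real content of your convergence discussion is the $n$-uniformity of the bound, which is indeed what the applications need.
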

This proposition is a direct consequence of Lemma \ref{lemma: combi trick abstract}.
It is clear how it applies to the present paper. The polymer weights $w(A)$ are $v(A)$ as constructed in Section \ref{sec: scalar polymer model}, and the Kotecky-Preiss criterion is  \eqref{eqbound scalar polymers} where $\delta$ should be chosen not smaller than $C \ep$ and not larger than $c$ ($C,c$ as in  \eqref{eqbound scalar polymers}). 

\subsection{Decay of cluster weights}\label{appsec: decay of cluster weights}

We assume now that the Kotecky-Preiss criterion is satisfied in a stronger sense, namely: 
\beq
\sum_{A \subset I_n: A \sim A'}   \e^{\delta \str A \str}    w_\al(A)   \leq \delta \str A' \str ,   \label{eqkotecky preiss distance abstract}
\eeq
where 
$$ \, w_\al(A)   :=\dist(A)^{\al}  w(A)$$
 and we show that this yields some decay in the cluster weights.
Proposition \ref{prop: basic cluster expansion result}  gives
\beq
 \sum_{\caA \in \frB_n:   \caA \sim A_0  }    \left \str w_\al^T(\caA) \right\str   =   \sum_{\caA \in \frB_n:   \caA \sim A_0  }  \prod_{A \in \caA}  \dist(A)^{\al}  w^T(\caA)     \leq   \delta\str A_0 \str \label{eqdecay of cluster weights with alpha}
\eeq
If $\caA$ is a cluster  then
$\sum_{A \in \caA} \dist(A) \geq  \dist(\caA)$,   and  if additionally all polymers $A$ with nonzero weight have $\dist(A) >1$ (this is our case),
 then 
\beq
 \prod_{A \in \caA}  \dist(A)^{\al} \geq         \dist(\caA)^{\al}  \label{eq: distances on clusters}
\eeq
Hence, we use \eqref{eqdecay of cluster weights with alpha} to derive 
\baq
 \sum_{\caA \in \frB_n:   \caA \sim A_0  } 
1_{\dist(\caA) \geq m}
   \left \str w^T(\caA) \right\str   
  & \leq &    C   \delta   \str A_0 \str (1+m)^{-\al} 
 \label{eqdecay of cluster weights}
\eaq
%

\subsection{Pressure}  \label{appsec: pressure}
In the previous sections, $n$ was a fixed parameter. Now, we consider different values of $n$. If $A \subset I_n \subset I_{n'}$ ($n' >n$), we assume that $w(A)$ does not depend on whether we view  $A$ as polymer in $I_n$ or $I_{n'}$. Moreover, we assume that the weights are translation-invariant in the sense that  
\beq
w(A) = w(A+ a)
\eeq 
where $A+a :=  \{ \tau, \tau -a \in A\}$.
An immediate consequence is that the pressure
\beq
p : = \lim_{n\to\infty}   n^{-1} \log \Upsilon_n 
\eeq
exists and is given by
\beq
p  =  \sum_{\caA \in \frB_{\infty}, \min \supp\caA = 1}  w^T(\caA)  \label{eqexpression pressure}
\eeq
where $\frB_{\infty} = \cup_n \frB_n$  is the set of finite collections of finite subsets of $\bbN$. 
Indeed, the absolute summability of the RHS of \eqref{eqexpression pressure} follows from Proposition \ref{prop: basic cluster expansion result}, whereas (we set $\str a \str_+:= \max(a,0)$)
\baq
p -   n^{-1}\log \Upsilon_n  &= &   \sum_{\caA \in\frB_{\infty}: \min \supp \caA =1 }   \left(\frac{n    -  \left\str n+1- \dist( \caA) \right\str_+  }{n} \right)   w^T(\caA)    \label{eqfinite size correction to pressure}
\eaq
which vanishes as $n \to \infty$ be the dominated convergence theorem, since the factor between brackets converges to $0$ for any value of $\dist(\caA)$.

\subsubsection{Continuity and analyticity of the pressure} \label{appsec: continuity and analyticity of the pressure}

Often, one encounters the situation where the weights $w(A)$ depends on a parameter $\nu \in \caD_\nu \subset \bbC$, $w(A)= w^{(\nu)}(A)$ such that the Kotecky-Preiss criterion  
\eqref{eqkotecky preiss abstract}  is satisfied even when $w(A)$ is replaced by $\sup_{\nu \in \caD_\nu} \str w^{(\nu)}(A) \str$.  Then 
\ben
\item If the weights $w^{(\nu)}(A) $ are continuous in $\nu \in \caD_\nu$, for any $A \subset \bbN$, then the pressure is continuous in $\nu \in \caD_\nu$.
\item If the weights $w^{(\nu)}(A) $ are analytic in $\nu \in \caD_\nu$, for any $A \subset \bbN$, then the pressure is analytic in $\nu \in \caD_\nu$.
\een
Starting from the absolute summability of the RHS of \eqref{eqexpression pressure}, the first claim
 follows from the dominated convergence theorem, and the second from the
 Vitali convergence theorem.

\section{Van Hove limit}  \label{secweak coupling limit}

In Section \ref{sec: analysis of t}, we introduced the Lindblad generator $M$ and we stated and exploited the claim (Proposition \ref{prop: weak coupling}) that 
\beq
\norm Q_{t}  -  \e^{-\i t L_{\sys} + \la^2 t M }  \norm  \leq  \e^{C \la^2 t}  \str \la\str^{2\al_*}  \label{eq: wc app}
\eeq
Below we derive this claim. The derivation is independent of the rest of the paper, except for the representation of the effective dynamics $Q_t$: 
\beq   \label{eq_ dyson for app}
 Q_t   =  \int_{\Si_{[0, t]}}   \,   \mu(\d\uw)   \,     \caT\left\{  \mathop{\otimes}\limits_{w \in \uw} K_w    \right\}
\eeq
and the bound $\norm K_{u,v} \normw \leq \la^2 \str h(v-u) \str $.  Moreover, we freely use Assumption \ref{ass: decay correlation functions}, i.e.\ $\int_{0}^{\infty} \d t (1+t )^{\al} \str h(t)\str \leq C$.  We set $\ka=0$ in this Appendix. Including a nonzero value for $\ka$ does not affect the reasoning.

\subsection{Alternative construction of   $M$} \label{sec: alternative construction of m}

  We define the operator $F_{s} \in \scrR = \scrB(\scrB_1(\scrH_\sys))$ using  the notation of Section \ref{secdiscretization}; 
\baq
\la^2 F_{v-u}   &:=  &  \e^{-\i v L_\sys} \caT[K_{u,v}]  \e^{\i u L_\sys} 
\eaq
For the sake of clarity, we  write out $F_s$ explicitly (recall that $\ka=0$ here)
\baq
F_{s}  &= &   - h(s)  \caR (D)  \e^{-\i s L_\sys} \caL ( D)  -  h(-s)  \caL (D)   \e^{-\i s L_\sys} \caR ( D)     \nonumber \\[1mm]
&+&    h(s)  \caL (D)  \e^{-\i s L_\sys}  \caL ( D)    + h(-s)  \caR (D)    \e^{-\i s L_\sys}\caR( D)    \label{eq: f four terms}
\eaq
Introduce  spectral projections $P_{\ve \in \caE}$ of the Liouvillian $L_\sys$. The range of $P_{\ve=0}$ has dimension $\dim \scrH_\sys$ and the $P_{\ve \neq 0}$ are one-dimensional if for any $\ve \in \caE, \ve \neq 0$, there is a unique pair $(e,e'), e,e' \in \si(H_\sys)$ such that  $\ve=e-e'$ (recall the discussion in Section \ref{sec: construction of lindblad generator}), but this plays no role in what follows. 
The connection between the operators $F_s$ and the generator $M$ given in  Section \ref{sec: analysis of t}  is 
\baq
M  &= &   \sum_{\varepsilon \in \caE}   \int_0^{\infty}  \d s \,   \e^{\i s \ve}  P_{\varepsilon} F_{s}  P_{\varepsilon}    \label{def: m as function of f}
\eaq
This can be checked by direct computation. The four terms in \eqref{eq: f four terms}  are matched with  the expression \eqref{def: lindblad} as follows:  the first two terms   give the first term under the integral (with $\kappa$). The two other terms yield the commutator with $H_{lamb}$ and the second term under the integral. 
Note also that the part of $M$ corresponding to the $\ve=0$ term  on the RHS of  \eqref{def: m as function of f} is isomorphic to the operator $\caM$ discussed in Section \ref{sec: construction of lindblad generator}, and the parts corresponding to $\ve\neq 0$ were discussed under the heading 'off-diagonal elements' in that same section. 


\subsection{Emergence from the Dyson expansion}
We start from the Dyson series \eqref{eq_ dyson for app} and
we split the set of collections of  pairs $\Si_{[0, t]}$ into leading  $\Si_{[0, t]}^{\mathrm{l}}$ and non-leading  $\Si_{[0, t]}^{\mathrm{nl}}$ subsets, as follows
\beq
\Si_{[0, t]}^{\mathrm{l}}   := \{  \uw \in \Si_{[0, t]}, \,   \forall i=1,\ldots, \str \uw \str-1:  v_i < u_{i+1} \}, \qquad  \Si_{[0, t]}^{\mathrm{nl}}:= \Si_{[0, t]} \setminus  \Si_{[0, t]}^{\mathrm{l}}
\eeq
(note that  $\emptyset \in \Si_{[0, t]}^{\mathrm{l}} $). 
Define
 \baq
Q_{t, \mathrm{l}}  & := &    \mathop{\int}\limits_{\Si_{[0, t]}^{\mathrm{l}}}   \,   \mu(\d\uw)   \,      \caT\left\{  \otimes_{w \in \uw} K_w    \right\} \nonumber \\
&=&\sum_{m=0}^{\infty}    \,   \la^{2m}  \mathop{\int}\limits_{0<\ldots <u_i < v_i < u_{i+1}< \ldots< t} \d \uu  \d \uv  \,  \e^{-\i (t-v_m) L_\sys}    F_{v_m-u_m}\ldots       F_{v_2-u_2}  \e^{-\i (u_2-v_1) L_\sys} F_{v_1-u_1} \e^{-\i u_1 L_\sys} \nonumber
\eaq
Its Laplace transform is
\beq
\hat Q_{z,  \mathrm{l} }:=    \int_{\bbR_{+}} \d t\,  \e^{-t z} Q_{t, \mathrm{l}}  =  \left(  z +\i L_\sys -  \la^2 \hat F_z  \right)^{-1}
\eeq
where $\hat F_z=   \int_{\bbR_{+}} \d t \, \e^{-t z} F_t$.
By inverting the Laplace transform we get
\begin{lemma}\label{lemma: inverse laplace of lead}
\beq
\left\norm Q_{t, \mathrm{l}} - \e^{ - \i t L_{\sys} +  \la^2 t M}  \right\norm   \leq    \e^{\la^2 tC}   C  \str\la\str^{2\al_*}
\eeq
where $\al_*=\min(1,\al) $ for $\al \neq 1$, and $  \str\la\str^{2\al_*}=\la^2 \str\log \str \la \str \str $ for $\al=1$.
\end{lemma}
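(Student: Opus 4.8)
The goal is to invert the Laplace transform $\hat Q_{z, \mathrm{l}} = (z + \i L_\sys - \la^2 \hat F_z)^{-1}$ and compare the result with $\e^{-\i t L_\sys + \la^2 t M}$. Here is the plan.

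\textbf{Step 1: Block-diagonalize in the spectral projections of $L_\sys$.} Both $L_\sys$ and $M$ respect the spectral decomposition $\bigoplus_{\ve \in \caE} P_\ve$ up to errors: strictly $L_\sys = \sum_\ve \ve P_\ve$, while $M$ preserves each $\Ran P_\ve$ only to leading order. I would first record that $\hat F_z$ is analytic in $z$ in a half-plane $\Re z > -\delta_0$ for some $\delta_0 > 0$ — this is exactly where Assumption \ref{ass: decay correlation functions} enters, since $\int_0^\infty \d t\, (1+t)^\al |h(t)| < \infty$ gives not only boundedness of $\hat F_z$ but also a modulus-of-continuity estimate $\|\hat F_z - \hat F_{z'}\| \leq C |z - z'|^{\min(\al,1)}$ (with a logarithmic correction at $\al = 1$) for $z, z'$ near the imaginary axis. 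This Hölder bound on $\hat F$ is the quantitative engine of the whole argument.

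\textbf{Step 2: Locate the resolvent singularities.} For each $\ve$, on the subspace $\Ran P_\ve$ the operator $z + \i L_\sys - \la^2 \hat F_z$ equals $(z + \i\ve) - \la^2 P_\ve \hat F_z P_\ve + O(\la^2)$-off-diagonal terms. By analytic perturbation theory of isolated eigenvalues (the $P_\ve$ are separated by a fixed gap in $\i\caE$), the equation $\det(z + \i L_\sys - \la^2 \hat F_z) = 0$ has, near $z = -\i\ve$, solutions forming a cluster at distance $O(\la^2)$ from $-\i\ve$; more precisely the relevant eigenvalue branch is $z_\ve(\la) = -\i\ve + \la^2 m_\ve + o(\la^2)$ where $m_\ve$ is an eigenvalue of $P_\ve \hat F_{-\i\ve} P_\ve = P_\ve(\int_0^\infty \d s\, \e^{\i s\ve} F_s)P_\ve$, i.e. precisely the block of $M$ identified in \eqref{def: m as function of f}. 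I would invoke Lemma \ref{lem: spectral analysis of m} / the discussion in Section \ref{sec: analysis of t} to conclude all these $m_\ve$ have real part $\leq -g_M < 0$ except the single branch at $\ve = 0$ associated to the eigenvalue $0$ of $M$ (the stationary state). So all singularities of $\hat Q_{z,\mathrm{l}}$ lie in $\{\Re z \leq -c\la^2\}$ save a simple one at $z = 0$.

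\textbf{Step 3: Contour integration.} Write $Q_{t,\mathrm{l}} = \frac{1}{2\pi\i}\int_\Gamma \e^{tz} \hat Q_{z,\mathrm{l}}\, \d z$ over a vertical contour, then push $\Gamma$ leftward to a contour $\Gamma'$ at $\Re z = -\kappa \la^2$ for a small fixed $\kappa$, picking up residues at the eigenvalue branches $z_\ve(\la)$. The sum of residues reconstructs $\sum_\ve \e^{t z_\ve(\la)} \Pi_\ve(\la)$ where $\Pi_\ve(\la)$ is the Riesz projection; comparing term by term with the spectral decomposition of $\e^{-\i t L_\sys + \la^2 t M}$ (whose eigenvalues are exactly $-\i\ve + \la^2 m_\ve$ and whose spectral projections are the $\la^2$-perturbations of $P_\ve$-blocks), the difference is controlled by: (a) the error $|z_\ve(\la) - (-\i\ve + \la^2 m_\ve)| = o(\la^2)$ — this is where the Hölder-$\min(\al,1)$ bound on $\hat F$ from Step 1 gives the rate $\la^2 \cdot \la^{2\min(\al,1)-2}$... no, more carefully: the correction to the eigenvalue is $\la^2(\hat F_{z_\ve} - \hat F_{-\i\ve})$ restricted appropriately, and $|z_\ve + \i\ve| = O(\la^2)$, so by Hölder continuity this correction is $O(\la^2 \cdot (\la^2)^{\min(\al,1)}) = O(\la^{2+2\al_*})$, giving $\e^{tz_\ve} - \e^{t(-\i\ve+\la^2 m_\ve)} = O(\la^{2\al_*} \e^{\la^2 t C})$ after multiplying by $t$ and using $t \la^2 \leq$ const times the exponential; (b) the projection error $\|\Pi_\ve(\la) - (\text{block of } \e^{\la^2 t M})\|= O(\la^{2\al_*})$ from the same perturbation theory; and (c) the leftover contour integral over $\Gamma'$, bounded by $\e^{-\kappa\la^2 t}\int_{\Gamma'}\|\hat Q_{z,\mathrm{l}}\|\,|\d z|$, which needs the resolvent to be integrable along $\Gamma'$.

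\textbf{Step 4: Resolvent bound on the shifted contour.} The one genuinely delicate estimate is bounding $\|\hat Q_{z,\mathrm{l}}\| = \|(z + \i L_\sys - \la^2 \hat F_z)^{-1}\|$ uniformly for $z$ on $\Gamma'$ and large $|\Im z|$. For $|\Im z|$ of order $1$ one uses that $z$ stays a fixed distance from the spectrum of $\i L_\sys$ shifted by $O(\la^2)$; for $|\Im z| \to \infty$ one uses $\|\hat F_z\| \leq C$ and Neumann series, getting $\|\hat Q_{z,\mathrm{l}}\| \leq C/|\Im z|$, which is not quite integrable, so one integrates by parts once in $t$ (equivalently uses that $z \hat Q_{z,\mathrm{l}}$ is bounded and $\frac{\d}{\d z}\hat Q_{z,\mathrm{l}}$ decays one power faster) to make the inverse Laplace transform converge. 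This is the standard resolvent/semigroup estimate and I expect it to be the main technical obstacle — everything else is bookkeeping with analytic perturbation theory.

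\textbf{Step 5: Assemble.} Collecting (a)–(c) yields $\|Q_{t,\mathrm{l}} - \e^{-\i t L_\sys + \la^2 t M}\| \leq \e^{C\la^2 t} C \str\la\str^{2\al_*}$ with $\al_* = \min(1,\al)$ and the logarithmic modification at $\al = 1$, which is the claim of Lemma \ref{lemma: inverse laplace of lead}. (The subsequent step, not part of this lemma, would be to show $\|Q_t - Q_{t,\mathrm{l}}\|$ is of the same order by resumming the non-leading pair collections in $\Si_{[0,t]}^{\mathrm{nl}}$ — there one extracts an extra factor from the "overlap" of consecutive pairs and bounds the geometric-type series using $\int (1+t)^\al |h| < \infty$ again.)
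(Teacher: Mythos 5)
There is a genuine gap at the very first step of your contour-deformation strategy. You assert that $\hat F_z=\int_0^\infty \e^{-tz}F_t\,\d t$ is analytic in a half-plane $\Re z>-\delta_0$ for some $\delta_0>0$, and you attribute this to Assumption~\ref{ass: decay correlation functions}. But that assumption only gives $\int_0^\infty (1+t)^\al\,|h(t)|\,\d t<\infty$, i.e.\ \emph{polynomial} decay of $h(t)$ (and hence of $\|F_t\|\leq C|h(t)|$). This guarantees that $\hat F_z$ is bounded and H\"older-continuous on $\Re z\geq 0$, but it does \emph{not} give an analytic continuation to any open set containing the imaginary axis: for $\Re z<0$ the factor $\e^{-tz}$ grows exponentially and the integral diverges in general. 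Consequently $\hat Q_{z,\mathrm{l}}$ has no analytic continuation to $\Re z<0$, there are no isolated poles to pick up, and Steps~2 and~3 of your argument (locating eigenvalue branches $z_\ve(\la)$, pushing the contour to $\Re z=-\kappa\la^2$, residue calculus, Riesz projections) cannot be carried out under the stated assumptions. This is not a fixable technicality within your approach—it is precisely the reason the paper keeps its contour $\caC=\i\bbR+A\la^2$ strictly in the right half-plane.

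The paper's actual proof sidesteps the singularity analysis entirely. It compares the two Laplace transforms directly on $\caC$ via the resolvent identity $\hat Q_{z,\mathrm{l}}-\hat Q_{z,0}=\la^2\hat Q_{z,\mathrm{l}}(\hat F_z-M)\hat Q_{z,0}$, decomposes $\hat F_z-M=V_\ve+W_\ve$ (where $V_\ve=\hat F_{-\i\ve}-M$ is purely off-diagonal in the $P_\ve$'s, and $W_\ve=\hat F_z-\hat F_{-\i\ve}$ is small by the H\"older bound you correctly identified in Step~1), and then estimates the resulting integrals over the pieces $\caC_\ve$ of the fixed contour using $\|\hat Q_{z,j}\|\leq C\,\distance(z,-\i\caE)^{-1}$ and $\|\hat Q_{z,j}P_\ve\|\leq C|z+\i\ve|^{-1}$. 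The middle factor $\la^2(\hat F_z-M)$ supplies both the explicit $\la^2$ and the integrability of the product of resolvents, so the absolute-convergence worry you flag in Step~4 simply does not arise. You had the right raw ingredients (the H\"older continuity of $\hat F_z$, the identification $M=\sum_\ve P_\ve\hat F_{-\i\ve}P_\ve$, the separation of eigenvalues), but the residue-calculus packaging requires analyticity you do not have; the direct resolvent-difference estimate is the way to assemble them.
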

\begin{proof}
We start by writing
\beq   \label{eq: first integral laplace inversion}
 Q_{t, \mathrm{l}}  - \e^{ - \i t L_{\sys} +  \la^2 t M} =   \int_{\caC}  \d z (\hat Q_{z,  \mathrm{l} } -\hat Q_{z, 0 } )  \e^{t z}
\eeq
 where  $\caC$ is a contour  in the complex plane, of the form $\caC = \i \bbR + A\la^2$ with
 $A$ large enough and
 $$
 \hat Q_{z, 0 }=  (  z +\i L_\sys-\la^2M)^{-1}.
 $$  Indeed,
 since the spectrum  $\caE$ of $L_\sys$ is real and $ \norm \hat F_z\norm$  
 is uniformly bounded for $\Re z\geq 0$,  the function $\hat Q_{z, l } $ is analytic in the region $\Re z\geq A\la^2$ if $A$ is large enough and so \eqref{eq: first integral laplace inversion} holds. 

Next, simple algebra yields
\beq
 \hat Q_{z,  \mathrm{l} } - \hat Q_{z, 0 } =\la^2\hat Q_{z,  \mathrm{l}}(\hat F_z -M)  \hat Q_{z, 0 }
\label{resolventid}
\eeq
 and from \eqref{def: m as function of f}
\beq
 M = \sum_{\ve} P_{\ve} \hat F_{-\i\ve}  P_{\ve}.
 \label{resolventid1}
\eeq
We decompose  $\int_\caC = \sum_{\ve \in \caE} \int_{\caC_\ve}$ where  $\caC_\ve:= \{ z\in \caC,  \str z+\i \ve \str= \min_{\ve' \in \caE}\str z+\i \ve' \str  \}$.  For $z\in\caC_\ve$ we write
\beq
\hat F_z -  M = \left(\hat F_{-\i\ve}   - M \right)+\left(\hat F_z - \hat F_{-\i\ve}   \right)
:=V_\ve+W_\ve.
\eeq
The lemma follows if we show
\beq
\la^2 \sum_{\ve\in\caE} \int_{\caC_\ve}  \d z\| \hat Q_{z,  \mathrm{l} } (V_\ve+W_\ve)\hat Q_{z, 0 }  \|\leq C\la^{2\al_*}.
\label{decomposed}
\eeq
 We need the
bounds, for $z \in \caC$,
\baq  \label{eqa priori}
\norm \hat Q_{z, j }  \norm & \leq&  C(\distance(z, -\i\caE))^{-1} \label{eqa priori1} \\[2mm]
 \norm   \hat Q_{z, j} P_\ve  \norm  &\leq&  C \str z+\i\ve \str^{-1}  \label{eqa priori2} 
\eaq
for $j= \mathrm{l},0$.  The bound for $j=1$ follow by straightforward estimates on the Neumann series
\beq
\left(  z+  \i L_{\sys} -  \la^2 \hat  F_z  \right)^{-1}  =    \left(  z +  \i L_{\sys}   \right)^{-1} \sum_{n=0}^{\infty} \left(  \la^2  \hat  F_z     \left(  z + \i L_{\sys}   \right)^{-1} \right)^n
\eeq
using in particular $\norm (z + \i L_{\sys})^{-1} \norm \leq C (\distance(z,-\i \caE))^{-1} \leq  C\str\la\str^{-2}$. The $j=0$ case follows similarly. Using \eqref{resolventid1} we have
$$
V_\ve=\sum_{(\ve',\ve'')\neq(\ve,\ve)}P_{\ve'} \hat F_{-\i\ve}   P_{\ve''}
$$ 
so combining with \eqref{eqa priori1}  and  \eqref{eqa priori2} 
$$
\| \hat Q_{z,  \mathrm{l}} V_\ve\hat Q_{z, 0 }  \|\leq C\sum_{\ve'\neq\ve}  \frac{1}{\str z + \i  \ve \str \str z + \i   \ve' \str}, \qquad \textrm{for} \, z \in \caC_{\ve}. 
$$
 The  $V_\ve$ contribution to \eqref{decomposed} is then bounded by
 \beq\label{vve}
C\la^2 \sum_{\ve' \neq \ve} \int_{\caC_\ve}  \d z    \,  \frac{1}{\str z + \i  \ve \str \str z + \i   \ve' \str}    <  C   \la^{2} |\log \str \la \str |   .
\eeq
As for the $W_\ve$ we  have
\baq
 &&     \norm \hat F_z - \hat F_{-i\ve} \norm  \leq     \int_0^\infty \d t \,  h(t) \left\str \e^{-z t}- \e^{i\ve t}\right\str.
\eaq
Using  $\int_0^{\infty} \d t \, h(t) (1+t)^\al \leq C$, we obtain for $  \Re z \geq 0 $
\beq
 \norm \hat F_z - \hat F_{-i\ve} \norm \leq  C\min\{ \str z+i\ve\str^{\min(\al,1)}  ,1\}.
\eeq
Hence the $W_\ve$ contribution to \eqref{decomposed} is bounded by
\beq\label{wve}
C \la^2 \ \sum_\ve   \int_{\caC_\ve}  \d z  \,   \frac{1}{\str z +\i  \ve \str^2}   \,   \min(1, \str z+i\ve\str^{\min(\al,1)}  )    \leq C    \str \la \str^{2\al_*} .
\eeq
\eqref{decomposed} follows now from \eqref{vve} and \eqref{wve}.
\end{proof}

It remains to estimate the contribution of the non-leading pairs:
\begin{lemma}  \label{lem: bound on nonlead}
For any $\frt >0$;
\beq
\sup_{0 < t < \la^{-2}\frt}  \mathop{\int}\limits_{\Si_{[0, t]}^{\mathrm{nl}}}   \,  \d \uw   \,      \norm  \caT\left\{  \otimes_{w \in \uw} K_{w}    \right\} \norm   \leq\e^{\frt C}    \str\la\str^{2\min(1,\al)}
\eeq
\end{lemma}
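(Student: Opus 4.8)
\textbf{Proof plan for Lemma \ref{lem: bound on nonlead}.}
The plan is to exploit the structure of the non-leading collections: a collection $\uw \in \Si^{\mathrm{nl}}_{[0,t]}$ contains at least one ``overlap'', i.e.\ an index $i$ with $u_{i+1} < v_i$ (two consecutive pairs whose time-intervals overlap). Intuitively, each such overlap costs a factor $\la^2$ more than a generic pair, because instead of being integrated freely over a time window of size $\sim \la^{-2}$, the ``inner'' leg is confined to an interval determined by the correlation decay of $h$. More precisely, for a leading collection the $m$-th order term carries weight $\sim (\la^2 \|h\|_1 \cdot \la^{-2})^m/m! = (\|h\|_1)^m/m!$ (free integration of each $u_{i+1}$ over the mesoscopic window), whereas an overlap forces $u_{i+1} \in (u_i, v_i)$, so the corresponding $u$-integration is controlled by $\int h$ rather than by the window length; this is where the extra $\la^2$ comes from, and the decay exponent $\al$ upgrades $\la^2$ to $\la^{2\min(1,\al)}$ exactly as in the proof of Proposition \ref{prop: weak coupling}.

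First I would set up the combinatorial decomposition. Given $\uw \in \Si^{\mathrm{nl}}_{[0,t]}$ with $|\uw|=m$, let $i^*$ be the smallest index with $u_{i^*+1} < v_{i^*}$. Split $\uw = \uw_{<} \cup \{w_{i^*}, w_{i^*+1}\} \cup \uw_{>}$ where $\uw_{<}=\{w_1,\dots,w_{i^*-1}\}$ is leading and $\uw_{>}$ is arbitrary. The measure factorizes by \eqref{eq: product property}, and $\norm \caT[\otimes_{w\in\uw} K_w]\normw \leq \prod_{w} \norm K_w\normw \leq \prod_w \la^2 C|h(v-u)|$ by submultiplicativity of $\normw$ (Section \ref{sec: norms}) and the bound $\norm K_{u,v}\normw \leq \la^2 C|h(v-u)|$ from Section \ref{secexpansions}. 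Then I would bound: the $\uw_{<}$-integral (a leading, unconstrained-below-and-above simplex integral) by $\exp(\la^2 C |\Dom([0,t])| \norm h\norm_1) \le \e^{\frt C}$ exactly as in \eqref{eq: unconstrained estimate}; the $\uw_{>}$-integral by the same estimate; and the crucial middle factor
\[
\int \d u_{i^*}\, \d v_{i^*}\, \d u_{i^*+1}\, \d v_{i^*+1}\ \la^2 C|h(v_{i^*}-u_{i^*})|\ \la^2 C|h(v_{i^*+1}-u_{i^*+1})|\ \indicator_{[u_{i^*} < u_{i^*+1} < v_{i^*}]}.
\]
Here $u_{i^*}$ ranges over $[0,t]$ (giving $\la^{-2}\frt$), $v_{i^*+1}$ over $(u_{i^*+1}, t]$ (giving $\norm h\norm_1$ after integrating $h$), $v_{i^*}$ over $(u_{i^*},t]$ (giving $\norm h\norm_1$), but $u_{i^*+1}$ is confined to $(u_{i^*}, v_{i^*})$, of length $v_{i^*}-u_{i^*}$; using $\int_0^\infty \d s\, |h(s)|\, s^{\min(1,\al)} \le C$ (which follows from Assumption \ref{ass: decay correlation functions}) one gets an extra factor $\la^{2}\cdot(\la^{-2})^{1-\min(1,\al)}\cdot\la^{-2}\cdot$const, and after collecting the $\la$'s against the $\la^{4}$ from the two $K$'s and the $\la^{-2}\frt$ from the free $u_{i^*}$, the net gain over the leading power is $\la^{2\min(1,\al)}$. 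Summing over $m$ (equivalently over $|\uw_{<}|$ and $|\uw_{>}|$) reproduces the two exponentials and leaves the advertised bound $\e^{\frt C}\str\la\str^{2\min(1,\al)}$.

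The main obstacle I anticipate is the bookkeeping of where the extra power of $\la$ is actually harvested: one must be careful that the constrained integration $u_{i^*+1}\in(u_{i^*},v_{i^*})$ is paired with a factor $|h(v_{i^*}-u_{i^*})|$ whose argument is \emph{precisely} the length of that interval, so that $\int \d u_{i^*+1}\, \indicator \cdot |h(v_{i^*}-u_{i^*})| = (v_{i^*}-u_{i^*})|h(v_{i^*}-u_{i^*})|$ and the weight $s\mapsto s|h(s)|$ is only integrable after sacrificing a power $s^{\min(1,\al)}$ worth of decay when $\al<1$ — this is the identical mechanism underlying the $\al_*$ in Lemma \ref{lemma: inverse laplace of lead}, and it is why the non-leading estimate matches the leading-order error and not something larger. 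A secondary point is that one should argue the extraction of a single overlap suffices: since all other pairs are estimated by the unconstrained bound \eqref{eq: unconstrained estimate}, no further gain is needed, and the choice of the \emph{first} overlap $i^*$ makes $\uw_{<}$ genuinely leading so that \eqref{eq: unconstrained estimate}-type reasoning applies verbatim to it. Everything else is routine simplex-integral estimation of the kind already carried out in Section \ref{sec: bounds on correlation functions} and Appendix \ref{secweak coupling limit}.
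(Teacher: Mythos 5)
Your proposal is correct and follows essentially the same route as the paper's proof: identify the first pair of overlapping intervals (the paper's pairs $(u,v),(u',v')$ with $u<u'<v$ correspond to your $w_{i^*},w_{i^*+1}$), bound the contribution of all remaining pairs by the unconstrained exponential from \eqref{eq: unconstrained estimate}, and estimate the four-variable integral over the two special pairs by integrating out $v'$ (giving $\|h\|_1$), then $u'$ over $(u,v)$ (giving $v-u$), and finally using $\int_0^t s|h(s)|\,\d s\lesssim t^{1-\min(1,\al)}$ together with the free $u$-integration of length $t\leq\la^{-2}\frt$ to collect the advertised power. Your intermediate tally of $\la$-powers (``extra factor $\la^2(\la^{-2})^{1-\min(1,\al)}\la^{-2}$'') is a little opaque and appears to double-count one $\la^2$ before compensating for it, but the final exponent $\la^{2\min(1,\al)}$ matches, and the mechanism you identify (the constrained $u'$-integral trading a factor $(v-u)$ against the decay of $h$) is exactly the one the paper uses in its $q(t)$ estimate.
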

\begin{proof}
We start with a bound analogous to \eqref{eqbasic bound on correlations bulk}:
\beq
 \mathop{\int}\limits_{\Si_{[0, t]}^{\mathrm{nl}}}   \,  \d \uw   \,      \norm  \caT\left\{  \otimes_{w \in \uw} K_{w}    \right\} \norm  \leq   \mathop{\int}\limits_{\Si_{[0, t]}^{\mathrm{nl}}}   \,  \d \uw   \prod_{w \in \uw}  \la^2C \str h(v-u)\str
\eeq
By definition, every  $\uw \in \Si_{[0, t]}^{\mathrm{nl}}$ has to contain at least two pairs $(u,v), (u',v')$ such that $u <u' < v$. Choose the first two such pairs and integrate over the coordinates of all other pairs, proceeding as in \eqref{eq: unconstrained estimate}.  This yields
\beq
 \mathop{\int}\limits_{\Si_{[0, t]}^{\mathrm{nl}}}   \,  \d \uw   \prod_{w \in \uw} C\la^2\str h(v-u)  \str \leq C \e^{ \la^2C t  \norm h \norm_1 }  \la^4 q(t)
\eeq 
with
\beq
   q(t) =   \mathop{\int}\limits_{0 <u <u'<v<t ,  u' <v' <t  }  \d u  \d v   \d u'  \d v' \,           \str  h(v-u) h(v'-u')   \str.
\eeq 
$q(t)$ is estimated by first performing the integral over $v'$, which gives a factor $\norm h \norm_1$, and then the one over $u'$, giving $\str v-u\str$:
\baq
q(t)   &\leq&    \norm h \norm_1    \mathop{\int}\limits_{0 <u <v<t}  \d u  \d v      \str  h(v-u)   \str \str v-u \str  \nonumber \\[2mm]
 & \leq &     \norm h \norm_1   \int_0^t       \d u   \left( \max_{0< s< t-u} (1+\str s \str)^{1-\al} \right)     \int_0^{\infty} \d s \, (1+\str s \str)^{\al}  \str h(s) \str  \nonumber  \\[2mm]
  & \leq &    \norm h \norm_1 C     t^{2-2\min(1,\al)} 
\eaq 
Taking $0< t \leq \str\la\str^{-2}\frt$, this yields the lemma.
\end{proof}
The bound \eqref{eq: wc app} (and hence Proposition \ref{prop: weak coupling}) 
 follows by combining  Lemmata \ref{lem: bound on nonlead} and \ref{lemma: inverse laplace of lead}.

\section{Infrared Regularity} \label{sec: ifnrared regularity}

We clarify how our Assumption \ref{ass: decay correlation functions}, formulated in the time-domain, relates   to infrared regularity of the form factor $\phi$. First, we state the auxiliary
\begin{lemma} \label{lem: decay}
	Let $f : (0,\infty) \to \bbR$ be a  measurable function of compact support  and such  that  for some  $0<\ga<1$,
\beq  \label{eq: ga decay}
\str \partial^{n} f(\om) \str  \leq  C_0  \om^{\ga-1-n}, \qquad n=0,1 \eeq 
Then, for any $\be<\ga$,  
$$
\left |\int_{0}^{\infty} f(\om)\e^{\i t \om } \d \om \right |\leq   C(\be) \str t \str^{-\be}
$$
\end{lemma}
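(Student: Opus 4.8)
The statement is a standard ``decay of the Fourier transform from local H\"older regularity'' estimate, and the natural route is integration by parts combined with a dyadic decomposition near the origin. Write $I(t):=\int_0^\infty f(\om)\e^{\i t\om}\,\d\om$ and assume $t>0$ (the case $t<0$ is identical by conjugation, and $t$ in a bounded neighbourhood of $0$ is trivial since $f$ is integrable by \eqref{eq: ga decay} with $n=0$ and $\ga>0$). Fix a cutoff scale $\de:=1/t$ and split $I(t)=\int_0^{\de}+\int_\de^\infty =: I_0(t)+I_1(t)$.

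For the inner piece $I_0(t)$ I would simply bound $|\e^{\i t\om}|\le 1$ and use $|f(\om)|\le C_0\om^{\ga-1}$, which is integrable near $0$ because $\ga>0$; this gives $|I_0(t)|\le C_0\int_0^{1/t}\om^{\ga-1}\,\d\om = (C_0/\ga)\,t^{-\ga}\le C(\be)t^{-\be}$ for $\be\le\ga$ (using $t\ge 1$, say, after handling small $t$ separately). For the outer piece $I_1(t)$ I would integrate by parts once, writing $\e^{\i t\om}=\frac{1}{\i t}\partial_\om \e^{\i t\om}$. Since $f$ has compact support, the boundary term at $+\infty$ vanishes; the boundary term at $\om=\de$ is $-\frac{1}{\i t}f(\de)\e^{\i t\de}$, bounded by $\frac1t\,C_0\de^{\ga-1}=C_0 t^{-\ga}$. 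The remaining integral is $-\frac{1}{\i t}\int_\de^\infty f'(\om)\e^{\i t\om}\,\d\om$, which by \eqref{eq: ga decay} with $n=1$ is bounded in absolute value by $\frac{C_0}{t}\int_{1/t}^\infty \om^{\ga-2}\,\d\om=\frac{C_0}{t}\cdot\frac{(1/t)^{\ga-1}}{1-\ga}=\frac{C_0}{1-\ga}\,t^{-\ga}$, where we used $\ga-2<-1$ so the integral converges at $+\infty$ (the compact support is not even needed here, only at the boundary term). Collecting the three contributions gives $|I(t)|\le C(C_0,\ga)\,t^{-\ga}$ for $t\ge 1$, which is stronger than the claimed $t^{-\be}$ for any $\be<\ga$; absorbing the region $|t|\le 1$ into the constant (where $|I(t)|\le \|f\|_{L^1}$) yields the bound for all $t$.

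There is essentially no serious obstacle here: the only mild subtlety is making sure the dyadic/cutoff split is chosen at the scale $\de\sim 1/t$ so that the two pieces balance, and checking that the hypothesis $0<\ga<1$ is exactly what makes $\om^{\ga-1}$ integrable at $0$ and $\om^{\ga-2}$ integrable at $\infty$. One could equally well phrase this via a dyadic decomposition $\sum_{j}\int_{2^{j}\de}^{2^{j+1}\de}$ and gain on each annulus either a factor from the measure of the annulus (no oscillation used) or a factor $1/(t\cdot 2^j\de)$ from one integration by parts, optimizing $j$; this gives the same result and explains why one loses an arbitrarily small $\ep$ only if one is sloppy — in fact the clean statement with $\be=\ga$ holds, and the formulation with $\be<\ga$ in the lemma is just a convenient weakening. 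I would present the cutoff-at-$1/t$ argument as it is the shortest. Finally, I note the compact support hypothesis is used only to kill the boundary term at $+\infty$ after integration by parts; it could be replaced by, e.g., $f(\om)\to 0$ as $\om\to\infty$ together with $f'\in L^1$ near infinity, but since the paper only needs the compactly supported case there is no reason to state it more generally.
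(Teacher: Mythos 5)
Your proof is correct, and it takes a genuinely different route from the paper's. You split the integral at $\de = 1/t$, bound the inner piece $\int_0^{1/t}$ directly using $|f(\om)|\le C_0\om^{\ga-1}$, and integrate by parts once on the outer piece, absorbing the boundary term at $\om=\de$ and estimating $\int_\de^\infty |f'|$ via the hypothesis $|f'(\om)|\le C_0\om^{\ga-2}$. The paper instead combines the same inner-piece estimate with a shift identity: writing $\e^{-\i}I(t) - I(t)$ in terms of $\int_0^\infty [f(\om+1/t)-f(\om)]\e^{\i t\om}\,\d\om$ (after changing variables), it introduces the modulus-of-continuity norm $|f|_\be := \int_0^\infty \sup_{0<\ep\le 1}\ep^{-\be}|f(\om+\ep)-f(\om)|\,\d\om$ and shows $|f|_\be<\infty$ for $\be<\ga$, which costs the endpoint. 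Your integration-by-parts route avoids the modulus-of-continuity detour entirely and, as you correctly observe, yields the stronger endpoint bound $|I(t)|\le C\,|t|^{-\ga}$, whereas the paper's shift-trick argument only gives $|t|^{-\be}$ for $\be<\ga$ strictly (since the verification of $|f|_\be<\infty$ produces the integrand $\om^{\ga-\be-1}$, integrable near $0$ only for $\be<\ga$). The trade-off is that the paper's formulation in terms of $|f|_\be$ is the form one would naturally reach for if the hypothesis were stated as a Hölder/Dini condition on $f$ rather than a pointwise derivative bound, while your argument exploits the derivative bound directly; given the actual hypothesis of the lemma, your approach is the more efficient one. Both arguments implicitly use that $f$ is absolutely continuous on compact subsets of $(0,\infty)$ (you for integration by parts, the paper for the mean-value estimate $|f(\om+\ep)-f(\om)|\le\ep\sup|\partial f|$), so neither requires extra regularity beyond what the other uses.
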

\begin{proof}
Define, for $0<\be \leq 1$,
\beq   \label{def: holdernorm}
|f|_\be:= \int_{0}^{\infty} \d\om \,  \sup_{ 0<\epsilon \leq 1}\epsilon^{-\be}|f(\om+\epsilon)-f(\om)| <\infty
\eeq
Then, we claim that
\beq
\left | \int_{0}^{\infty} f(\om)e^{it \om } \d \om \right |\leq   \frac{1}{\str 1-\e^{-\i}\str}  \left(  |f|_\be \str t\str^{-\be} + C\ga^{-1} \str t\str^{-\ga} \right)
\eeq
Indeed, writing (say $t>0$, the case $t<0$ follows by replacing $f$ by $\bar f$)
$$
\int_{0}^{\infty} f(\om )\e^{\i t\om }\d\om =\frac{ t^{-\be}}{\e^{-\i}-1} \int_{0}^{\infty} \frac{f(\om +1/t)-f(\om )}{t^{-\be} }\e^{\i t\om }d\om + \frac{ \e^{-\i}}{\e^{-\i}-1}  \int_{0}^{1/t} f(\om )\e^{\i t\om } \d \om 
$$
the claim follows.    To verify that  $|f|_\be < \infty$ for any $\be<\ga$, we estimate
$$
\epsilon^{-\be}|f(\om+\epsilon)-f(\om)| \leq   \begin{cases}   \om^{-\be}\left( \str  f(\om)  \str+\str  f(\om+\ep)  \str\right)  &  \om \leq \ep \\[2mm]
 \om^{1-\be} \sup_{\om' \in [\om, \om+\ep]} \str \partial f(\om')\str   &  \om > \ep
 \end{cases}
$$
and then use \eqref{eq: ga decay} to bound this by $C \om^{-\be+\ga-1}$, which is integrable for any $\be<\ga$.  
\end{proof}
Using the above Lemma, we can state a sufficient condition on the form factor $\phi$ for Assumption \ref{ass: decay correlation functions} to hold. 
\begin{lemma} 
Fix $\ga>0$ (not necessarily $\ga <1$) and let $\lfloor \ga \rfloor$ be the highest integer not larger than $\ga$. 
Let $f_{\hat q}(\om) :=   \om^{d-1} \str\phi(\om\hat q )\str^2  $, with $\hat q \in \bbS^{d-1}$ and assume that 
\beq \label{eq:  stat phase bound on high momenta}
 \int_1^{\infty} \str \partial^n_{\om} f_{\hat q}(\om) \str   \leq C, \qquad  \textrm{and} \qquad   \forall \om \in (0,2]: \,\, \str \partial^n_{\om} f_{\hat q}(\om) \str   \leq C  \om^{\ga-n},  
\eeq
 for any  $n=0,1, \ldots, \lfloor \ga \rfloor+2$ and uniformly in $\hat q$.  Then, Assumption \ref{ass: decay correlation functions} holds with $\al <\ga$. 
\end{lemma}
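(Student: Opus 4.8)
The plan is to reduce the claimed bound $\int_0^\infty \d t\,(1+t)^\alpha |h(t)| < \infty$ (Assumption \ref{ass: decay correlation functions}) for $\alpha < \gamma$ to a pointwise decay estimate $|h(t)| \leq C_\beta (1+t)^{-\beta}$ for every $\beta < 1+\gamma$, which is more than enough since $\int_0^\infty \d t\,(1+t)^{\alpha-\beta} < \infty$ whenever $\alpha - \beta < -1$, i.e.\ whenever $\beta > 1+\alpha$; as $\alpha < \gamma$ we may pick $\beta \in (1+\alpha, 1+\gamma)$. So the real content is the decay rate $|h(t)| = O(t^{-(1+\gamma)+\varepsilon})$, obtained by repeated integration by parts (stationary-phase / non-stationary-phase in one dimension). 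First I would rewrite $h(t) = \int_{\bbR^d}\d q\,\e^{-\i t|q|}|\phi(q)|^2$ in polar coordinates $q = \omega\hat q$, $\omega = |q| \geq 0$, $\hat q \in \bbS^{d-1}$, obtaining
\beq
h(t) = \int_{\bbS^{d-1}} \d\hat q \int_0^\infty \d\omega\, \e^{-\i t\omega} \omega^{d-1}|\phi(\omega\hat q)|^2 = \int_{\bbS^{d-1}} \d\hat q\, \int_0^\infty \d\omega\, \e^{-\i t\omega} f_{\hat q}(\omega).
\eeq
Since the bound \eqref{eq: stat phase bound on high momenta} is uniform in $\hat q$ and $\bbS^{d-1}$ has finite measure, it suffices to estimate the inner one-dimensional oscillatory integral $F_{\hat q}(t) := \int_0^\infty \e^{-\i t\omega} f_{\hat q}(\omega)\,\d\omega$ uniformly in $\hat q$, and then integrate the resulting bound over $\hat q$.

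The key step is iterated integration by parts. Write $\e^{-\i t\omega} = \frac{1}{-\i t}\partial_\omega \e^{-\i t\omega}$ and integrate by parts $\lfloor\gamma\rfloor+1$ times. Each integration by parts moves one derivative onto $f_{\hat q}$ and produces a factor $1/t$, together with boundary terms. The boundary terms at $\omega = \infty$ vanish because $f_{\hat q}$ has (effectively) controlled decay — more precisely we should localize: split $f_{\hat q} = f_{\hat q}\chi_{[0,2]} + f_{\hat q}\chi_{[1,\infty)}$ with a smooth partition of unity subordinate to $[0,2]$ and $[1,\infty)$. On the high-frequency piece, $\int_1^\infty |\partial^n f_{\hat q}| \leq C$ for $n$ up to $\lfloor\gamma\rfloor+2$, so integrating by parts $\lfloor\gamma\rfloor+1$ times gives a bound $C|t|^{-(\lfloor\gamma\rfloor+1)}$, which beats $|t|^{-(1+\gamma)}$ (note $\lfloor\gamma\rfloor+1 \geq 1+\gamma$ is false in general — careful: $\lfloor\gamma\rfloor+1 > \gamma$ always, and in fact $\lfloor\gamma\rfloor + 2 > 1+\gamma$, so one more integration by parts on the high piece gives $|t|^{-(\lfloor\gamma\rfloor+2)}$ which does beat $|t|^{-(1+\gamma)}$). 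The low-frequency piece $g_{\hat q} := f_{\hat q}\chi_{[0,2]}$ satisfies $|\partial^n g_{\hat q}(\omega)| \leq C\omega^{\gamma-n}$ for $n \leq \lfloor\gamma\rfloor+2$; integrating by parts $\lfloor\gamma\rfloor$ times (the boundary terms at $\omega=0$ vanish since $\gamma - n + 1 > 0$ for $n \leq \lfloor\gamma\rfloor$, wait — need $\gamma-(n-1) > 0$, i.e.\ $n < \gamma+1$, true for $n \leq \lfloor\gamma\rfloor+1$; the boundary term at $\omega=2$ cancels against the high piece) reduces matters to estimating $|t|^{-\lfloor\gamma\rfloor}\int_0^2 \e^{-\i t\omega}\partial^{\lfloor\gamma\rfloor}g_{\hat q}(\omega)\,\d\omega$, where $\partial^{\lfloor\gamma\rfloor}g_{\hat q}$ is a function obeying exactly the hypothesis \eqref{eq: ga decay} of Lemma \ref{lem: decay} with exponent $\gamma - \lfloor\gamma\rfloor =: \gamma' \in [0,1)$ in place of $\gamma$ there (and the derivative bound with $n=1$ coming from the $n = \lfloor\gamma\rfloor+1$ hypothesis). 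Hence Lemma \ref{lem: decay} gives $|\int_0^2 \e^{-\i t\omega}\partial^{\lfloor\gamma\rfloor}g_{\hat q}(\omega)\,\d\omega| \leq C(\beta')|t|^{-\beta'}$ for any $\beta' < \gamma'$, uniformly in $\hat q$, so the low piece is bounded by $C(\beta')|t|^{-(\lfloor\gamma\rfloor+\beta')}$, and $\lfloor\gamma\rfloor + \beta'$ can be taken arbitrarily close to (but below) $\gamma$ — which is worse than I want. I should instead do one fewer integration by parts to land on $\partial^{\lfloor\gamma\rfloor}$ but then note that we actually have two derivative bounds available ($n=\lfloor\gamma\rfloor$ and $n=\lfloor\gamma\rfloor+1$), so Lemma \ref{lem: decay} applies with $\gamma$-exponent $\gamma-\lfloor\gamma\rfloor+1$ giving decay $|t|^{-\beta'}$ for $\beta' < \gamma - \lfloor\gamma\rfloor + 1$, hence total decay $|t|^{-(\lfloor\gamma\rfloor+\beta')}$ with $\lfloor\gamma\rfloor+\beta'$ arbitrarily close to $1+\gamma$ from below. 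Combining low and high pieces: $|h(t)| \leq C(\beta)(1+|t|)^{-\beta}$ for every $\beta < 1+\gamma$.

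Finally, to conclude: fix $\alpha < \gamma$, choose $\beta$ with $1+\alpha < \beta < 1+\gamma$, so that $\int_0^\infty (1+t)^\alpha |h(t)|\,\d t \leq C(\beta)\int_0^\infty (1+t)^{\alpha-\beta}\,\d t < \infty$ since $\alpha - \beta < -1$. This verifies Assumption \ref{ass: decay correlation functions} with parameter $\alpha < \gamma$.

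\textbf{Main obstacle.} The delicate point is bookkeeping the boundary terms in the iterated integration by parts so that they either vanish (at $\omega = 0$, using $\gamma - n + 1 > 0$; at $\omega = \infty$, using the $L^1$-integrability of high derivatives) or cancel between the low- and high-frequency localizations, and making sure the number of available derivatives ($\lfloor\gamma\rfloor+2$) is exactly enough to push the decay arbitrarily close to $1+\gamma$ on the low piece via Lemma \ref{lem: decay} while still beating $|t|^{-(1+\gamma)}$ on the high piece. Everything is elementary but the exponent counting must be done carefully; the smooth cutoff $\chi$ is harmless because the paper only needs a qualitative finiteness statement, not sharp constants.
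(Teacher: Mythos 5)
Your overall strategy---polar coordinates, a smooth low/high split of $f_{\hat q}$, iterated integration by parts, and Lemma \ref{lem: decay} to extract the fractional-order decay on the low piece---is exactly the paper's, and your treatment of the high piece (via $\lfloor\gamma\rfloor+2$ integrations by parts) is correct. The problem is in the exponent count on the low piece, and it is a genuine gap, not merely loose bookkeeping.

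After integrating by parts $\lfloor\gamma\rfloor$ times you are left with $|t|^{-\lfloor\gamma\rfloor}\int_0^2 \e^{-\i t\omega}\,\partial^{\lfloor\gamma\rfloor}g_{\hat q}(\omega)\,\d\omega$, and the function $F=\partial^{\lfloor\gamma\rfloor}g_{\hat q}$ satisfies $|\partial^n F(\omega)|\leq C\omega^{\gamma-\lfloor\gamma\rfloor-n}$ for $n=0,1$. Matching this to the hypothesis $|\partial^n F|\leq C_0\omega^{\gamma'-1-n}$ of Lemma \ref{lem: decay} forces $\gamma'=\gamma-\lfloor\gamma\rfloor+1$, as you yourself recognize in your self-correction (your first guess $\gamma'=\gamma-\lfloor\gamma\rfloor$ was off by one). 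But Lemma \ref{lem: decay} is stated, and its proof via the single-difference H\"older norm $|f|_\beta$ works, only for $0<\gamma'<1$; a single finite difference can never yield decay faster than $|t|^{-1}$, so the lemma genuinely cannot be invoked with $\gamma'\in[1,2)$. The conclusion that the low piece decays like $|t|^{-(\lfloor\gamma\rfloor+\beta')}$ for $\beta'$ arbitrarily close to $\gamma-\lfloor\gamma\rfloor+1$ therefore does not follow as you have written it.

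The repair is to integrate by parts once more on the low piece, i.e., $\lfloor\gamma\rfloor+1$ times rather than $\lfloor\gamma\rfloor$. This is admissible: the boundary terms at $\omega=0$ involve only $\partial^j g_{\hat q}(0^+)$ for $j\leq\lfloor\gamma\rfloor$, which vanish because $|\partial^j g_{\hat q}(\omega)|\leq C\omega^{\gamma-j}$ with $\gamma-j\geq\gamma-\lfloor\gamma\rfloor>0$ when $\gamma\notin\bbZ$ (the integer case is absorbed by the claim being only $\alpha<\gamma$). You then land on $\partial^{\lfloor\gamma\rfloor+1}g_{\hat q}$, whose Lemma-\ref{lem: decay} exponent is $\gamma'=\gamma-\lfloor\gamma\rfloor\in(0,1)$, now in range, giving the extra $|t|^{-\beta}$ for every $\beta<\gamma-\lfloor\gamma\rfloor$ and a total low-piece decay $|t|^{-(\lfloor\gamma\rfloor+1+\beta)}$, with exponent arbitrarily close to (but below) $1+\gamma$. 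This is the count the paper performs, and with this change the rest of your argument goes through.
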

\begin{proof}
Let $\theta: \bbR^+ \to [0,1]$ be a $C^{\infty}$ function such that 
\beq
\theta(\om)= \begin{cases} 1  &     \om\leq 1 \\  0  &     \om \geq  2
\end{cases}
\eeq
and write $f= f^{(1)}+f^{(2)}$ with $ f^{(1)}=  \theta f$ and $ f^{(2)}=  (1-\theta) f$ where we omit the parameter $\hat q$ (all bounds are uniform in $\hat q$).
Then  $\int \d \om f^{(2)}(\om)  \e^{-\i t \om}$ decays as $t^{-\lfloor \ga \rfloor+2}$ by the first bound in \eqref{eq:  stat phase bound on high momenta} and partial integration.  To estimate $\int \d \om f^{(1)}(\om)  \e^{-\i t \om}$, we use the second bound in \eqref{eq:  stat phase bound on high momenta} and partial integration $\lfloor \ga \rfloor+1$ times  to extract a factor  $t^{-(\lfloor \ga \rfloor+1)}$. Then we use  Lemma \ref{lem: decay} but with $\ga-{\lfloor \ga \rfloor} \leq 1$ playing the role of $\ga$ in Lemma \ref{lem: decay}, to extract the additional decay  $t^{-\be}$ with $\be<\ga-{\lfloor \ga \rfloor}  $. 
\end{proof}
%
Likewise,  one can derive from Lemma \ref{lem: decay} sufficient conditions on $\phi, \psi_\realinitial, \psi_\realfinal$ to satisfy Assumption \ref{ass: initial states and observables}.

\bibliographystyle{abbrv}
\bibliography{mylibrary11}

\end{document}